\documentclass[journal]{IEEEtran}

\usepackage{algorithm}
\usepackage{algpseudocode}
\usepackage{graphicx}
\usepackage{epstopdf}
\usepackage{amsmath}
\usepackage{amssymb,latexsym}
\usepackage{accents}
\usepackage{caption}
\usepackage{subcaption}
\usepackage[framemethod=TikZ]{mdframed}
\usepackage{lipsum}
\usepackage[section]{placeins}
\usepackage[noadjust]{cite}
\usepackage{microtype}
\usepackage{lipsum}

\mdfdefinestyle{MyFrame}{%
    linecolor=black,
    outerlinewidth=0.5pt,
    %roundcorner=2pt, \baselineskip
    innertopmargin=2pt,
    innerbottommargin=2pt,
    innerrightmargin=4pt,
    innerleftmargin=4pt,
    backgroundcolor=white}

\newtheorem{thm}{Theorem}
\newtheorem{theorem}{Theorem}[section]

\newtheorem{proposition}[theorem]{Proposition}
\newtheorem{corollary}[theorem]{Corollary}

\newcommand*\rc{\color[rgb]{0, 0, 0}}
\newcommand*\rca{\color[rgb]{0, 0, 0}}

\usepackage{etoolbox}% http://ctan.org/pkg/etoolbox
\AtBeginEnvironment{align}{\setcounter{subeqn}{0}}% Reset subequation number at start of align
\newcounter{subeqn} \renewcommand{\thesubeqn}{\theequation\alph{subeqn}}%
\newcommand{\subeqn}{%
  \refstepcounter{subeqn}% Step subequation number
  \tag{\thesubeqn}% Label equation
}

\IEEEoverridecommandlockouts
%%%%%%%%%%%%%%%%%%%%%%%%%%%%%%%%%%%%%%%%%%% BEGIN %%%%%%%%%%%%%%%%%%%%%%%%%%%%%%%%%%%%%%%%%%%%%%%
\begin{document}  
%%%%%%%%%%%%%%%%%%%%%%%%%%%%%%%%%%%%%%%%%%% TITLE %%%%%%%%%%%%%%%%%%%%%%%%%%%%%%%%%%%%%%%%%%%%%%%
{\title{Joint Power Control in \\ Wiretap Interference Channels}}
%%%%%%%%%%%%%%%%%%%%%%%%%%%%%%%%%%%%%%%%%%% AUTHER %%%%%%%%%%%%%%%%%%%%%%%%%%%%%%%%%%%%%%%%%%%%%%
\author{
    {\IEEEauthorblockN{Ashkan Kalantari~\IEEEmembership{ Student Member,~IEEE}, Sina Maleki~\IEEEmembership{Member,~IEEE}, 
		\\
		Gan Zheng~\IEEEmembership{Senior Member,~IEEE}, Symeon Chatzinotas~\IEEEmembership{Senior Member,~IEEE}, 
		\\
		and Bj\"{o}rn Ottersten},~\IEEEmembership{Fellow,~IEEE}
    \\
    \thanks{{This work is supported by national Luxembourg projects AFR reference 5798109 and SeMIGod.} {Ashkan Kalantari, Sina Maleki, Symeon Chatzinotas and Bj\"{o}rn Ottersten are with SnT, The University of Luxembourg, Luxembourg (E-mail: {ashkan.kalantari, sina.maleki, symeon.chatzinotas, and bjorn.ottersten}@uni.lu).} %4 rue Alphonse Weicker, L-2721 Luxembourg-Kirchberg, 
     {Gan Zheng is with University of Essex, UK, (E-mail: ganzheng@essex.ac.uk). He is also affiliated with SnT.}}
}}
{\maketitle}

%%%%%%%%%%%%%%%%%%%%%%%%%%%%%%%%%%%%%%%%%%%%% ABSTRACT %%%%%%%%%%%%%%%%%%%%%%%%%%%%%%%%%%%%%%%%%%
\vspace{-1.8cm}
\begin{abstract}
Interference in wireless networks degrades the signal quality at the terminals. However, it can potentially enhance the secrecy rate. This paper investigates the secrecy rate in a two-user interference network where {\rc one of the users, namely user~$1$, requires to establish a confidential connection. User~$1$ wants to prevent an unintended user of the network to decode its transmission}. User~$1$ has to transmit such that its secrecy rate is maximized while the quality of service at the destination of the other user, user~$2$, is satisfied, and both user's power limits are taken into account. We consider two scenarios: 1) user~$2$ changes its power in favor of user~$1$, an altruistic scenario, 2) user~$2$ is selfish and only aims to maintain the minimum quality of service at its destination, an egoistic scenario. It is shown that there is a threshold for user~$2$'s transmission power that only below or above which, depending on the channel qualities, user~$1$ can achieve a positive secrecy rate. Closed-form solutions are obtained in order to perform joint optimal power control. Further, a new metric called secrecy {\rc energy} efficiency is introduced. We show that in general, the secrecy {\rc energy} efficiency of user~$1$ in an interference channel scenario is higher than that of an interference-free channel.
\end{abstract}
%%%%%%%%%%%%%%%%%%%%%%%%%%%%%%%%%%%%%%%%%%%%% ABSTRACT %%%%%%%%%%%%%%%%%%%%%%%%%%%%%%%%%%%%%%%%%%
\begin{IEEEkeywords}
 Physical-layer security, interference channel, power control, secrecy rate, secrecy {\rc energy} efficiency.
\end{IEEEkeywords}
%%%%%%%%%%%%%%%%%%%%%%%%%%%%%%%%%%%%%%%%% Intoducti %%%%%%%%%%%%%%%%%%%%%%%%%%%%%%%%%%%%%%%%%%%%%%
\section{Introduction} \label{Sec:Introduction}
%%%%%%%%%%%%%%%%%%%%%%%%%%%%%%%%%%%%%%%%%%%%%%%%%%%%%%%%%%%%%%%%%%%%%%%%%%%%%%%%%%%%%%%%%%%%%%%%%%
Broadcasting information over the same frequency band in wireless networks leads to interference among users. Even in the systems where the spatial dimension is used to concentrate the signal towards the intended destination, the destination may receive interfering signals from other transmitters operating in the same frequency band. Also, due to the expansion and deployment of wireless services, the spectrum is getting scarce~\cite{Olson:2008}. As one possible solution, devices can share the same spectrum which results in interference and degradation of the signal quality. For instance, IEEE standards such as WiFi, Zigbee and Bluetooth share the same frequency band named the industrial, scientific and medical (ISM) band and they may interference with each other~\cite{Chiasserini:2003}. Furthermore, the wireless medium leaves the information vulnerable to unintended users who can potentially decode the message which was meant for other users. Throughout this paper, the words ``wiretapper'', or ``eavesdropper'' refer to the unintended users. While there are higher layer cryptography techniques to secure the data, it is yet possible that a malicious agent breaks into the encryption and gets access to the data~\cite{sklavos:Cryptography:2007}. By intelligently tuning the system parameters using physical layer security techniques, we can prevent the wiretappers from getting access to the information and this way, and further improve the system security along other techniques. Consequently, a specific rate
can be perfectly secured for the users to transmit their data, so that the wiretapper is not able to decode the message. There are efficient coding schemes which can achieve this rate. However, this area is still in its infancy, and the research effort at the moment is inclined in implementing practical codes~\cite{FP7}.

Potentially, the interference can improve the secrecy rate by introducing extra interference at the eavesdropper. The possibility of secure transmission in a multi-user interference channel using interference alignment and secrecy pre-coding is investigated in~\cite{Koyluoglu:2008}. The authors of~\cite{Agrawal:2009} investigate the secrecy rate in a two-user interference channel with an external eavesdropper. They show structured transmission results in a better secrecy rate compared to randomly generated Gaussian codebooks. The authors of~\cite{Koyluoglu:2011} study the secrecy capacity region
for a two-user interference channel in the presence of an external eavesdropper. The users jointly design randomized codebooks and inject noise along with data transmission to improve the secrecy rate. The authors of~\cite{Xiaojun:2011} consider a user who gets helping interference in order to increase its confidentiality against an eavesdropper. The achievable secrecy rate for both discrete memoryless and Gaussian channels is derived. {\rc A two-user interference network with an unintended user is considered in~\cite{Kalantari:2014}. Depending on the channel conditions, bounds on the transmission power of the interfering user is derived such that a positive secrecy rate is sustained for the other user.}

As an example of the interference channel, the effect of interference on the secrecy rate is also investigated
in cognitive radio systems. In cognitive radios, secondary user transmits in the
primary user's operating frequency band when it is not in use. Stochastic geometry is used
in~\cite{Zhihui:2011} to analyze physical layer secrecy in a multiple node cognitive radio network
where an eavesdropper is present. The secrecy outage probability and the secrecy rate of the
primary user is derived while secondary user produces interference. The authors of~\cite{Yiyang:2010}
maximize the secrecy rate for a multiple-antenna secondary user in the presence of an external
eavesdropper while considering the quality of service (QoS) at the primary receiver. Similar
problems to maximize the secrecy rate through beamforming design in cognitive radio are studied
in~\cite{Seongah:2012,Keonkook:2013,Taesoo:2012}.
%%%%%%%%%%%%%%%%%%%%%%%%%%%%%%%%%%%%%%%%%%%%%%%%%%%%%%%%%%%%%%%%%%%%%%%%%%%%%%%%%%%%%%%%%%%%%%%%%%
\subsection{Contributions and main results}
%%%%%%%%%%%%%%%%%%%%%%%%%%%%%%%%%%%%%%%%%%%%%%%%%%%%%%%%%%%%%%%%%%%%%%%%%%%%%%%%%%%%%%%%%%%%%%%%%%
In this work, we investigate the secrecy rate in a {\rc two-user wireless interference network}. Apart from the two users, one of
the idle users {\rc (unintended user)} in this network is a potential eavesdropper. Both nodes transmit in a way so that the secrecy rate is maximized for the first user (user $1$), and the second user (user~$2$) maintains the QoS at its intended destination. {\rc Only user $1$ needs to establish a secure connection and to keep its data secure.} 
For example, in a network with ISM band users, user $1$ and user $2$ can be WiFi and ZigBee transmitters. The ZigBee can be used to send measurement data, which is one of its applications, so its data may not be necessarily important to the {\rc potential} eavesdropper who is interested in WiFi messages.

We study the effect of interference from user $2$ on the secrecy rate of user $1$ in two scenarios, namely altruistic and egoistic scenarios. In the altruistic scenario, we jointly optimize the transmission powers of both users in order to maximize the secrecy rate of user $1$, while maintaining the QoS at user $2$'s destination equal or above a specific threshold. The incentives for {\rc user $2$} to cooperate are twofold: 1) when positive secrecy rate cannot be granted for {\rc user $1$}, it can enjoy an interference-free transmission,
2) {\rc user $1$} adjusts its transmission power to maintain the QoS of {\rc user $2$}'s destination equal or above the threshold. In the egoistic
scenario, the users' powers are still jointly optimized. However, user~$2$ is selfish and only tries to maintain the minimum QoS at the corresponding destination. The
contributions of our work are as follows. {\rca It is shown that by appropriate control of user~$1$'s power, we can make sure that the eavesdropper cannot decode the signal of user~$2$, and thus cannot employ successive interference cancellation (SIC)}. Also, it is shown that the transmitted power from {\rc user~$2$} has a crucial role in achieving a positive secrecy rate for user~$1$. According to the channel conditions, we define the proper power transmission for {\rc user~$2$} to maintain a positive secrecy rate for user~$1$. We develop closed-form expressions to implement joint optimal power control for both users in both altruistic and egoistic scenarios. Finally, a new metric called ``secrecy {\rc energy} efficiency'' is defined, which is the secrecy rate over the consumed power ratio. Using the new metric, it is shown that the interference channel can outperform the single-user channel for specific values of QoS requirements.
%%%%%%%%%%%%%%%%%%%%%%%%%%%%%%%%%%%%%%%%%%%%%%%%%%%%%%%%%%%%%%%%%%%%%%%%%%%%%%%%%%%%%%%%%%%%%%%%%%
\subsection{Related Work}
%%%%%%%%%%%%%%%%%%%%%%%%%%%%%%%%%%%%%%%%%%%%%%%%%%%%%%%%%%%%%%%%%%%%%%%%%%%%%%%%%%%%%%%%%%%%%%%%%%
Inner and outer bounds for the secrecy capacity regions in a two-user interference channel with destinations as
eavesdroppers are investigated in~\cite{Ruoheng:2008}. They showed that the secrecy capacity can be enhanced when one user transmits signal with artificial noise. {\rc Later,~\cite{Ruoheng:2008} was extended to the case when both users transmit
artificial noise along with data in~\cite{Jingge:2010}. As a result, they achieve a larger secrecy rate region when one or both destinations are considered as eavesdropper}. In~\cite{Zang:2008}, an outer bound for secrecy capacity region
is calculated for a two-user one-sided interference channel. {\rc Outer bounds on sum rate of a two-user Gaussian interference channel are studied in~\cite{Xiang:2009} where message confidentiality is important for users. Secrecy capacity region for a two-user MIMO Gaussian interference channel is investigated in~\cite{Fakoorian:2011} where each receiver is a
potential eavesdropper.} A two-user symmetric linear deterministic interference
channel is investigated in~\cite{Mohapatra:2013}. The achievable secrecy rate is investigated when interference cancellation, cooperation, time sharing, and transmission of random bits are used.~{\rc It is shown that sharing random bits achieves a better secrecy rate compared to sharing data bits.} {\rc A two-user MISO interference channel is considered in~\cite{Ni:2014} where beamforming is performed to maintain fair secrecy rate.} {\rc The work in~\cite{Bross:2013} analyzes a two-user interference channel with one-sided noisy feedback. Rate-equivocation region is derived when the second user's message needs to be kept secret.} {\rc The secrecy rate constrained to secrecy rate outage probability and power is maximized by designing robust beamformer in~\cite{Shuai:2014} where a transceiver pair and multiple eavesdroppers constitute a network.}

{\rc A multiple-user interference channel where only one user as a potential eavesdropper receives interference is considered in~\cite{Xiang:2011}. The sum secrecy rate is derived using nested lattice codes.}
{\rc The authors in~\cite{Rabbachin:2012} consider a wireless network comprised of users, eavesdroppers and interfering nodes. It is shown that interference can improve secrecy rate.}
{\rc A communication network comprised of multiple-antenna base stations and single-antenna users is considered in~\cite{Zesong:2014}. The total transmit power is minimized while the signal-to-interference plus noise ratio and equivocation rate for each user is satisfied.}

{\rc In~\cite{Jianwei:2011}, a two-user network with one-sided-interference where each destination is a potential eavesdropper for the other one is studied. Using game
theory, it is concluded that depending on the objective of each pair, the equilibrium can include or exclude the self-jamming strategy.}
{\rc The authors of~\cite{Fakoorian:2013} analyze a two-user MISO Gaussian interference channel where each destination is a potential eavesdropper. Game theory is used to tackle the scenario where each user tries to maximize the difference between its secrecy rate and the secrecy rate of the other user. Beamformers under full and limited channel information are designed at each transmitter to achieve this goal.}

{\rc A transceiver pair is studied in~\cite{Peng:2014} where they try to increase the secrecy rate using an external interferer when a passive eavesdropper is present.} {\rc The authors of~\cite{El-Halabi:2012}
consider a user and an eavesdropper where known interference which only degrades the decoding ability at the eavesdropper is used to enhance the secrecy capacity.} The secrecy capacity and secrecy outage capacity when closest interfering node and multiple interfering {\rc nodes} are separately employed to prevent eavesdropping is studied in~\cite{Rabbachin:2011}. {\rc It is demonstrated that multiple interferes method is superior to the closet interfering method.}
{\rc The exact secure degrees of freedom for different types of Gaussian wiretap channels are discussed in~\cite{Xie:2014} where cooperative jamming from helpers is used.}

{\rc  The equivocation-rate for a cognitive interference network is considered in~\cite{Yingbin:2009} where the primary receiver is a potential eavesdropper and should not decode the secondary message.}
{\rc A MISO transceiver along with multiple single-antenna eavesdroppers are considered in~\cite{LanZhang:2010}. The relationship of the mentioned network with interference cognitive radio network is used to design the transmit covariance matrix.}
{\rc In~\cite{Yongle:2011}, the secondary user causes interferes to both primary destination and eavesdropper. Primary user tries to maintain its secrecy rate while the secondary aims to increase its rate. The achievable pair rate for both users is derived.}
%{\rc In~\cite{Yongle:2011}, a cognitive radio network with single-antenna nodes is considered. The secondary user causes interference to both primary destination and eavesdropper. The primary user is interested in maintaining secrecy rate while the secondary is aiming to increase its transmission rate.  The achievable pair rate for both users is derived and then the interaction is modeled as a game.}

{\rc \subsection{Paper Organization} }
The remainder of the paper is organized as follows. In Section~\ref{sec:System model}, we introduce the
network topology as well as the signal model. The optimization problems for the altruistic and egoistic scenarios are defined and analyzed in
Section~\ref{sec:Problem Formulation cooperative} and Section~\ref{sec:Problem Formulation noncooperative}, respectively. In Section~\ref{sec:num results}, we evaluate the optimal achievable secrecy rate and compare the two-user wiretap interference channel scheme with the single-user wiretap channel as the benchmark. Finally, the conclusions are drawn in Section~\ref{sec:con}.

%\emph{Notation}
{\rc \subsection{Notation} }
$A \mathop {\gtrless}\limits_{(2)}^{(1)} 0$ means that $A>0$ when the conditions
of Case $1$ hold and $A<0$ when the conditions of Case $2$ hold. $|\cdot|$ represents the absolute value.
%%%%%%%%%%%%%%%%%%%%%%%%%%%%%%%%%%%%%%%%% Main Text %%%%%%%%%%%%%%%%%%%%%%%%%%%%%%%%%%%%%%%%%%%%%%
\section{System model} \label{sec:System model}
%%%%%%%%%%%%%%%%%%%%%%%%%%%%%%%%%%%%%%%%%%%%%%%%%%%%%%%%%%%%%%%%%%%%%%%%%%%%%%%%%%%%%%%%%%%%%%%%%%
\subsection{Signal Model}
We consider a wireless interference network consisting of two users denoted by $U_1$ and $U_2$, two destinations denoted by
$D_1$ and $D_2$, and one user as the eavesdropper denoted by $E$. $E$ is assumed to be passive during $U_1$ and $U_2$ transmission and active outside the mentioned period. All nodes employ one antenna for data communication. We denote
by $x_{1}$ and $x_{2}$, the messages which are sent over the same frequency band from $U_{1}$ and $U_{2}$ to $D_{1}$ and $D_{2}$,
respectively. Sharing the same frequency band by the users leads to cross-interference. While the users send data, their signals are wiretapped by the eavesdropper,~$E$. The network setup is depicted in
Fig.~\ref{Fig:System model}. Here, we consider a scenario where $E$ is only interested
in the data sent by one of the users, namely $U_1$. As a result, $x_2$ is considered as an interfering
signal at both $D_{1}$ and $E$.

 {\rc There are two ways in order to carry out the joint power allocation: 1) users send their channel information to a fusion center. At the fusion center, the optimal power values are calculated and sent back to the users separately, 2) one of the users sends its channel information to the other user who calculates the optimal power values and sends the optimal power value to the corresponding user. It can be seen that the first approach consumes more time and number of transmissions compared to the second one. Since $U_1$ is interested in sustaining a positive secrecy rate, it is fair if this user pays the computational cost. Hence, we assume that $U_2$ sends the channels data to $U_1$ and then $U_1$ calculates the optimal power values and sends back the related optimal power value to $U_2$.}
%In this work, we assume that $U_1$ is responsible for performing the joint resource allocation, and thus is aware of all the underlying CSI for each user and the eavesdropper.
To perform channel estimation in the network, one approach is that the destinations, including the unintended user, send pilots and the transmitters are then able to estimate the required CSIs. After estimating the channels, $U_2$ forwards
the required CSIs to $U_1$. $U_1$ is then responsible to perform the power control and inform $U_2$ of the optimal power that it can transmit. Note that in practice, it is often optimistic to have such a model, as the eavesdroppers are often totally passive. But here, we assume that the eavesdropper is momentarily active, and thus its channel can be estimated and remains unchanged for the optimal power control usage. One practical example of such a scenario is when the eavesdropper is a known user in a network such that $U_1$'s messages should be kept confidential from it.

The received signals at $D_1$ and $D_2$ are as follows
%%%%%%%%%%%%%%%%%%%%%%%%%%%%%%%%%%%%%%%%%%% EQUATION %%%%%%%%%%%%%%%%%%%%%%%%%%%%%%%%%%%%%%%%%%%%%
\begin{align}
{y_{{D_1}}} = \sqrt{P_1}{h_{{U_1},{D_1}}}{x_1} + \sqrt{P_2}{h_{{U_2},{D_1}}}{x_2} + {n_{{D_1}}},
\label{eqn:D1 received signal}
\\
{y_{{D_2}}} = \sqrt {{P_2}} {h_{{U_2},{D_2}}}{x_2} + \sqrt {{P_1}} {h_{{U_1},{D_2}}}{x_1} + {n_{{D_2}}},
\label{eqn:D2 received signal}
\end{align}
%%%%%%%%%%%%%%%%%%%%%%%%%%%%%%%%%%%%%%%%%%%%%%%%%%%%%%%%%%%%%%%%%%%%%%%%%%%%%%%%%%%%%%%%%%%%%%%%%%
where $P_1$ and $P_2$ are the power of the transmitted signals by $U_1$ and $U_2$, and ${h_{{U_i},{D_j}}}$ is
the channel gain from each user to the corresponding destination for $i=1, 2$ and $j=1, 2$. The transmission signal from the $i$-th user, and the additive white Gaussian noise at the $i$-th destination are shown by $\sqrt{P_i}x_{i}$ and $n_{D_{i}}$ for $i=1,2$, respectively. The random variables $x_i$ and $n_{D_i}$ are independent and identically distributed (i.i.d.) with~$x_i\sim\mathcal{CN}(0,1)$ and
$n_{D_i}\sim\mathcal{CN}(0,\sigma _n^2)$, respectively, where $\mathcal{CN}$ denotes the complex
normal random variable. {\rc In practice, some signals follow Gaussian distribution such as the amplitude of sample distributions of OFDM signal~\cite{chiueh2008ofdm}. Using a Gaussian distributed signal may not always be optimal, however, our focus is on maximizing the secrecy rate by designing joint optimal power allocation in a specific system model.} The wiretapped signal at $E$ is
%%%%%%%%%%%%%%%%%%%%%%%%%%%%%%%%%%%%%%%%%%% EQUATION %%%%%%%%%%%%%%%%%%%%%%%%%%%%%%%%%%%%%%%%%%%%%
\begin{align}
{y_E} = \sqrt{P_1}{h_{{U_1},E}}{x_1} + \sqrt{P_2}{h_{{U_2},E}}{x_2} + {n_E},
\label{eqn:E received signal}
\end{align}
%%%%%%%%%%%%%%%%%%%%%%%%%%%%%%%%%%%%%%%%%%%%%%%%%%%%%%%%%%%%%%%%%%%%%%%%%%%%%%%%%%%%%%%%%%%%%%%%%%
where ${h_{{U_i},{E}}}$ is the channel coefficient from the $i$-th user to the eavesdropper for $i=1, 2$,
and $n_E$ is the additive white Gaussian noise at the eavesdropper with the same distribution
as $n_{D_i}$. The additive white Gaussian noise at different receivers are assumed to be
mutually independent.
%%%%%%%%%%%%%%%%%%%%%%%%%%%%%%%%%%%%%%%%%%%%%% FIGURE 1 %%%%%%%%%%%%%%%%%%%%%%%%%%%%%%%%%%%%%%%%%%
\begin{figure}[t]
  \centering
  \includegraphics[width=8.5cm]{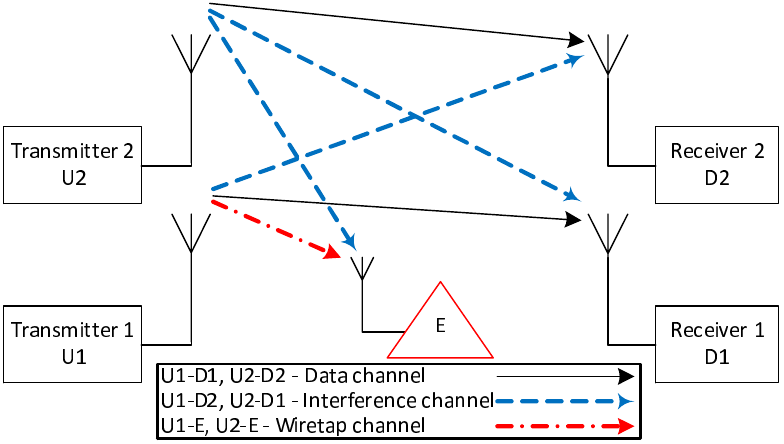}
  \caption{Two-user wireless interference network.}
  \label{Fig:System model}
\end{figure}
%\vspace{-0.3cm}
%%%%%%%%%%%%%%%%%%%%%%%%%%%%%%%%%%%%%%%%%%%%%%%%%%%%%%%%%%%%%%%%%%%%%%%%%%%%%%%%%%%%%%%%%%%%%%%%%%
%\subsection{{\rc Users' Rates}}
{\rca \subsection{Secrecy rate of $U_1$}}
%%%%%%%%%%%%%%%%%%%%%%%%%%%%%%%%%%%%%%%%%%%%%%%%%%%%%%%%%%%%%%%%%%%%%%%%%%%%%%%%%%%%%%%%%%%%%%%%%%
In order to calculate the secrecy rate of $U_1$, we need to first find the rate of $U_1$ without considering the secrecy, and then the rate in which the eavesdropper wiretaps $U_1$. In this paper, we assume that $U_1$ and $U_2$ do not employ SIC. Therefore, using~\eqref{eqn:D1 received signal} and~\eqref{eqn:D2 received signal}, the {\rc rates} for each user
to the corresponding destination can be calculated as
%%%%%%%%%%%%%%%%%%%%%%%%%%%%%%%%%%%%%%%%%%% EQUATION %%%%%%%%%%%%%%%%%%%%%%%%%%%%%%%%%%%%%%%%%%%%%
\begin{align}
{I_{{U_1} - {D_1}}} = {\log _2}\left( {1 + \frac{{{P_1}{{\left| {{h_{{U_1},{D_1}}}} \right|}^2}}}{{{P_2}{{\left| {{h_{{U_2},{D_1}}}} \right|}^2} + \sigma _n^2}}} \right),
\label{eqn:S1-D1 capacity}
\\
{I_{{U_2} - {D_2}}} = {\log _2}\left( {1 + \frac{{{P_2}{{\left| {{h_{{U_2},{D_2}}}} \right|}^2}}}{{{P_1}{{\left| {{h_{{U_1},{D_2}}}} \right|}^2} + \sigma _n^2}}} \right).
\label{eqn:S2-D2 capacity}
\end{align}
%%%%%%%%%%%%%%%%%%%%%%%%%%%%%%%%%%%%%%%%%%%%%%%%%%%%%%%%%%%%%%%%%%%%%%%%%%%%%%%%%%%%%%%%%%%%%%%%%%
%%%%%%%%%%%%%%%%%%%%%%%%%%%%%%%%%%%%%%%%%%%% EQUATION %%%%%%%%%%%%%%%%%%%%%%%%%%%%%%%%%%%%%%%%%%%%%
%\begin{align}
%{I_{{U_1} - {D_1}}} = {\log _2}\left( {1 + \frac{{{P_1}{{\left| {{h_{{U_1},{D_1}}}} \right|}^2}}}{{{P_2}{{\left| {{h_{{U_2},{D_1}}}} \right|}^2} + \sigma _n^2}}} \right),
%\label{eqn:S1-D1 capacity}
%{I_{{U_2} - {D_2}}} = {\log _2}\left( {1 + \frac{{{P_2}{{\left| {{h_{{U_2},{D_2}}}} \right|}^2}}}{{{P_1}{{\left| {{h_{{U_1},{D_2}}}} \right|}^2} + \sigma _n^2}}} \right).
%\label{eqn:S2-D2 capacity}
%\end{align}
%%%%%%%%%%%%%%%%%%%%%%%%%%%%%%%%%%%%%%%%%%%%%%%%%%%%%%%%%%%%%%%%%%%%%%%%%%%%%%%%%%%%%%%%%%%%%%%%%%%
%%%%%%%%%%%%%%%%%%%%%%%%%%%%%%%%%%%%%%%%%%%%%%%%%%%%%%%%%%%%%%%%%%%%%%%%%%%%%%%%%%%%%%%%%%%%%%%%%%
%\subsection{Wiretap {\rc Rate}}
%%%%%%%%%%%%%%%%%%%%%%%%%%%%%%%%%%%%%%%%%%%%%%%%%%%%%%%%%%%%%%%%%%%%%%%%%%%%%%%%%%%%%%%%%%%%%%%%%%

{\rca The eavesdropper simultaneously receives signals from $U_1$ and $U_2$ which are transmitting in the same frequency band. Hence, the channel from users towards the eavesdropper can be modeled by a multiple-access channel. Assume that the transmission powers of $U_1$ and $U_2$ in a specific time slot are $P_1$ and $P_2$. Then, considering that users employ Gaussian codebooks and the eavesdropper tends to achieve the maximum wiretapping rate from $U_1$, the rate pairs achieved at the eavesdropper are shown in Fig.~\ref{fig:rate_region}~\cite{Tse:1998} which lie on the line from point ``$\mathcal{A}$'' to point ``$\mathcal{D}$''. To wiretap $U_1$ with the maximum achievable rate, the eavesdropper can employ the SIC method~\cite{Tse:WirelessComm}. Using SIC, the eavesdropper first decodes the signal from $U_2$ while considering $U_1$'s signal as interference. Then, considering the fact that the signal from $U_2$ is decoded and known, eavesdropper deducts $U_2$'s signal from the received signal and gets an interference-free signal from $U_1$. In this approach, the rate pairs on the line ``$\mathcal{C}\mathcal{D}$'' are achieved at the eavesdropper if the transmission rate of $U_2$, defined by $R_2$, is lower than the decode-able rate defined at point ``$\mathcal{G}$''. To prevent the eavesdropper from achieving the maximal wiretapping rate, $U_2$'s transmission rate needs to be higher than the decode-able rate at point ``$\mathcal{G}$''. Since users do not coordinate in order to implement time-sharing or rate-splitting, $U_1$'s signal cannot be decoded with the rates which are on the line ``$\mathcal{D}\mathcal{E}$'', and thus it needs to decode $U_1$ considering $U_2$ as the interference with a rate equal to the rate at point ``$\mathcal{E}$''. Therefore, to disable the eavesdropper from performing SIC (i.e., achieving rate at point ``$\mathcal{D}$''), the following condition needs to hold:    
%%%%%%%%%%%%%%%%%%%%%%%%%%%%%%%%%%%%%%%%%%% EQUATION %%%%%%%%%%%%%%%%%%%%%%%%%%%%%%%%%%%%%%%%%%%%%
\begin{align}
R_2 =&  {\log _2}\left( {1 + \frac{{{P_2}{{\left| {{h_{{U_2},{D_2}}}} \right|}^2}}}{{{P_1}
{{\left| {{h_{{U_1},{D_2}}}} \right|}^2} + \sigma _n^2}}} \right) 
\nonumber\\
&> {\log _2}\left( {1 + \frac{{{P_2}{{\left| {{h_{{U_2}{,_E}}}} \right|}^2}}}{{{P_1}
{{\left| {{h_{{U_1},E}}} \right|}^2} + \sigma _n^2}}} \right).
\label{eqn:anti-decoding con}
\end{align}  
%%%%%%%%%%%%%%%%%%%%%%%%%%%%%%%%%%%%%%%%%%%%%%%%%%%%%%%%%%%%%%%%%%%%%%%%%%%%%%%%%%%%%%%%%%%%%%%%%%
%%%%%%%%%%%%%%%%%%%%%%%%%%%%%%%%%%%%%%%%%%%%%% FIGURE 1 %%%%%%%%%%%%%%%%%%%%%%%%%%%%%%%%%%%%%%%%%%
\begin{figure}[t]
  \centering
  \includegraphics[width=8.5cm]{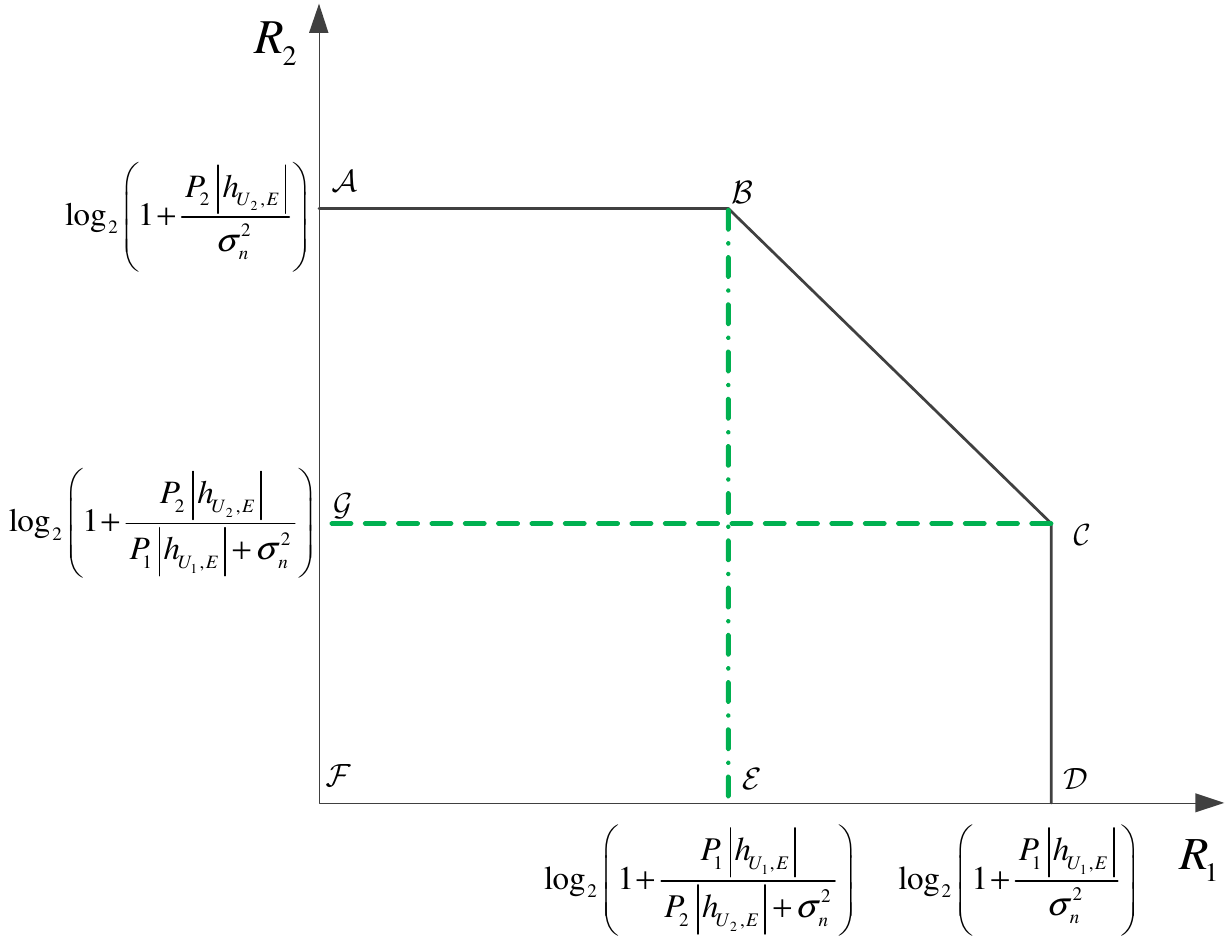}
  \caption{{\rca Maximum achievable rate pairs of a two-user multiple-access fading channel.}}
  \label{fig:rate_region}
\end{figure}
%%%%%%%%%%%%%%%%%%%%%%%%%%%%%%%%%%%%%%%%%%%%%%%%%%%%%%%%%%%%%%%%%%%%%%%%%%%%%%%%%%%%%%%%%%%%%%%%%%
In~\eqref{eqn:anti-decoding con}, the left-hand side is the actual transmission rate of $U_2$ which is equal 
to the decode-able rate at its destination, $D_2$ . If condition~\eqref{eqn:anti-decoding con} is satisfied, the eavesdropper has 
to decode $U_1$'s signal by considering $U_2$'s signal as interference. Interestingly, satisfying condition~\eqref{fig:rate_region} 
just needs $U_1$ to adjust its transmission power and is independent from $P_2$. The condition on $P_1$ to satisfy~\eqref{eqn:anti-decoding con} is derived as:
%%%%%%%%%%%%%%%%%%%%%%%%%%%%%%%%%%%%%%%%%%% Proposition %%%%%%%%%%%%%%%%%%%%%%%%%%%%%%%%%%%%%%%%%%
\begin{alignat}{2}
&{P_1} > \frac{A''}{B''} \qquad \qquad\qquad \,\,\,\, \text{if} \qquad               &&  A'' > 0, B'' > 0, \refstepcounter{equation} \subeqn \label{subeq:AD1}
\\
&{P_1} > 0              \qquad \qquad \,\,\,\, \qquad \,\,\,\,\,\, \text{if} \qquad  &&  A'' < 0, B'' > 0, \subeqn \label{subeq:AD2}
\\
&{P_1}< \frac{A''}{B''}  \qquad \qquad \qquad \,\,\,\,\, \text{if} \qquad               &&  A'' < 0, B'' < 0, \subeqn \label{subeq:AD3}
\\
&{P_1}<0 \,\, ( \text{not feasible})   \qquad \text{if} \qquad            &&  A'' > 0, B'' < 0, \subeqn \label{subeq:AD4}
\end{alignat}
%%%%%%%%%%%%%%%%%%%%%%%%%%%%%%%%%%%%%%%%%%%%%%%%%%%%%%%%%%%%%%%%%%%%%%%%%%%%%%%%%%%%%%%%%%%%%%%%%%
where $A''=\sigma _n^2\left( {{{\left| {{h_{{U_2},E}}} \right|}^2} - {{\left| {{h_{{U_2},{D_2}}}} \right|}^2}} \right)$ and
$B''={{{\left| {{h_{{U_2},{D_2}}}} \right|}^2}{{\left| {{h_{{U_1},E}}} \right|}^2} - {{\left| {{h_{{U_1},{D_2}}}} \right|}^2}
{{\left| {{h_{{U_2},E}}} \right|}^2}}$. As we can see, the channel conditions define weather $U_1$ can block the 
eavesdropper by adjusting its power. For the Cases~\ref{subeq:AD1},~\ref{subeq:AD2}, and~\ref{subeq:AD3}, 
the instantaneous wiretap rate from $U_1$ toward $E$ is obtained by ${I_{{U_1} - E}} = {\log _2}\left( {1 + \frac{{{P_1}{{\left| {{h_{{U_1},E}}} \right|}^2}}}
{{{P_2}{{\left| {{h_{{U_2},E}}} \right|}^2} + \sigma _n^2}}} \right)$, and thus the secrecy rate of $U_1$ in this case is as follows
%%%%%%%%%%%%%%%%%%%%%%%%%%%%%%%%%%%%%%%%%%% EQUATION %%%%%%%%%%%%%%%%%%%%%%%%%%%%%%%%%%%%%%%%%%%%%
\begin{align}
{C_{{S_{{U_1}}}}} =& {I_{{U_1} - {D_1}}} - {I_{{U_1} - E}} = {\log _2}\left( {1 + \frac{{{P_1}{{\left| {{h_{{U_1},{D_1}}}} \right|}^2}}}{{{P_2}{{\left| {{h_{{U_2},{D_1}}}} \right|}^2} + \sigma _n^2}}} \right) 
\nonumber\\
&- {\log _2}\left( {1 + \frac{{{P_1}{{\left| {{h_{{U_1},E}}} \right|}^2}}}{{{P_2}{{\left| {{h_{{U_2},E}}} \right|}^2} + \sigma _n^2}}} \right).
\label{eqn:Cs eve block}
\end{align}
%%%%%%%%%%%%%%%%%%%%%%%%%%%%%%%%%%%%%%%%%%%%%%%%%%%%%%%%%%%%%%%%%%%%%%%%%%%%%%%%%%%%%%%%%%%%%%%%%%
For Case~\ref{subeq:AD4}, no power from $U_1$ is capable of preventing the eavesdropper from applying the SIC 
technique and deriving an interference-free version of $U_1$'s signal and thus ${I_{{U_1} - E}} = {\log _2}\left( {1 + \frac{{{P_1}{{\left| {{h_{{U_1},E}}} \right|}^2}}}
{{\sigma _n^2}}} \right)$. This results in the following secrecy rate
%%%%%%%%%%%%%%%%%%%%%%%%%%%%%%%%%%%%%%%%%%% EQUATION %%%%%%%%%%%%%%%%%%%%%%%%%%%%%%%%%%%%%%%%%%%%%
%\begin{align}
%{I_{{U_1} - E}} = {\log _2}\left( {1 + \frac{{{P_1}{{\left| {{h_{{U_1},E}}} \right|}^2}}}{{\sigma _n^2}}} \right),
%\label{eqn:S1-E capacity-Inst decode}
%\end{align}
%%%%%%%%%%%%%%%%%%%%%%%%%%%%%%%%%%%%%%%%%%%%%%%%%%%%%%%%%%%%%%%%%%%%%%%%%%%%%%%%%%%%%%%%%%%%%%%%%%
%and thus the secrecy rate becomes
%%%%%%%%%%%%%%%%%%%%%%%%%%%%%%%%%%%%%%%%%%% EQUATION %%%%%%%%%%%%%%%%%%%%%%%%%%%%%%%%%%%%%%%%%%%%%
\begin{align}
{C_{{S_{{U_2}}}}} =& {I_{{U_1} - {D_1}}} - {I_{{U_1} - E}} = {\log _2}\left( {1 + \frac{{{P_1}{{\left| {{h_{{U_1},{D_1}}}} \right|}^2}}}
{{{P_2}{{\left| {{h_{{U_2},{D_1}}}} \right|}^2} + \sigma _n^2}}} \right) 
\nonumber\\
&- {\log _2}\left( {1 + \frac{{{P_1}{{\left| {{h_{{U_1},E}}} \right|}^2}}}{{\sigma _n^2}}} \right).
\label{eqn:Cs eve dec}
\end{align}
%%%%%%%%%%%%%%%%%%%%%%%%%%%%%%%%%%%%%%%%%%%%%%%%%%%%%%%%%%%%%%%%%%%%%%%%%%%%%%%%%%%%%%%%%%%%%%%%%%
In the next two sections, we formulate and solve the underlying problems so as to find the optimal $P_1$ and $P_2$.}
%%%%%%%%%%%%%%%%%%%%%%%%%%%%%%%%%%%%%%%%%%%%%%%%%%%%%%%%%%%%%%%%%%%%%%%%%%%%%%%%%%%%%%%%%%%%%%%%%%
\vspace{0.5cm}
\section{Problem Formulation: Altruistic Scenario} \label{sec:Problem Formulation cooperative}
%%%%%%%%%%%%%%%%%%%%%%%%%%%%%%%%%%%%%%%%%%%%%%%%%%%%%%%%%%%%%%%%%%%%%%%%%%%%%%%%%%%%%%%%%%%%%%%%%%
In this section, we maximize the secrecy rate of $U_1$ subject to the peak power limits of the users as well as the quality of service (QoS) at $D_2$. {\rca If one of the cases~\ref{subeq:AD1},~\ref{subeq:AD2}, or~\ref{subeq:AD3} holds, using~\eqref{eqn:Cs eve block}, the following secrecy rate optimization is solved:}
%%%%%%%%%%%%%%%%%%%%%%%%%%%%%%%%%%%%%%%%%%% EQUATION %%%%%%%%%%%%%%%%%%%%%%%%%%%%%%%%%%%%%%%%%%%%%
\begin{align}
& \mathop {\max }\limits_{{P_1},{P_2}} \,\,\,{C_{{S_{{U_1}}}}}
\nonumber\\
&\,\,  \text{s.\,t.}  \,\,\,\,\,\,\,  {P_1} \le {P_{\max_1 }},
\,{\rca {P_1} \mathop {\gtrless}\limits_{\eqref{subeq:AD3}}^{\eqref{subeq:AD1}}   \omega},
\,{P_2} \le {P_{\max_2 }},
\, {I_{{U_2} - {D_2}}} \geq \beta,
\label{eqn:SC Opt 1 Ins}
\end{align}
%%%%%%%%%%%%%%%%%%%%%%%%%%%%%%%%%%%%%%%%%%%%%%%%%%%%%%%%%%%%%%%%%%%%%%%%%%%%%%%%%%%%%%%%%%%%%%%%%%
where $\beta$ is the minimum {\rca required} data rate for $U_{2}$ {\rca and $\omega =\frac{A''}{B''}$}. {\rca In Case~\ref{subeq:AD2}, any $P_1$ ensures that the eavesdropper cannot employ SIC. Therefore, no additional constraint over $P_1$ is necessary.} {\rca For Case~\ref{subeq:AD4}, using~\eqref{eqn:Cs eve dec}, the following secrecy rate optimization problem should be solved
%%%%%%%%%%%%%%%%%%%%%%%%%%%%%%%%%%%%%%%%%%% EQUATION %%%%%%%%%%%%%%%%%%%%%%%%%%%%%%%%%%%%%%%%%%%%%
\begin{align}
& \mathop {\max }\limits_{{P_1},{P_2}} \,\,\,{C_{{S_{{U_2}}}}}
\nonumber\\
& \,\,  \text{s.\,t.}  \,\,\,\,\,\,\,  {P_1} \le {P_{\max_1 }},
\, {P_2} \le {P_{\max_2 }},
\, {I_{{U_2} - {D_2}}} \geq \beta.
\label{eqn:SC Opt 1 Ins eve dec}
\end{align}
%%%%%%%%%%%%%%%%%%%%%%%%%%%%%%%%%%%%%%%%%%%%%%%%%%%%%%%%%%%%%%%%%%%%%%%%%%%%%%%%%%%%%%%%%%%%%%%%%%
We first solve~\eqref{eqn:SC Opt 1 Ins} and then~\eqref{eqn:SC Opt 1 Ins eve dec}}. By inserting~\eqref{eqn:Cs eve block} 
into~\eqref{eqn:SC Opt 1 Ins}, we obtain
%%%%%%%%%%%%%%%%%%%%%%%%%%%%%%%%%%%%%%%%%%%% EQUATION %%%%%%%%%%%%%%%%%%%%%%%%%%%%%%%%%%%%%%%%%%%%%
%%%%%%%%%%%%%%%%%%%%%%%%%%%%%%%%%%%%%%%%%%% EQUATION %%%%%%%%%%%%%%%%%%%%%%%%%%%%%%%%%%%%%%%%%%%%%
\begin{align}
&\mathop {\max }\limits_{{P_1},{P_2}} \,\,\,{\log _2}\left( {\frac{{1 + \frac{{{P_1}{{\left| {{h_{{U_1},{D_1}}}} \right|}^2}}}
{{{P_2}{{\left| {{h_{{U_2},{D_1}}}} \right|}^2} + \sigma _n^2}}}}{{1 + \frac{{{P_1}{{\left| {{h_{{U_1},E}}} \right|}^2}}}
{{{P_2}{{\left| {{h_{{U_2},E}}} \right|}^2} + \sigma _n^2}}}}} \right)
\nonumber\\
& \,\,  \text{s.\,t.}  \,\,\,\,\,\,\,  {P_1} \le {P_{\max_1 }},
\,\,{\rca {P_1} \mathop {\gtrless}\limits_{\eqref{subeq:AD3}}^{\eqref{subeq:AD1}}   \omega },
\,\,{P_2} \le {P_{\max_2 }},
\nonumber\\
& \,\,\,\,\,\,\,\,\,\,\,\,\,\,\,\,\, \frac{{{P_2}{{\left| {{h_{{U_2},{D_2}}}} \right|}^2}}}
{{{P_1}{{\left| {{h_{{U_1},{D_2}}}} \right|}^2} + \sigma _n^2}}  \geq  \gamma,
\label{eqn:SC Opt 2 Ins}
\end{align}
%%%%%%%%%%%%%%%%%%%%%%%%%%%%%%%%%%%%%%%%%%%%%%%%%%%%%%%%%%%%%%%%%%%%%%%%%%%%%%%%%%%%%%%%%%%%%%%%%%
where $\gamma$ is ${2^{\beta}-1}$. Since $\log$ is a monotonic increasing function of its
argument, we can just maximize the argument and thus we rewrite~\eqref{eqn:SC Opt 2 Ins} as
%%%%%%%%%%%%%%%%%%%%%%%%%%%%%%%%%%%%%%%%%%% EQUATION %%%%%%%%%%%%%%%%%%%%%%%%%%%%%%%%%%%%%%%%%%%%%
\begin{align}
&\mathop {\max }\limits_{{P_1},{P_2}} \,\,\,\,{\frac{{1 + \frac{{{P_1}{{\left| {{h_{{U_1},{D_1}}}} \right|}^2}}}
{{{P_2}{{\left| {{h_{{U_2},{D_1}}}} \right|}^2} + \sigma _n^2}}}}{{1 + \frac{{{P_1}{{\left| {{h_{{U_1},E}}} \right|}^2}}}
{{{P_2}{{\left| {{h_{{U_2},E}}} \right|}^2} + \sigma _n^2}}}}}
\nonumber\\
& \,\,  \text{s.\,t.}\,\,\,\,\,\,\,{P_1} \le {P_{\max_1 }},
\,\, {\rca {P_1} \mathop {\gtrless}\limits_{\eqref{subeq:AD3}}^{\eqref{subeq:AD1}}   \omega   },
\,\, {P_2} \le {P_{\max_2 }},
\nonumber\\
& \,\,\,\,\,\,\,\,\,\,\,\,\,\,\,\,\,  \frac{{{P_2}{{\left| {{h_{{U_2},{D_2}}}} \right|}^2}}}
{{{P_1}{{\left| {{h_{{U_1},{D_2}}}} \right|}^2} + \sigma _n^2}}  \geq \gamma.
\label{eqn:SC Opt 3 Ins A}
\end{align}
%%%%%%%%%%%%%%%%%%%%%%%%%%%%%%%%%%%%%%%%%%%%%%%%%%%%%%%%%%%%%%%%%%%%%%%%%%%%%%%%%%%%%%%%%%%%%%%%%%
Considering that the objective function is neither convex, nor concave, solving problem~\eqref{eqn:SC Opt 3 Ins A} is difficult. As a
result, we shall adopt a two-step approach in order to solve~\eqref{eqn:SC Opt 3 Ins A}. First, we consider $P_2$ to be
fixed and derive the optimal value for $P_1$, and then we replace the obtained $P_1$ in~\eqref{eqn:SC Opt 3 Ins A}
and solve the optimization problem for $P_2$.
%%%%%%%%%%%%%%%%%%%%%%%%%%%%%%%%%%%%%%%%%%%%%%%%%%%%%%%%%%%%%%%%%%%%%%%%%%%%%%%%%%%%%%%%%%%%%%%%%%
\subsection{Optimizing $P_1$ for a Given $P_2$} \label{subsub P1 P2 Ins}
%%%%%%%%%%%%%%%%%%%%%%%%%%%%%%%%%%%%%%%%%%%%%%%%%%%%%%%%%%%%%%%%%%%%%%%%%%%%%%%%%%%%%%%%%%%%%%%%%%
For this case,~\eqref{eqn:SC Opt 3 Ins A} is reduced to
%%%%%%%%%%%%%%%%%%%%%%%%%%%%%%%%%%%%%%%%%%% EQUATION %%%%%%%%%%%%%%%%%%%%%%%%%%%%%%%%%%%%%%%%%%%%%
\begin{align}
&\mathop {\max }\limits_{{P_1}} \,\,\,\,{\frac{{1 + \frac{{{P_1}{{\left| {{h_{{U_1},{D_1}}}} \right|}^2}}}
{{{P_2}{{\left| {{h_{{U_2},{D_1}}}} \right|}^2} + \sigma _n^2}}}}{{1 + \frac{{{P_1}{{\left| {{h_{{U_1},E}}} \right|}^2}}}
{{{P_2}{{\left| {{h_{{U_2},E}}} \right|}^2} + \sigma _n^2}}}}}
\nonumber\\
& \,\,  \text{s.\,t.} \,\,\,\,\,\,\,  {P_1} \le {P_{\max_1 }}, \, 
{\rca {P_1} \mathop {\gtrless}\limits_{\eqref{subeq:AD3}}^{\eqref{subeq:AD1}}   \omega }, \, 
{P_1} \le \frac{{{P_2}{{\left| {{h_{{U_2},{D_2}}}} \right|}^2} - \gamma \sigma _n^2}}{{\gamma {{\left| {{h_{{U_1},{D_2}}}} \right|}^2}}}.
\label{eqn:SC Opt 3 Ins}
\end{align}
%%%%%%%%%%%%%%%%%%%%%%%%%%%%%%%%%%%%%%%%%%%%%%%%%%%%%%%%%%%%%%%%%%%%%%%%%%%%%%%%%%%%%%%%%%%%%%%%%%
%%%%%%%%%%%%%%%%%%%%%%%%%%%%%%%%%%%%%%%%%%%% EQUATION %%%%%%%%%%%%%%%%%%%%%%%%%%%%%%%%%%%%%%%%%%%%%
%\begin{align}
%&\mathop {\max }\limits_{{P_1}} \,\,\,\,{\frac{{1 + \frac{{{P_1}{{\left| {{h_{{U_1},{D_1}}}} \right|}^2}}}
%{{{P_2}{{\left| {{h_{{U_2},{D_1}}}} \right|}^2} + \sigma _n^2}}}}{{1 + \frac{{{P_1}{{\left| {{h_{{U_1},E}}} \right|}^2}}}
%{{{P_2}{{\left| {{h_{{U_2},E}}} \right|}^2} + \sigma _n^2}}}}}
%\nonumber\\
%&\,\,\text{s.\,t.}\,\,\,\,\,\,\,{P_1} \le {P_{\max_1 }},
%\nonumber
%\,\, 
%{\rca {P_1} \mathop {\gtrless}\limits_{\eqref{subeq:AD3}}^{\eqref{subeq:AD1}}   \omega   }
%\nonumber\\
%&\,\,\,\,\,\,\,\,\,\,\,\,\,\,\,\,\, {P_1} \le \frac{{{P_2}{{\left| {{h_{{U_2},{D_2}}}} \right|}^2} - \gamma \sigma _n^2}}{{\gamma {{\left| {{h_{{U_1},{D_2}}}} \right|}^2}}}.
%\label{eqn:SC Opt 3 Ins}
%\end{align}
%%%%%%%%%%%%%%%%%%%%%%%%%%%%%%%%%%%%%%%%%%%%%%%%%%%%%%%%%%%%%%%%%%%%%%%%%%%%%%%%%%%%%%%%%%%%%%%%%%%
In order to solve~\eqref{eqn:SC Opt 3 Ins}, first, we find the range of $P_2$ for which the objective function in~\eqref{eqn:SC Opt 3 Ins} is always
positive, i.e., a positive secrecy rate can be achieved. In the following theorem,
we outline the related bounds on $P_{2}$ where the positive secrecy rate is obtained.
%\vspace{3cm}
%%%%%%%%%%%%%%%%%%%%%%%%%%%%%%%%%%%%%%%%%%% Proposition %%%%%%%%%%%%%%%%%%%%%%%%%%%%%%%%%%%%%%%%%%
%\begin{mdframed}[style=MyFrame]
\begin{thm}
%Assume an interference network with two users, user~$1$ and user~$2$, communicating with their destination while another user is trying to wiretap user $1$. The users transmit their signals so that user $1$'s secrecy rate is maximized and at least the minimum QoS at user $2$'s destination is satisfied. Then, in order to achieve a positive secrecy rate, $P_2$ should satisfy the following bounds:
{\rc Assume an interference network similar to the one mentioned in Fig.~\ref{Fig:System model} along with the assumptions on power limits and the QoS}. In order to achieve a positive secrecy rate, $P_2$ should satisfy the following bounds:
\begin{alignat}{2}
&{P_2} > \frac{A}{B} \qquad \text{if} \qquad  &&  A > 0, B > 0, \refstepcounter{equation} \subeqn \label{subeq:P1}
%\\
%&{P_2} = \emptyset \qquad && A > 0, B < 0, \subeqn \label{subeq:P2}
\\
&{P_2} > 0           \qquad \,\,\, \text{if} \qquad  &&  A < 0, B > 0, \subeqn \label{subeq:P3}
\\
&{P_2}< \frac{A}{B}  \qquad \text{if} \qquad  &&  A < 0, B < 0, \subeqn \label{subeq:P4}
\end{alignat}
where $A={ \sigma_n^2 \left( {{{\left| {{h_{{U_1},E}}} \right|}^2} - {{\left| {{h_{{U_1},{D_1}}}} \right|}^2}} \right)}$ and
$B={\left| {{h_{{U_1},{D_1}}}} \right|^2}{\left| {{h_{{U_2},E}}} \right|^2} -
{\left| {{h_{{U_2},{D_1}}}} \right|^2}{\left| {{h_{{U_1},E}}} \right|^2}$. Further, for $A > 0, B < 0$,
irrespective of the value of $P_{2}$, no positive secrecy rate can be obtained for $U_{1}$.
% and $\emptyset$ denotes the empty set.
\label{thm:Pos sec Ins}
\end{thm}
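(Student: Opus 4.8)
The plan is to turn the requirement $C_{S_{U_1}} > 0$ directly into an affine inequality in $P_2$ and then read off all the bounds through a sign analysis of its two coefficients. Starting from \eqref{eqn:Cs eve block}, a positive secrecy rate means the first logarithm exceeds the second, and since $\log_2$ is strictly increasing this is equivalent to the first SINR-type ratio exceeding the second:
\begin{align}
\frac{P_1 |h_{U_1,D_1}|^2}{P_2 |h_{U_2,D_1}|^2 + \sigma_n^2} > \frac{P_1 |h_{U_1,E}|^2}{P_2 |h_{U_2,E}|^2 + \sigma_n^2}.
\label{eqn:plan positivity}
\end{align}

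First I would cancel the common factor $P_1 > 0$ and cross-multiply by the two denominators, both of which are strictly positive because $P_2 \geq 0$ and $\sigma_n^2 > 0$; this preserves the direction of the inequality. Expanding the products, the terms quadratic in $P_2$ never arise and the $P_2$-free terms combine into a single constant, so the whole condition collapses to the affine inequality $B P_2 > A$, with $A$ and $B$ exactly the quantities defined in the statement. This reduction is the crux of the argument: once \eqref{eqn:plan positivity} has been linearized, everything that follows is elementary.

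The main work is then the sign analysis of $B$, and this is where care is required. If $B > 0$, division yields $P_2 > A/B$; combined with $A > 0$ this is case \eqref{subeq:P1}, whereas $A < 0$ makes $A/B < 0$ so that every admissible $P_2 \geq 0$ satisfies the bound, giving case \eqref{subeq:P3}. If $B < 0$, division reverses the inequality to $P_2 < A/B$; here $A < 0$ forces $A/B > 0$ and produces case \eqref{subeq:P4}, while $A > 0$ forces $A/B < 0$, so no nonnegative $P_2$ can meet the bound, which establishes the final impossibility claim. The only delicate points are the sign flip when dividing by a negative $B$ and the degenerate case $B = 0$, for which $B P_2 > A$ becomes $0 > A$; this is consistent with the listed cases, being feasible precisely when $A < 0$, and so need not be stated separately. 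I expect no real obstacle beyond bookkeeping the four sign combinations and confirming that $P_2 \geq 0$ is respected throughout.
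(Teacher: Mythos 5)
Your proposal is correct and follows essentially the same route as the paper's own proof in Appendix~\ref{Appen Propo Ins}: both reduce positivity of the secrecy rate, via monotonicity of $\log_2$, to the inequality between the two SINR ratios, cross-multiply the (positive) denominators to obtain the affine condition $BP_2 > A$, and then split on the sign of $B$ (and of $A$) to recover the stated bounds and the infeasible case $A>0,\,B<0$. Your treatment of the degenerate case $B=0$ and the explicit check that $P_2 \geq 0$ is respected are minor additions of rigor beyond the paper's terser presentation, but not a different method.
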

%\end{mdframed}
%%%%%%%%%%%%%%%%%%%%%%%%%%%%%%%%%%%%%%%%%%%%%%%%%%%%%%%%%%%%%%%%%%%%%%%%%%%%%%%%%%%%%%%%%%%%%%%%%%
\begin{proof}
The proof is given in Appendix~\ref{Appen Propo Ins}.
\end{proof}
%%%%%%%%%%%%%%%%%%%%%%%%%%%%%%%%%%%%%%%%%%%%%%%%%%%%%%%%%%%%%%%%%%%%%%%%%%%%%%%%%%%%%%%%%%%%%%%%%%

One immediate conclusion of Theorem~\ref{thm:Pos sec Ins} is given by the following corollary
which can be considered as the most important result of this paper.
%%%%%%%%%%%%%%%%%%%%%%%%%%%%%%%%%%%%%%%%%% Corollary %%%%%%%%%%%%%%%%%%%%%%%%%%%%%%%%%%%%%%%%%%%%
\begin{corollary}
In a wiretap interference channel as in Fig.~\ref{Fig:System model}, where the goal is to obtain a positive
secrecy rate for $U_1$, the possibility of achieving a positive secrecy rate is independent from the value of $P_{1}$, and depends on the value of $P_{2}$ and the conditions of the channels.
\label{cor: thm pos sec}
\end{corollary}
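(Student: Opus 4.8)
The plan is to obtain the corollary directly from Theorem~\ref{thm:Pos sec Ins} and then to expose the algebraic reason that makes it true. First I would note that the entire achievability classification in Theorem~\ref{thm:Pos sec Ins} is expressed through the two quantities $A$ and $B$. Since $A=\sigma_n^2(|h_{U_1,E}|^2-|h_{U_1,D_1}|^2)$ and $B=|h_{U_1,D_1}|^2|h_{U_2,E}|^2-|h_{U_2,D_1}|^2|h_{U_1,E}|^2$ depend only on the channel gains and the noise power, and neither contains $P_1$, the case split \eqref{subeq:P1}--\eqref{subeq:P4} together with the infeasible configuration $A>0,\,B<0$ characterizes the existence of a positive secrecy rate purely through the sign pattern of $(A,B)$ and a threshold on $P_2$. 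This already yields both assertions of the corollary: achievability is governed by $P_2$ and the channel conditions, with no role for $P_1$.

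To make the mechanism transparent I would add the underlying one-line computation. From \eqref{eqn:Cs eve block}, $C_{S_{U_1}}>0$ is equivalent to the argument of the first logarithm exceeding that of the second, which after subtracting the common unit term amounts to comparing the two SINR-type fractions. The numerators of both fractions share the factor $P_1$, and on the only regime of interest, $P_1>0$, this factor cancels, leaving
\begin{align}
\frac{|h_{U_1,D_1}|^2}{P_2|h_{U_2,D_1}|^2+\sigma_n^2}>\frac{|h_{U_1,E}|^2}{P_2|h_{U_2,E}|^2+\sigma_n^2},
\end{align}
an inequality in $P_2$ and the channel gains alone. Cross-multiplying by the two strictly positive denominators and collecting terms reproduces $BP_2>A$, i.e.\ exactly the inequality behind Theorem~\ref{thm:Pos sec Ins}; the degenerate choice $P_1=0$ gives $C_{S_{U_1}}=0$ and is thus irrelevant to a \emph{positive} secrecy rate. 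I would also remark that the same cancellation occurs in the SIC regime: replacing the second fraction by the Case~\ref{subeq:AD4} term from \eqref{eqn:Cs eve dec} still leaves $P_1$ as a common factor, so the independence from $P_1$ holds across all channel configurations, not merely the blocked-eavesdropper case.

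There is essentially no analytic obstacle in this argument; the only point needing care is the legitimacy of dividing out $P_1$. This rests on two facts that I would state explicitly: $P_1>0$ strictly on the regime where a positive secrecy rate is even meaningful, and both denominators are positive because each carries the additive term $\sigma_n^2>0$, so that dividing and subsequently cross-multiplying preserve the direction of the inequality.
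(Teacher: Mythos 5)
Your proposal is correct and takes essentially the same route as the paper: there the corollary is stated as an immediate consequence of Theorem~\ref{thm:Pos sec Ins}, whose proof in Appendix~\ref{Appen Propo Ins} rests on precisely the cancellation of the common positive factor $P_1$ from the two SINR terms that you make explicit, leaving a condition of the form $BP_2 > A$ involving only $P_2$, the channel gains, and the noise power. Your added remark that the same cancellation persists in the SIC regime of Case~\ref{subeq:AD4} (via~\eqref{eqn:Cs eve dec}) is a sound, if not strictly needed, extension beyond the case the paper's theorem covers.
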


Now that we have defined the required conditions for $P_2$ to achieve a positive secrecy rate, we investigate the
optimal value of $P_1$, denoted by $P^\star_1$ for a given $P_{2}$. If we take the derivative of the objective
function in~\eqref{eqn:SC Opt 3 Ins} with respect to $P_1$, we see that the conditions on $P_2$ to have a
monotonically increasing, referred to as Case 1, or
decreasing, referred to as Case 2, are the same as the conditions to have a positive or negative
secrecy rate, respectively. These conditions are summarized as follows
%%%%%%%%%%%%%%%%%%%%%%%%%%%%%%%%%%%%%%%%%%% EQUATION %%%%%%%%%%%%%%%%%%%%%%%%%%%%%%%%%%%%%%%%%%%%%
\begin{alignat}{2}
&{P_{2_{(1)}}} > \frac{A}{B}, \, {P_{2_{(2)}}} < \frac{A}{B}  \qquad \text{if} \qquad && A > 0, B > 0,
\nonumber\\
&{P_{2_{(1)}}} = \emptyset, \, {P_{2_{(2)}}} >0               \qquad \,\,\,\,\,\, \text{if}  \qquad && A > 0, B < 0,
\nonumber\\
&{P_{2_{(1)}}} > 0, \, {P_{2_{(2)}}} = \emptyset              \qquad \,\,\,\,\,\, \text{if}  \qquad && A < 0, B > 0,
\nonumber\\
&{P_{2_{(1)}}}< \frac{A}{B}, \, {P_{2_{(2)}}} > \frac{A}{B}   \qquad \text{if}  \qquad && A < 0, B < 0,
\label{derivative P1}
\end{alignat}
%%%%%%%%%%%%%%%%%%%%%%%%%%%%%%%%%%%%%%%%%%%%%%%%%%%%%%%%%%%%%%%%%%%%%%%%%%%%%%%%%%%%%%%%%%%%%%%%%%
where $P_{2_{(1)}}$ refers to the required power in Case 1, $P_{2_{(2)}}$ refers
to the required power in Case 2 and $\emptyset$ denotes the empty set. According
to Theorem~\ref{thm:Pos sec Ins}, and the conditions in~\eqref{derivative P1}, the global optimal
values for $P_1$ in Cases 1 and 2 are defined as
%%%%%%%%%%%%%%%%%%%%%%%%%%%%%%%%%%%%%%%%%%%%%%%%%%%%%%%%%%%%%%%%%%%%%%%%%%%%%%%%%%%%%%%%%%%%%%%%%%
\begin{enumerate}
  \item If the objective function in~\eqref{eqn:SC Opt 1 Ins} is monotonically increasing, then
  \begin{align}
  P^\star_1=\min\left\{ {\rca \chi},\frac{{{P_2}{{\left| {{h_{{U_2},{D_2}}}} \right|}^2} - \gamma \sigma _n^2}}{{\gamma {{\left| {{h_{{U_1},{D_2}}}} \right|}^2}}}\right\}.
  \label{case 1 Ins}
  \end{align}
%%%%%%%%%%%%%%%%%%%%%%%%%%%%%%%%%%%%%%%%%%%%%%%%%%%%%%%%%%%%%%%%%%%%%%%%%%%%%%%%%%%%%%%%%%%%%%%%%%
{\rca where $\chi=P_{\max_1}$ for Cases~\ref{subeq:AD1} and~\ref{subeq:AD2}, 
$\chi  = \min \left\{ {{P_{{{\max }_1}}},\omega } \right\}$ for Case~\ref{subeq:AD3}.}
%%%%%%%%%%%%%%%%%%%%%%%%%%%%%%%%%%%%%%%%%%%%%%%%%%%%%%%%%%%%%%%%%%%%%%%%%%%%%%%%%%%%%%%%%%%%%%%%%%
  \item If the objective function in~\eqref{eqn:SC Opt 1 Ins} is monotonically decreasing,
  then $P^\star_1=0$. This could also be concluded from the fact that when a positive secrecy rate cannot be granted,
  $U_1$ should be turned off. \label{case 2 Ins}
\end{enumerate}
%%%%%%%%%%%%%%%%%%%%%%%%%%%%%%%%%%%%%%%%%%%%%%%%%%%%%%%%%%%%%%%%%%%%%%%%%%%%%%%%%%%%%%%%%%%%%%%%%%
%%%%%%%%%%%%%%%%%%%%%%%%%%%%%%%%%%%%%%%%%%%%%%%%%%%%%%%%%%%%%%%%%%%%%%%%%%%%%%%%%%%%%%%%%%%%%%%%%%
\subsection{Optimizing $P_2$ for a Given $P_1$}
%%%%%%%%%%%%%%%%%%%%%%%%%%%%%%%%%%%%%%%%%%%%%%%%%%%%%%%%%%%%%%%%%%%%%%%%%%%%%%%%%%%%%%%%%%%%%%%%%%
We insert the $P^{\star}_{1}$ obtained in Subsection~\ref{subsub P1 P2 Ins} into~\eqref{eqn:SC Opt 3 Ins}, and try to obtain the optimal value for $P_{2}$. First, we decompose the optimal
answer of $P_1$ in~\eqref{case 1 Ins} into two different answers as follows
%%%%%%%%%%%%%%%%%%%%%%%%%%%%%%%%%%%%%%%%%%% EQUATION %%%%%%%%%%%%%%%%%%%%%%%%%%%%%%%%%%%%%%%%%%%%%
\begin{align}
P_1^ \star  = \left\{ {\begin{array}{*{20}{c}}
{\rca \chi} & {P_2} \geq \frac{{\gamma \left( {{\rca \chi} {{\left| {{h_{{U_1},{D_2}}}} \right|}^2} + \sigma _n^2} \right)}}{{{{\left| {{h_{{U_2},{D_2}}}} \right|}^2}}},\\
{\frac{{{P_2}{{\left| {{h_{{U_2},{D_2}}}} \right|}^2} - \gamma \sigma _n^2}}{{\gamma {{\left| {{h_{{U_1},{D_2}}}} \right|}^2}}}}&{P_2} < \frac{{\gamma \left( {{\rca \chi}
{{\left| {{h_{{U_1},{D_2}}}} \right|}^2} + \sigma _n^2} \right)}}{{{{\left| {{h_{{U_2},{D_2}}}} \right|}^2}}}.
\end{array}} \right.
\label{P Opt case 1 & 2}
\end{align}
%%%%%%%%%%%%%%%%%%%%%%%%%%%%%%%%%%%%%%%%%%%%%%%%%%%%%%%%%%%%%%%%%%%%%%%%%%%%%%%%%%%%%%%%%%%%%%%%%%
Using Theorem~\ref{thm:Pos sec Ins} and according to the two resulting cases in~\eqref{P Opt case 1 & 2},
we can break~\eqref{eqn:SC Opt 3 Ins A} into two problems in order to optimize $P_2$, respectively, as follows
%%%%%%%%%%%%%%%%%%%%%%%%%%%%%%%%%%%%%%%%%%% EQUATION %%%%%%%%%%%%%%%%%%%%%%%%%%%%%%%%%%%%%%%%%%%%%
\begin{align}
&\mathop {\max }\limits_{{P_2}} \,\,\, \frac{{1 + \frac{{{P_{{{\max }_1}}}{{\left| {{h_{{U_1},{D_1}}}} \right|}^2}}}{{{P_2}
{{\left| {{h_{{U_2},{D_1}}}} \right|}^2} + \sigma _n^2}}}}{{1 + \frac{{{P_{{{\max }_1}}}{{\left| {{h_{{U_1},E}}} \right|}^2}}}{{{P_2}{{\left| {{h_{{U_2},E}}} \right|}^2} + \sigma _n^2}}}}
\nonumber\\
&\,\,\text{s.\,t.}\,\,\,\,\,\,{P_2} \le {P_{\max_2 }},
\,\, {P_2} \geq \frac{{\gamma \left( {{\rca \chi} {{\left| {{h_{{U_1},{D_2}}}} \right|}^2} + \sigma _n^2} \right)}}{{{{\left| {{h_{{U_2},{D_2}}}} \right|}^2}}} = \lambda_1,
\nonumber\\
& \,\,\,\,\,\,\,\,\,\,\,\,\,\,\,\, {P_2} \mathop {\gtrless}\limits_{\eqref{subeq:P4}}^{\eqref{subeq:P1}} \frac{{\sigma _n^2 \left( {{{\left| {{h_{{U_1},E}}} \right|}^2} - {{\left| {{h_{{U_1},{D_1}}}} \right|}^2}} \right)}}
{{{{\left| {{h_{{U_1},{D_1}}}} \right|}^2}{{\left| {{h_{{U_2},E}}} \right|}^2} - {{\left| {{h_{{U_2},{D_1}}}} \right|}^2}{{\left| {{h_{{U_1},E}}} \right|}^2}}} = \varphi_1,
\label{eqn:SC Opt 4 Ins}
\end{align}
%%%%%%%%%%%%%%%%%%%%%%%%%%%%%%%%%%%%%%%%%%%%%%%%%%%%%%%%%%%%%%%%%%%%%%%%%%%%%%%%%%%%%%%%%%%%%%%%%%
and
%%%%%%%%%%%%%%%%%%%%%%%%%%%%%%%%%%%%%%%%%%% EQUATION %%%%%%%%%%%%%%%%%%%%%%%%%%%%%%%%%%%%%%%%%%%%%
\begin{align}
&\mathop {\max }\limits_{{P_2}} \,\,\, \frac{{1 + \frac{{\left( {{P_2}{{\left| {{h_{{U_2},{D_2}}}} \right|}^2} - \gamma \sigma _n^2} \right){{\left| {{h_{{U_1},{D_1}}}} \right|}^2}}}
{{\gamma {{\left| {{h_{{U_1},{D_2}}}} \right|}^2}\left( {{P_2}{{\left| {{h_{{U_2},{D_1}}}} \right|}^2} + \sigma _n^2} \right)}}}}
{{1 + \frac{{\left( {{P_2}{{\left| {{h_{{U_2},{D_2}}}} \right|}^2} - \gamma \sigma _n^2} \right){{\left| {{h_{{U_1},E}}} \right|}^2}}}
{{\gamma {{\left| {{h_{{U_1},{D_2}}}} \right|}^2}\left( {{P_2}{{\left| {{h_{{U_2},E}}} \right|}^2} + \sigma _n^2} \right)}}}}
\nonumber\\
&\,\,\text{s.\,t.}\,\,\,\,\,\,{P_2} \le {P_{\max_2 }},
\,\,{P_2} < \frac{{\gamma \left( {{{\rca \chi}} {{\left| {{h_{{U_1},{D_2}}}} \right|}^2} + \sigma _n^2} \right)}}{{{{\left| {{h_{{U_2},{D_2}}}} \right|}^2}}} = \lambda_1,
\nonumber\\
&\,\,\,\,\,\,\,\,\,\,\,\,\,\,\,\,{P_2} \ge \frac{{\gamma \sigma _n^2}}{{{{\left| {{h_{{U_2},{D_2}}}} \right|}^2}}}= \lambda_2,
\nonumber\\
&\,\,\,\,\,\,\,\,\,\,\,\,\,\,\,\,\,{P_2} \mathop {\gtrless}\limits_{\eqref{subeq:P4}}^{\eqref{subeq:P1}} \frac{{\sigma _n^2\left( {{{\left| {{h_{{U_1},E}}} \right|}^2} - {{\left| {{h_{{U_1},{D_1}}}} \right|}^2}} \right)}}
{{{{\left| {{h_{{U_1},{D_1}}}} \right|}^2}{{\left| {{h_{{U_2},E}}} \right|}^2} - {{\left| {{h_{{U_2},{D_1}}}} \right|}^2}{{\left| {{h_{{U_1},E}}} \right|}^2}}} = \varphi_1,
\label{eqn:SC Opt 5 Ins}
\end{align}
%%%%%%%%%%%%%%%%%%%%%%%%%%%%%%%%%%%%%%%%%%%%%%%%%%%%%%%%%%%%%%%%%%%%%%%%%%%%%%%%%%%%%%%%%%%%%%%%%%
for $A \mathop {\gtrless}\limits_{\eqref{subeq:P4}}^{\eqref{subeq:P1}} 0$ and $B \mathop {\gtrless}\limits_{\eqref{subeq:P4}}^{\eqref{subeq:P1}} 0$. For the case $A<0$ and $B>0$ which is represented by~\eqref{subeq:P3},
the last constraint in~\eqref{eqn:SC Opt 4 Ins} and~\eqref{eqn:SC Opt 5 Ins} {\rc is} removed from the problem since with any positive
value for $P_2$, $U_1$ can have a positive secrecy rate. Also for $A>0$ and $B<0$, the secrecy rate is simply zero since $P_1=0$. {\rc Furthermore, the numerator and denumerator in~\eqref{eqn:SC Opt 5 Ins} have the possibility to become less than unit and this leads to a negative rate}. The constraint in~\eqref{eqn:SC Opt 5 Ins} which is placed one to the last, ensures that the data and wiretap {\rc rates} do not go below zero.

We discuss the feasibility conditions of~\eqref{eqn:SC Opt 4 Ins} and~\eqref{eqn:SC Opt 5 Ins} to
derive the feasibility domain, $p_2$, in Proposition~\ref{Pro:feasibility ins}.
%%%%%%%%%%%%%%%%%%%%%%%%%%%%%%%%%%%%%%%%%%% Proposition %%%%%%%%%%%%%%%%%%%%%%%%%%%%%%%%%%%%%%%%%%
%\begin{mdframed}[style=MyFrame]
\begin{proposition}
The feasibility domain for the problems~\eqref{eqn:SC Opt 4 Ins} and~\eqref{eqn:SC Opt 5 Ins} denoted by $p_{2}$ is defined as follows
\begin{enumerate}
  \item Problem~\eqref{eqn:SC Opt 4 Ins}: For case~\eqref{subeq:P1}, we should have $\max\left\{ { {\lambda _1},\sup {\varphi _1}} \right\} \le {P_{{{\max }_2}}}$ which leads to ${p_2} = [\max \left\{ { {\lambda _1},\sup {\varphi _1}} \right\},{P_{{{\max }_2}}}]$. For
  case~\eqref{subeq:P4}, we should have $\min \left\{ {\inf {\varphi _1},{P_{{{\max }_2}}}} \right\} \ge {\lambda _1}$ which leads to ${p_2} = [ {\lambda _1}, \min \left\{ {\inf {\varphi _1},{P_{{{\max }_2}}}} \right\} ]$.

  \item Problem~\eqref{eqn:SC Opt 5 Ins}: For case~\eqref{subeq:P1}, we should have $\max \left\{ {\sup {\varphi _1},{\lambda _2}} \right\} \le \min \left\{ {\inf {\lambda _1},{P_{{{\max }_2}}}} \right\}$ which leads to $\left[ {\max \left\{ {\sup {\varphi _1},{\lambda _2}} \right\},\min \left\{ {\inf {\lambda _1},{P_{{{\max }_2}}}} \right\}} \right]$. For case~\eqref{subeq:P4}, we should have $\min \left\{ {\inf {\varphi _1},\inf {\lambda _1},{P_{{{\max }_2}}}} \right\} \ge {\lambda _2}$ which leads to ${p_2} = [\min \left\{ {\inf {\varphi _1},\inf {\lambda _1},{P_{{{\max }_2}}}} \right\},{\lambda _2}]$.
\end{enumerate}
\label{Pro:feasibility ins}
\end{proposition}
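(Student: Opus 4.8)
The plan is to exploit the fact that, once $P_1^\star$ has been substituted, both \eqref{eqn:SC Opt 4 Ins} and \eqref{eqn:SC Opt 5 Ins} are optimization problems in the single variable $P_2$ whose every constraint is affine in $P_2$. Each constraint therefore carves out a half-line on the real axis: it is either a lower bound of the form $P_2 \ge c$ (possibly strict) or an upper bound of the form $P_2 \le c$ (possibly strict). The feasibility domain $p_2$ is simply the intersection of these half-lines, which is automatically an interval $[\ell,u]$ with $\ell$ the largest among all lower bounds and $u$ the smallest among all upper bounds; it is nonempty exactly when $\ell \le u$. Hence the whole proposition reduces to classifying each constraint as a lower or an upper bound in each case and then taking the appropriate $\max$ and $\min$.

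For \eqref{eqn:SC Opt 4 Ins} I would note that $P_2 \le P_{\max_2}$ is always an upper bound and $P_2 \ge \lambda_1$ is always a lower bound, while the role of the constraint $P_2 \gtrless \varphi_1$ is dictated by the case. In case \eqref{subeq:P1} the sign is ``$>$'', so $\varphi_1$ is an additional lower bound and $\ell=\max\{\lambda_1,\varphi_1\}$, $u=P_{\max_2}$; in case \eqref{subeq:P4} the sign is ``$<$'', so $\varphi_1$ is an additional upper bound and $\ell=\lambda_1$, $u=\min\{\varphi_1,P_{\max_2}\}$. Imposing $\ell \le u$ reproduces the two stated conditions and intervals. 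For \eqref{eqn:SC Opt 5 Ins} the bookkeeping is identical, but now both $P_2 \le P_{\max_2}$ and $P_2 < \lambda_1$ are upper bounds and $P_2 \ge \lambda_2$ is the lower bound, with $\varphi_1$ again contributing a lower bound in case \eqref{subeq:P1} and an upper bound in case \eqref{subeq:P4}; collecting the $\max$ of the lower bounds and the $\min$ of the upper bounds yields the remaining two claims.

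The one point that needs genuine care — and the place I expect the only real subtlety — is the treatment of strict versus non-strict inequalities. The constraints $P_2 > \varphi_1$, $P_2 < \varphi_1$ and $P_2 < \lambda_1$ exclude their boundary, so the feasibility interval is half-open there. I would make explicit at the outset that the symbols $\sup\varphi_1$, $\inf\varphi_1$ and $\inf\lambda_1$ stand for these non-attained limiting endpoints, and carry that convention through so that the closed-form endpoints remain well defined. I would also record the two regimes that the proposition does not tabulate: for $A<0,B>0$ (case \eqref{subeq:P3}) Theorem~\ref{thm:Pos sec Ins} grants a positive secrecy rate for every positive $P_2$, so the $\varphi_1$ constraint is simply dropped and only the surviving bounds determine $p_2$; and for $A>0,B<0$ no positive secrecy rate exists, making the feasibility question vacuous.

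Finally, while carrying out the intersection I would double-check the endpoint ordering in the last case. For case \eqref{subeq:P4} of \eqref{eqn:SC Opt 5 Ins} the lower bound is $\lambda_2$ and the upper bound is $\min\{\inf\varphi_1,\inf\lambda_1,P_{\max_2}\}$, so the nonemptiness condition $\lambda_2 \le \min\{\inf\varphi_1,\inf\lambda_1,P_{\max_2}\}$ should yield the interval $[\lambda_2,\min\{\inf\varphi_1,\inf\lambda_1,P_{\max_2}\}]$; I would verify that the endpoints are written in this order rather than reversed.
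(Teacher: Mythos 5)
Your proof is correct and is precisely the straightforward argument the paper itself omits (its proof reads ``straightforward, thus was omitted''): after substituting $P_1^\star$, every constraint is affine in the scalar $P_2$, hence a half-line, so the feasible set is the interval from the largest lower bound to the smallest upper bound, and the four stated cases follow by classifying the $\varphi_1$ constraint as a lower bound under case~\eqref{subeq:P1} and an upper bound under case~\eqref{subeq:P4}. Your closing check is also warranted: in the paper's statement of case~\eqref{subeq:P4} for problem~\eqref{eqn:SC Opt 5 Ins} the endpoints are indeed written in reversed order, and the interval should read $p_2=\left[\lambda_2,\,\min\left\{\inf\varphi_1,\inf\lambda_1,P_{\max_2}\right\}\right]$, consistent with the nonemptiness condition $\min\left\{\inf\varphi_1,\inf\lambda_1,P_{\max_2}\right\}\ge\lambda_2$.
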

%\end{mdframed}
%%%%%%%%%%%%%%%%%%%%%%%%%%%%%%%%%%%%%%%%%%%%%%%%%%%%%%%%%%%%%%%%%%%%%%%%%%%%%%%%%%%%%%%%%%%%%%%%%%
\begin{proof}
The proof is straightforward, thus was omitted.
\end{proof}
%%%%%%%%%%%%%%%%%%%%%%%%%%%%%%%%%%%%%%%%%%%%%%%%%%%%%%%%%%%%%%%%%%%%%%%%%%%%%%%%%%%%%%%%%%%%%%%%%%
%%%%%%%%%%%%%%%%%%%%%%%%%%%%%%%%%%%%%%%%%%%%%%%%%%%%%%%%%%%%%%%%%%%%%%%%%%%%%%%%%%%%%%%%%%%%%%%%%%
If both~\eqref{eqn:SC Opt 4 Ins} and~\eqref{eqn:SC Opt 5 Ins} are feasible at the same time, we select the $P_2^\star$ and the corresponding secrecy
rate from the problem which results in a higher secrecy rate. Here, we provide a generic closed-form solution
depending on the channels' conditions in Theorems~\ref{thm 1 ins} and~\ref{thm 2 ins}
for~\eqref{eqn:SC Opt 4 Ins} and~\eqref{eqn:SC Opt 5 Ins}, respectively.
%%%%%%%%%%%%%%%%%%%%%%%%%%%%%%%%%%%%%%%%%%% Theorem %%%%%%%%%%%%%%%%%%%%%%%%%%%%%%%%%%%%%%%%%%%%%%
%\begin{mdframed}[style=MyFrame]
\begin{thm}
Assume $a = P_{\max_1 } {\left| {{h_{{U_1},{D_1}}}} \right|^2}$, $b = {\left| {{h_{{U_2},{D_1}}}} \right|^2}$,
$c = P_{\max_1 } {\left| {{h_{{U_1},E}}} \right|^2}$, $d = {\left| {{h_{{U_2},E}}} \right|^2}$, $C=b-d$,
$D=b\left( {c + \sigma _n^2} \right) - d\left( {a + \sigma _n^2} \right)$, $E=-B{P_{{max}_1}}={bc - ad}$,
$F= cd\sigma _n^2 - a\left( {  b\left( {c + \sigma _n^2} \right) - cd} \right)$, $G=\frac{AP_{{max}_1}}{\sigma _n^2}=c-a$, $\alpha={\min\left( {\inf {\varphi _1},{P_{{{\max }_2}}}} \right)}$, and $\beta=\max \left\{ { {\lambda _1},\sup {\varphi _1}} \right\}$. Also, suppose
that~\eqref{eqn:SC Opt 4 Ins} is feasible. Then,~\eqref{eqn:SC Opt 4 Ins} is solved as follows:
\begin{enumerate}
  \item If $CD<0$
%%%%%%%%%%%%%%%%%%%%%%%%%%%%%%%%%%%%%%%%%%%%%%%%%%%%%%%%%%%%%%%%%%%%%%%%%%%%%%%%%%%%%%%%%%%%%%%%%%
\begin{enumerate}
  \item If $A<0$ and $E>0$
    \begin{align}
&P_2^ \star  = \alpha
\end{align}
%%%%%%%%%%%%%%%%%%%%%%%%%%%%%%%%%%%%%%%%%%%%%%%%%%%%%%%%%%%%%%%%%%%%%%%%%%%%%%%%%%%%%%%%%%%%%%%%%%
  \item If $E<0$
  \begin{align}
P_2^ \star  = \begin{cases}
\beta          & A>0
\\
{\lambda _1} & A<0
\end{cases}
\end{align}
\end{enumerate}
%%%%%%%%%%%%%%%%%%%%%%%%%%%%%%%%%%%%%%%%%%%%%%%%%%%%%%%%%%%%%%%%%%%%%%%%%%%%%%%%%%%%%%%%%%%%%%%%%%
%%%%%%%%%%%%%%%%%%%%%%%%%%%%%%%%%%%%%%%%%%%%%%%%%%%%%%%%%%%%%%%%%%%%%%%%%%%%%%%%%%%%%%%%%%%%%%%%%%
  \item If $ CD > 0 $
%%%%%%%%%%%%%%%%%%%%%%%%%%%%%%%%%%%%%%%%%%%%%%%%%%%%%%%%%%%%%%%%%%%%%%%%%%%%%%%%%%%%%%%%%%%%%%%%%%
  \begin{enumerate}
    \item If $A<0$, $E>0$ and $F<0$
    \begin{align}
    P_2^ \star = \mathop {\arg }\limits_{{P_2}} \mathop {\max }\limits_{{P_2} \in \left\{ {{\lambda _1},\alpha } \right\}} {C_s}
  \end{align}
%%%%%%%%%%%%%%%%%%%%%%%%%%%%%%%%%%%%%%%%%%%%%%%%%%%%%%%%%%%%%%%%%%%%%%%%%%%%%%%%%%%%%%%%%%%%%%%%%%
    \item If $E<0$ and $F>0$
    \begin{align}
   \hspace{-1.5cm}  P_2^ \star  =
    \begin{cases}
P_{2C} & P_{2C} \in p_2
\\
\mathop {\arg }\limits_{{P_2}} \mathop {\max }\limits_{{P_2} \in \left\{ {\beta ,{P_{{{\max }_2}}}} \right\}} {C_s} & A>0, P_{2C} \notin p_2
\\
\mathop {\arg }\limits_{{P_2}} \mathop {\max }\limits_{{P_2} \in \left\{ {{\lambda _1},{P_{{{\max }_2}}}} \right\}} {C_s} & A<0, P_{2C} \notin p_2
  \end{cases}
  \end{align}

%%%%%%%%%%%%%%%%%%%%%%%%%%%%%%%%%%%%%%%%%%%%%%%%%%%%%%%%%%%%%%%%%%%%%%%%%%%%%%%%%%%%%%%%%%%%%%%%%%
      \item If $E>0$, $F>0$ and $G<0$
    \begin{align}
     \hspace{-1.5cm}  P_2^ \star  =
    \begin{cases}
\mathop {\arg }\limits_{{P_2}} \mathop {\max }\limits_{{P_2} \in \left\{ {{P_{2C}},{\lambda _1},\alpha } \right\}} {C_s}                  & P_{2C}  \in p_2
\\
\mathop {\arg }\limits_{{P_2}} \mathop {\max }\limits_{{P_2} \in \left\{ {{\lambda _1},\alpha } \right\}} {C_s}                               & P_{2C} \notin p_2
  \end{cases}
  \end{align}

%%%%%%%%%%%%%%%%%%%%%%%%%%%%%%%%%%%%%%%%%%%%%%%%%%%%%%%%%%%%%%%%%%%%%%%%%%%%%%%%%%%%%%%%%%%%%%%%%%
      \item If $E<0$, $F<0$ and $G>0$
    \begin{align}
    \hspace{-1.5cm} P_2^ \star  =
    \begin{cases}
\mathop {\arg }\limits_{{P_2}} \mathop {\max }\limits_{{P_2} \in \left\{ {{P_{2C}},\beta ,{P_{{{\max }_2}}}} \right\}} {C_s}        &       P_{2C} \in p_2
\\
\mathop {\arg }\limits_{{P_2}} \mathop {\max }\limits_{{P_2} \in \left\{ {\beta ,{P_{{{\max }_2}}}} \right\}} {C_s}                & P_{2C} \notin p_2
  \end{cases}
  \end{align}

%%%%%%%%%%%%%%%%%%%%%%%%%%%%%%%%%%%%%%%%%%%%%%%%%%%%%%%%%%%%%%%%%%%%%%%%%%%%%%%%%%%%%%%%%%%%%%%%%%
      \item If $E<0$, $F<0$ and $G<0$
    \begin{align}
    P_2^ \star  = \lambda _1
  \end{align}
  \end{enumerate}
%%%%%%%%%%%%%%%%%%%%%%%%%%%%%%%%%%%%%%%%%%% EQUATION %%%%%%%%%%%%%%%%%%%%%%%%%%%%%%%%%%%%%%%%%%%%%
\end{enumerate}
where $C_s$ is the objective function in~\eqref{eqn:SC Opt 4 Ins},
$P_{2C}={\frac{{ - 2bdG\sigma _n^2 - \sqrt \Delta  }}{{2bdE}}}$,
and $\Delta  = 4abcdCD\sigma _n^2$.
\label{thm 1 ins}
\end{thm}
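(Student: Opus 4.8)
The plan is to reduce the objective of \eqref{eqn:SC Opt 4 Ins} to a single-variable rational function and then carry out an exhaustive first-derivative sign analysis. First I would substitute $P_1=P_{\max_1}$ together with the abbreviations $a,b,c,d$, which turns the objective into a ratio of two quadratics in $t:=P_2$, namely $C_s(t)=\dfrac{(bt+\sigma_n^2+a)(dt+\sigma_n^2)}{(bt+\sigma_n^2)(dt+\sigma_n^2+c)}$. Because the denominator is strictly positive for every $t\ge 0$, the sign of $C_s'(t)$ coincides with the sign of the numerator $N'M-NM'$ of the derivative. Expanding this numerator, the cubic terms cancel (both quadratics share the leading coefficient $bd$), and what remains is the quadratic $q(t)=bdE\,t^2+2bd\sigma_n^2G\,t+\sigma_n^2F$, whose coefficients are exactly the quantities $E=bc-ad$, $G=c-a$ and $F=cd\sigma_n^2-a\bigl(b(c+\sigma_n^2)-cd\bigr)$ already named in the statement. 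Thus the whole optimisation is controlled by the roots and sign of the single quadratic $q$.

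Next I would compute the discriminant of $q$ and show that it equals $\Delta=4abcd\,CD\,\sigma_n^2$, so that its sign is the sign of $CD$ (all of $a,b,c,d,\sigma_n^2$ being positive). This immediately splits the argument into the two headline cases. When $CD<0$, $q$ has no real root and therefore keeps the constant sign of its leading coefficient $bdE$, i.e.\ the sign of $E$; hence $C_s$ is strictly monotone on $p_2$ and the maximiser is an endpoint of the feasibility interval supplied by Proposition~\ref{Pro:feasibility ins}. To name that endpoint I would use the two dictionary identities $E=-B\,P_{\max_1}$ and $G=A\,P_{\max_1}/\sigma_n^2$, so that $\mathrm{sign}(E)=-\mathrm{sign}(B)$ and $\mathrm{sign}(G)=\mathrm{sign}(A)$. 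For instance $E>0$ (so $B<0$) with $A<0$ places us in case~\eqref{subeq:P4} with $p_2=[\lambda_1,\alpha]$ and an increasing $C_s$, giving $P_2^\star=\alpha$; the two sub-items of case $E<0$ are handled symmetrically, yielding $\beta$ or $\lambda_1$ according to whether $A>0$ (case~\eqref{subeq:P1}) or $A<0$ (case~\eqref{subeq:P3}).

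When $CD>0$ the quadratic $q$ has two real roots and $C_s$ has one interior critical point of each type. The key observation I would establish is that the root written with $-\sqrt{\Delta}$, namely $P_{2C}=\dfrac{-2bd\sigma_n^2G-\sqrt{\Delta}}{2bdE}$, is always the local \emph{maximiser}: for $E>0$ it is the smaller root (where $q$ changes from $+$ to $-$) and for $E<0$, since division by the negative $2bdE$ reverses the order, it is the larger root (where again $C_s'$ changes from $+$ to $-$). I would then locate the two roots relative to $[0,\infty)$ via their product $\sigma_n^2F/(bdE)$ and sum $-2\sigma_n^2G/E$: the sign of $F$ decides whether the roots straddle the origin, and the sign of $G$ then fixes whether both lie to the right or both to the left. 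This bookkeeping produces exactly the five sub-cases: when the roots straddle the origin ($F$ and $E$ of opposite sign) the interior maximiser is either discarded (item~(a), where $P_{2C}<0$) or is the unique peak $P_{2C}$ when it is positive (item~(b)); when both roots are positive ($F,E$ of the same sign and $G$ making the sum positive) the comparison set also contains $P_{2C}$ together with the two endpoints (items~(c),(d)); and when both roots are negative (item~(e)) $C_s$ is monotone decreasing on $p_2$ and the optimum returns to the left endpoint $\lambda_1$. The excluded combination $A>0,B<0$ (equivalently $G>0,E>0$) is precisely the infeasible case of Theorem~\ref{thm:Pos sec Ins}, which is why every sub-case carrying $E>0$ is forced to have $A<0$.

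The routine but delicate part — and the real obstacle — is the algebra of the first two steps: verifying that the derivative's numerator collapses to $q(t)$ with exactly the coefficients $bdE$, $2bd\sigma_n^2G$, $\sigma_n^2F$, and that the discriminant factors as $4abcd\,CD\,\sigma_n^2$ rather than into some irreducible expression in the channel gains. Once these two identities are in hand, the remaining work is purely the sign bookkeeping sketched above, closed off in each sub-case by comparing $C_s$ at the finitely many candidate points $\{P_{2C},\lambda_1,\beta,\alpha,P_{\max_2}\}$ that survive the feasibility restriction $P_{2C}\in p_2$.
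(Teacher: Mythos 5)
Your proposal is correct and follows essentially the same route as the paper's proof in Appendix~\ref{Appen thm 1 ins}: differentiate the objective of~\eqref{eqn:SC Opt 4 Ins} with respect to $P_2$, observe that the sign of the derivative is governed by the quadratic $bdE\,P_2^2+2bd\sigma_n^2G\,P_2+\sigma_n^2F$ with discriminant $\Delta=4abcd\,CD\,\sigma_n^2$, and then enumerate the sign patterns of $(E,F,G)$ to place the roots relative to the feasibility interval and compare the surviving critical point $P_{2C}$ against the interval's vertices. Your identification of $P_{2C}$ as the local maximizer in both parabola orientations and your explanation of why the combination $E>0,\,F>0,\,G>0$ (equivalently $A>0,\,B<0$) is absent match the paper's reasoning.
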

%\end{mdframed}
%%%%%%%%%%%%%%%%%%%%%%%%%%%%%%%%%%%%%%%%%%%%%%%%%%%%%%%%%%%%%%%%%%%%%%%%%%%%%%%%%%%%%%%%%%%%%%%%%%
\vspace{0.1cm}
\begin{IEEEproof}
The proof is given in Appendix~\ref{Appen thm 1 ins}.
\end{IEEEproof}
%%%%%%%%%%%%%%%%%%%%%%%%%%%%%%%%%%%%%%%%%%%%%%%%%%%%%%%%%%%%%%%%%%%%%%%%%%%%%%%%%%%%%%%%%%%%%%%%%%
\vspace{0.1cm}
%%%%%%%%%%%%%%%%%%%%%%%%%%%%%%%%%%%%%%%%%%% Theorem %%%%%%%%%%%%%%%%%%%%%%%%%%%%%%%%%%%%%%%%%%%%%
%\begin{mdframed}[style=MyFrame]
\begin{thm}
Assume $e = {\left| {{h_{{U_1},{D_1}}}} \right|^2}$, $f = {\left| {{h_{{U_2},{D_2}}}} \right|^2}$, $g = {\left| {{h_{{U_1},D_2}}} \right|^2}$,
$h = {\left| {{h_{{U_2},D_1}}} \right|^2}$, $i = {{{\left| {{h_{{U_1},{E}}}} \right|}^2}}$, $j={{{\left| {{h_{{U_2},{E}}}} \right|}^2}}$, $H=h-j$, $\delta=\min \left( {\inf {\lambda _1},{P_{{{\max }_2}}}} \right)$, $\kappa=\min \left\{ {\inf {\lambda _1},\inf {\varphi _1}} , {{P_{{{\max }_2}}}} \right\}$, $\mu={\max \left\{ {\sup {\varphi _1},{\lambda _2}} \right\}}$, $I= { - fi + gh\gamma  - \left( {hi + gj} \right)\gamma  + e\left( {f + j\gamma } \right)}$, 
\\
$J= { - g{h^2}i\gamma \left( {f + j\gamma } \right) + e\left( {{f^2}i\left( { - h + j} \right) + fg{j^2}\gamma  + gh{j^2}{\gamma ^2}} \right)}$,
\\
$K = { - gi\left( {f + j\gamma } \right) + e\left( {fg + gh\gamma  - hi\gamma  + ij\gamma } \right)}$,
\\
$L={ - ghi\left( {f + j\gamma } \right) + e\left( {fhi + fgj - fij + ghj\gamma } \right)}$.
\\
Also, suppose that~\eqref{eqn:SC Opt 5 Ins} is feasible. Then,~\eqref{eqn:SC Opt 5 Ins} can be solved as follows
\begin{enumerate}
 \item If $HI < 0$
%%%%%%%%%%%%%%%%%%%%%%%%%%%%%%%%%%%%%%%%%%% EQUATION %%%%%%%%%%%%%%%%%%%%%%%%%%%%%%%%%%%%%%%%%%%%%
\begin{enumerate}
  \item If $J>0$
   \begin{align}
P_2^ \star  = \begin{cases}
\delta & A > 0, B>0
\\
\delta & A < 0, B>0
\\
\kappa  &  A < 0, B<0
\end{cases}
  \end{align}
%%%%%%%%%%%%%%%%%%%%%%%%%%%%%%%%%%%%%%%%%%%%%%%%%%%%%%%%%%%%%%%%%%%%%%%%%%%%%%%%%%%%%%%%%%%%%%%%%%
  \item If $J<0$
   \begin{align}
P_2^ \star  = \begin{cases}
\mu & A > 0, B>0
\\
\lambda_2 & A < 0, B>0
\\
\lambda_2 &   A < 0, B<0
\end{cases}
  \end{align}
\end{enumerate}
  %%%%%%%%%%%%%%%%%%%%%%%%%%%%%%%%%%%%%%%%%%% EQUATION %%%%%%%%%%%%%%%%%%%%%%%%%%%%%%%%%%%%%%%%%%%%%
  \item If $HI > 0$
  %%%%%%%%%%%%%%%%%%%%%%%%%%%%%%%%%%%%%%%%%%% EQUATION %%%%%%%%%%%%%%%%%%%%%%%%%%%%%%%%%%%%%%%%%%%%%
\begin{enumerate}
%%%%%%%%%%%%%%%%%%%%%%%%%%%%%%%%%%%%%%%%%%%%%%%%%%%%%%%%%%%%%%%%%%%%%%%%%%%%%%%%%%%%%%%%%%%%%%%%%%
  \item If $J>0$ and $K<0$
   \begin{align}
P_2^ \star  = \begin{cases}
\mathop {\arg }\limits_{{P_2}} \mathop {\max }\limits_{{P_2} \in \left\{ {\mu ,\delta } \right\}} {C_s}               &   A > 0, B>0
\\
\mathop {\arg }\limits_{{P_2}} \mathop {\max }\limits_{{P_2} \in \left\{ {{\lambda _2},\delta } \right\}} {C_s}     &  A < 0, B>0
\\
\mathop {\arg }\limits_{{P_2}} \mathop {\max }\limits_{{P_2} \in \left\{ {{\lambda _2},\kappa } \right\}} {C_s} & A < 0, B<0
\end{cases}
  \end{align}
%%%%%%%%%%%%%%%%%%%%%%%%%%%%%%%%%%%%%%%%%%%%%%%%%%%%%%%%%%%%%%%%%%%%%%%%%%%%%%%%%%%%%%%%%%%%%%%%%%
  \item  If $J<0$ and $K>0$
   \begin{align}
\hspace{-1cm} P_2^ \star  = \begin{cases}
P_{2C}                                 &  P_{2C} \in p_2
\\
\mathop {\arg }\limits_{{P_2}} \mathop {\max }\limits_{{P_2} \in \left\{ {\mu ,\delta } \right\}} {C_s}                       & A > 0, B>0, P_{2C} \notin p_2
\\
\mathop {\arg }\limits_{{P_2}} \mathop {\max }\limits_{{P_2} \in \left\{ {{\lambda _2},\delta } \right\}} {C_s}             & A < 0, B>0, P_{2C} \notin p_2
\\
\mathop {\arg }\limits_{{P_2}} \mathop {\max }\limits_{{P_2} \in \left\{ {{\lambda _2},\kappa } \right\}} {C_s} & A < 0, B<0, P_{2C} \notin p_2
\end{cases}
  \end{align}
%%%%%%%%%%%%%%%%%%%%%%%%%%%%%%%%%%%%%%%%%%%%%%%%%%%%%%%%%%%%%%%%%%%%%%%%%%%%%%%%%%%%%%%%%%%%%%%%%%
  \item  If $J>0$, $K>0$ and $L<0$ or $J<0$, $K<0$ and $L>0$
   \begin{align}
\hspace{-1cm} P_2^ \star  = \begin{cases}
\mathop {\arg }\limits_{{P_2}} \mathop {\max }\limits_{{P_2} \in \left\{ {{P_{2C}},\mu ,\delta } \right\}} {C_s}              &   A > 0, B>0, P_{2C} \in p_2
\\
\mathop {\arg }\limits_{{P_2}} \mathop {\max }\limits_{{P_2} \in \left\{ {{P_{2C}},{\lambda _2},\delta } \right\}} {C_s} &   A < 0, B>0, P_{2C} \in p_2
\\
\mathop {\arg }\limits_{{P_2}} \mathop {\max }\limits_{{P_2} \in \left\{ {{P_{2C}},{\lambda _2},\kappa } \right\}} {C_s} & A < 0, B<0, P_{2C} \in p_2
\\
\mathop {\arg }\limits_{{P_2}} \mathop {\max }\limits_{{P_2} \in \left\{ {\mu ,\delta } \right\}} {C_s} &   A > 0, B>0, P_{2C} \notin p_2
\\
\mathop {\arg }\limits_{{P_2}} \mathop {\max }\limits_{{P_2} \in \left\{ {{\lambda _2},\delta } \right\}} {C_s} & A < 0, B>0, P_{2C} \notin p_2
\\
\mathop {\arg }\limits_{{P_2}} \mathop {\max }\limits_{{P_2} \in \left\{ {{\lambda _2},\kappa } \right\}} {C_s} & A < 0, B<0, P_{2C} \notin p_2
\end{cases}
  \end{align}
%%%%%%%%%%%%%%%%%%%%%%%%%%%%%%%%%%%%%%%%%%%%%%%%%%%%%%%%%%%%%%%%%%%%%%%%%%%%%%%%%%%%%%%%%%%%%%%%%%
     \item  If $J>0$, $K>0$ and $L>0$
   \begin{align}
P_2^ \star  = \begin{cases}
\delta &  A > 0, B>0
\\
\delta &  A < 0, B>0
\\
\kappa  & A < 0, B<0
\end{cases}
  \end{align}
%%%%%%%%%%%%%%%%%%%%%%%%%%%%%%%%%%%%%%%%%%%%%%%%%%%%%%%%%%%%%%%%%%%%%%%%%%%%%%%%%%%%%%%%%%%%%%%%%%
  \item  If $J<0$, $K<0$ and $L<0$
   \begin{align}
P_2^ \star  = \begin{cases}
\mu &  A > 0, B>0
\\
\lambda _2 & A < 0, B>0
\\
\lambda _2 & A < 0, B<0
\end{cases}
  \end{align}

\end{enumerate}
%%%%%%%%%%%%%%%%%%%%%%%%%%%%%%%%%%%%%%%%%%% EQUATION %%%%%%%%%%%%%%%%%%%%%%%%%%%%%%%%%%%%%%%%%%%%%
\end{enumerate}
where $C_s$ is the objective function in~\eqref{eqn:SC Opt 5 Ins} and $P_{2C}=\frac{{ - 2 \sigma _n^2 \gamma L - \sqrt \Delta  }}{{2J}}$,
and $\Delta  = 4egiHI{{\left( {\sigma _n^2} \right)^4}}\gamma \left( {f + h\gamma } \right)\left( {f + j\gamma } \right)$.
\label{thm 2 ins}
\end{thm}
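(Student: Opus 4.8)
The plan is to follow the same two-stage logic used for Theorem~\ref{thm 1 ins}, the only essential new feature being that here the inner-optimal power $P_1^\star$ is itself affine in $P_2$. First I would insert $P_1^\star=\frac{P_2 f-\gamma\sigma_n^2}{\gamma g}$ (the branch that is active on the feasible set of~\eqref{eqn:SC Opt 5 Ins}) into the objective and clear the inner fractions. After the common factor $\gamma g$ cancels between numerator and denominator, the objective collapses to a ratio of two quadratics in $P_2$,
\[
C_s(P_2)=\frac{\bigl[P_2(\gamma g h+fe)+\gamma\sigma_n^2(g-e)\bigr](P_2 j+\sigma_n^2)}{(P_2 h+\sigma_n^2)\bigl[P_2(\gamma g j+fi)+\gamma\sigma_n^2(g-i)\bigr]},
\]
each factor being affine in $P_2$. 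This is the structural counterpart of the ratio-of-affine-functions that appeared in Theorem~\ref{thm 1 ins}; the extra degree is exactly what turns the single critical equation there into the richer case tree here.

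Next I would differentiate. Writing $C_s=N/M$ with $N,M$ the two quadratics above, $\operatorname{sign}(C_s')=\operatorname{sign}(N'M-NM')$, and because the two leading terms cancel, $N'M-NM'$ is a \emph{quadratic} in $P_2$. Collecting its coefficients (and factoring out the positive powers of $\sigma_n^2$ shared by each) yields a quadratic $Q(P_2)$ whose leading coefficient is $J$ and whose linear and constant coefficients carry the signs of $L$ and $K$, respectively. A direct computation then shows that the discriminant of $Q$ factors as $\Delta=4egi\,HI(\sigma_n^2)^4\gamma(f+h\gamma)(f+j\gamma)$; since $e,g,i,\gamma,(f+h\gamma),(f+j\gamma)$ are all positive, $\operatorname{sign}(\Delta)=\operatorname{sign}(HI)$. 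This single identity is the hinge of the whole argument, splitting it into the $HI<0$ and $HI>0$ regimes.

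When $HI<0$ the quadratic $Q$ has no real root, so $C_s'$ keeps the constant sign of $J$ throughout the feasible interval: $C_s$ is monotonically increasing for $J>0$ and decreasing for $J<0$. The optimum is therefore the largest feasible $P_2$ (namely $\delta$ or $\kappa$) when $J>0$, and the smallest feasible $P_2$ (namely $\mu$ or $\lambda_2$) when $J<0$, the particular endpoint being dictated by the active constraints of Proposition~\ref{Pro:feasibility ins} under the relevant sign pattern of $A,B$ (cases~\eqref{subeq:P1},~\eqref{subeq:P3},~\eqref{subeq:P4}). When $HI>0$, $Q$ has two real roots, and I would verify that the root $P_{2C}=\frac{-2\sigma_n^2\gamma L-\sqrt{\Delta}}{2J}$ is always the \emph{local maximizer}: for $J>0$ it is the smaller root (at which $C_s$ switches from increasing to decreasing) and for $J<0$ it is the larger one (same switch). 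The signs of $K$ and $L$ then fix the product and the sum of the two roots, hence their placement relative to $0$ --- whether both are positive, both negative, or straddle the origin --- which is precisely what decides whether $P_{2C}$ can lie in the positive feasibility domain $p_2$. Intersecting this with $p_2$ from Proposition~\ref{Pro:feasibility ins} leaves, in each branch, a finite candidate set consisting of $P_{2C}$ (when $P_{2C}\in p_2$) together with the two relevant endpoints; evaluating $C_s$ on that set and taking the maximizer produces the stated $\arg\max$ expressions.

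I expect the main obstacle to be twofold. The first is the bookkeeping algebra of the differentiation step: expanding $N'M-NM'$, verifying that the coefficients reduce to the defined $J,K,L$ and that the discriminant collapses to the clean product controlled by $HI$ --- the factorization is not obvious term by term and is where all the special structure of the channel gains is used. The second is, in the $HI>0$ regime, the careful enumeration matching each sign pattern $(J,K,L)$ to a root configuration and then intersecting it with the feasibility interval of Proposition~\ref{Pro:feasibility ins} to determine exactly which of $P_{2C}$, $\mu$, $\delta$, $\kappa$, $\lambda_2$ survive as candidates. By contrast, once the discriminant identity and the local-maximizer property of $P_{2C}$ are established, the monotonicity reasoning and the final comparisons are routine.
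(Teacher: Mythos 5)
Your proposal is correct and takes essentially the same approach as the paper: the paper's own proof of this theorem is a one-line reference to the proof of Theorem~\ref{thm 1 ins} (Appendix~\ref{Appen thm 1 ins}), which is precisely your argument — substitute the affine branch of $P_1^\star$, differentiate the resulting ratio of quadratics, observe that the derivative's numerator is a quadratic whose discriminant sign is governed by $HI$, and enumerate the sign patterns of the leading, linear, and constant coefficients ($J$, $L$, $K$, via the sum and product of roots) against the feasibility interval of Proposition~\ref{Pro:feasibility ins}. Your write-up in fact supplies more detail than the paper does for this theorem, including the explicit substituted objective and the verification that $P_{2C}$ is always the maximizing root.
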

%\end{mdframed}
%%%%%%%%%%%%%%%%%%%%%%%%%%%%%%%%%%%%%%%%%%%%%%%%%%%%%%%%%%%%%%%%%%%%%%%%%%%%%%%%%%%%%%%%%%%%%%%%%%
%\vspace{-0.2cm}
\begin{IEEEproof}
The proof can be obtained in the similar way to that of Theorem~\ref{thm 1 ins}.
\end{IEEEproof}
%%%%%%%%%%%%%%%%%%%%%%%%%%%%%%%%%%%%%%%%%%%%%%%%%%%%%%%%%%%%%%%%%%%%%%%%%%%%%%%%%%%%%%%%%%%%%%%%%%
{\rca For problem~\eqref{eqn:SC Opt 1 Ins eve dec}, the optimal solution of $P_1$ is as~\eqref{case 1 Ins} when 
$\chi=P_{\max_1}$. The closed-form solution for the $P_2$ is given in the following theorem.}
%%%%%%%%%%%%%%%%%%%%%%%%%%%%%%%%%%%%%%%%%%% Theorem %%%%%%%%%%%%%%%%%%%%%%%%%%%%%%%%%%%%%%%%%%%%%%
{\rca \begin{thm}
 Assume $a = {\left| {{h_{{U_1},{D_1}}}} \right|^2}$, $b = {\left| {{h_{{U_2},{D_2}}}} \right|^2}$,
$c = {\left| {{h_{{U_1},D_2}}} \right|^2}$, $d = {\left| {{h_{{U_2},D_1}}} \right|^2}$, 
$e = {\left| {{h_{{U_1},E}}} \right|^2}$, $A=b(a - e)\sigma  + d( - e\gamma \sigma  + c\gamma \sigma )$, $B=2bde(a\gamma \sigma  - c\gamma \sigma )$, $C= - bce\gamma {\sigma ^2} + a(bc\gamma {\sigma ^2} + d\gamma \sigma ( - e\gamma \sigma  + c\gamma \sigma ))$, $\psi  = \frac{{\sigma _n^2\left( {{{\left| {{h_{{U_1},{D_1}}}} \right|}^2} - {{\left| {{h_{{U_1},E}}} \right|}^2}} \right)}}{{{{\left| {{h_{{U_1},E}}} \right|}^2}{{\left| {{h_{{U_2},{D_1}}}} \right|}^2}}}$, ${\lambda _1} = \frac{{\gamma \left( {{P_{{{\max }_1}}}{{\left| {{h_{{U_1},{D_2}}}} \right|}^2} + \sigma _n^2} \right)}}{{{{\left| {{h_{{U_2},{D_2}}}} \right|}^2}}}$, ${\lambda _2} = \frac{{\gamma \sigma _n^2}}{{{{\left| {{h_{{U_2},{D_2}}}} \right|}^2}}}$, and $\varsigma  = \min \left( {{P_{{{\max }_2}}},\inf \psi ,\inf {\lambda _1}} \right)$. Then, optimal $P_2$ is given as follows:
\begin{enumerate}
  \item If $A<0$
%%%%%%%%%%%%%%%%%%%%%%%%%%%%%%%%%%%%%%%%%%%%%%%%%%%%%%%%%%%%%%%%%%%%%%%%%%%%%%%%%%%%%%%%%%%%%%%%%%
    \begin{align}
&P_2^ \star  = \lambda_2
\end{align}
%%%%%%%%%%%%%%%%%%%%%%%%%%%%%%%%%%%%%%%%%%%%%%%%%%%%%%%%%%%%%%%%%%%%%%%%%%%%%%%%%%%%%%%%%%%%%%%%%%
%%%%%%%%%%%%%%%%%%%%%%%%%%%%%%%%%%%%%%%%%%%%%%%%%%%%%%%%%%%%%%%%%%%%%%%%%%%%%%%%%%%%%%%%%%%%%%%%%%
  \item If $ A > 0 $
%%%%%%%%%%%%%%%%%%%%%%%%%%%%%%%%%%%%%%%%%%%%%%%%%%%%%%%%%%%%%%%%%%%%%%%%%%%%%%%%%%%%%%%%%%%%%%%%%%
  \begin{enumerate}
    \item If $C>0$ or $B>0$ and $C<0$
    \begin{align}
   \hspace{-1.5cm}  P_2^ \star  =
    \begin{cases}
P_{2C} & P_{2C} \in p_2
\\
\mathop {\arg }\limits_{{P_2}} \mathop {\max }\limits_{{P_2} \in \left\{ {\lambda_2 ,\psi } \right\}} {C_s} & P_{2C} \notin p_2
  \end{cases}
  \end{align}
%%%%%%%%%%%%%%%%%%%%%%%%%%%%%%%%%%%%%%%%%%%%%%%%%%%%%%%%%%%%%%%%%%%%%%%%%%%%%%%%%%%%%%%%%%%%%%%%%%
%%%%%%%%%%%%%%%%%%%%%%%%%%%%%%%%%%%%%%%%%%%%%%%%%%%%%%%%%%%%%%%%%%%%%%%%%%%%%%%%%%%%%%%%%%%%%%%%%%
    \item If $B<0$ and $C<0$
    \begin{align}
    P_2^ \star = \mathop {\arg }\limits_{{P_2}} \mathop {\max }\limits_{{P_2} \in \left\{ {{\lambda _2},\psi } \right\}} {C_s}
  \end{align}
%%%%%%%%%%%%%%%%%%%%%%%%%%%%%%%%%%%%%%%%%%%%%%%%%%%%%%%%%%%%%%%%%%%%%%%%%%%%%%%%%%%%%%%%%%%%%%%%%%
\end{enumerate}
\end{enumerate}
where $C_s$ is the secrecy rate, ${P_{2C}} = \frac{{ - B - \sqrt \Delta  }}{{2D}}$, $\Delta  = 4Aabcde\gamma (d\gamma \sigma  + b\sigma )$, $D=- bde(ab + cd\gamma )$, 
and $p_2$ is the feasibility domain of the problem.
\label{thm alr eve dec}
\end{thm}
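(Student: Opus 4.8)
The plan is to follow the same two-step structure already used for Theorem~\ref{thm 1 ins}. Having fixed the optimal $P_1^\star$ from \eqref{case 1 Ins} with $\chi=P_{\max_1}$, I substitute it into the secrecy rate \eqref{eqn:Cs eve dec} to reduce \eqref{eqn:SC Opt 1 Ins eve dec} to a single-variable problem in $P_2$, and then maximize over $P_2$ on the feasibility domain $p_2$. As in \eqref{P Opt case 1 & 2}, $P_1^\star$ has two branches. On the branch $P_1^\star=P_{\max_1}$ (i.e. $P_2\ge\lambda_1$) the crucial simplification for this SIC case is that the wiretap term $\log_2(1+P_{\max_1}e/\sigma_n^2)$ is independent of $P_2$, while the legitimate term decreases in $P_2$ through the interference $P_2 d$; hence $C_s$ is monotonically decreasing there, and by continuity of $P_1^\star$ at $P_2=\lambda_1$ the optimum of the whole problem is attained on the branch $P_1^\star=(P_2 b-\gamma\sigma_n^2)/(\gamma c)$. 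This is why the stated candidate set only involves $\lambda_2$, $\psi$ and the interior point $P_{2C}$, with the upper end of the admissible interval capped by $\varsigma=\min(P_{\max_2},\inf\psi,\inf\lambda_1)$.

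First I would pin down the feasible interval for this branch. The QoS constraint $I_{U_2-D_2}\ge\beta$ together with $P_1\ge 0$ forces $P_2\ge\lambda_2=\gamma\sigma_n^2/b$; positivity of the secrecy rate (the SIC analogue of Theorem~\ref{thm:Pos sec Ins}) requires $a/(P_2 d+\sigma_n^2)>e/\sigma_n^2$, i.e. $a>e$ and $P_2<\psi$ with $\psi=\sigma_n^2(a-e)/(ed)$, which is exactly the quantity in the statement; and the peak-power and branch conditions give $P_2\le\varsigma$. Thus $p_2=[\lambda_2,\varsigma]$ (or a subinterval thereof), and the maximizer is either an interior stationary point or one of these endpoints.

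Next I would substitute $P_1^\star=(P_2 b-\gamma\sigma_n^2)/(\gamma c)$ into \eqref{eqn:Cs eve dec} and differentiate. Since $\log$ is monotone I work with the ratio of the two SINR-augmented terms; after clearing denominators, the sign of $dC_s/dP_2$ is governed by the sign of a quadratic in $P_2$. Matching coefficients identifies $A$, $B$, $C$: $A$ is, up to a positive factor, the leading coefficient and therefore fixes the global monotone trend, while $B$ and $C$ are the remaining coefficients. Setting this quadratic to zero yields the stationary point $P_{2C}=(-B-\sqrt\Delta)/(2D)$ with $\Delta=4Aabcde\gamma\sigma_n^2(d\gamma+b)$; because $abcde\gamma\sigma_n^2(d\gamma+b)>0$ (and $D=-bde(ab+cd\gamma)<0$), a real stationary point exists precisely when $A>0$. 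This explains the top-level split: for $A<0$ there is no interior critical point, $C_s$ is monotone decreasing, and $P_2^\star=\lambda_2$; the sign choice $-\sqrt\Delta$ simply selects the root that can fall in $p_2$.

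For $A>0$ I would then use the signs of $B$ and $C$ to decide whether $P_{2C}$ is a maximizer and where it lies relative to $p_2$, exactly mirroring the concavity and root-location bookkeeping in the proof of Theorem~\ref{thm 1 ins}. When $P_{2C}\in p_2$ it is the maximizer; otherwise the maximum is attained at an endpoint, so I compare the objective at $\{\lambda_2,\psi\}$ and keep the larger value. Collecting these comparisons over the sub-cases $C>0$, $B>0\wedge C<0$, and $B<0\wedge C<0$ reproduces the branch list in the statement. The main obstacle is purely computational bookkeeping: verifying that the numerator of the derivative is exactly a quadratic with coefficients proportional to $A$, $B$, $C$, and then checking, for every sign pattern, that the listed finite candidate set $\{\lambda_2,\psi,P_{2C}\}$ is guaranteed to contain the global maximizer on $p_2$. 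No idea beyond the monotonicity-plus-endpoint argument already established for Theorem~\ref{thm 1 ins} is required.
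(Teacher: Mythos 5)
Your proposal is correct and takes essentially the same route as the paper: the paper omits this proof as ``similar to that of Theorem~\ref{thm 1 ins}'', whose Appendix~\ref{Appen thm 1 ins} argument is exactly your reduction to the QoS-tight branch followed by differentiating, reading the sign of the resulting quadratic $DP_2^2+BP_2+C$ via its discriminant $\Delta\propto A$ and coefficient signs, and comparing the interior critical point $P_{2C}$ against the endpoints of the feasibility domain. One minor slip---an early sentence calls $A$ ``the leading coefficient,'' whereas the leading coefficient is $D=-bde(ab+cd\gamma)<0$ and $A$ only governs the discriminant---is superseded by your own later, correct treatment, so it does not affect the argument.
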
}
%%%%%%%%%%%%%%%%%%%%%%%%%%%%%%%%%%%%%%%%%%%%%%%%%%%%%%%%%%%%%%%%%%%%%%%%%%%%%%%%%%%%%%%%%%%%%%%%%%
%%%%%%%%%%%%%%%%%%%%%%%%%%%%%%%%%%%%%%%%%%%%%%%%%%%%%%%%%%%%%%%%%%%%%%%%%%%%%%%%%%%%%%%%%%%%%%%%%%
\vspace{-0.1cm}
{\rca \begin{proof}
The proof is similar to that of Theorem~\ref{thm 1 ins}, thus was omitted.
\end{proof}}
%%%%%%%%%%%%%%%%%%%%%%%%%%%%%%%%%%%%%%%%%%%%%%%%%%%%%%%%%%%%%%%%%%%%%%%%%%%%%%%%%%%%%%%%%%%%%%%%%%
%%%%%%%%%%%%%%%%%%%%%%%%%%%%%%%%%%%%%%%%%%%%%%%%%%%%%%%%%%%%%%%%%%%%%%%%%%%%%%%%%%%%%%%%%%%%%%%%%%
\section{Problem Formulation: Egoistic Scenario} \label{sec:Problem Formulation noncooperative}
%%%%%%%%%%%%%%%%%%%%%%%%%%%%%%%%%%%%%%%%%%%%%%%%%%%%%%%%%%%%%%%%%%%%%%%%%%%%%%%%%%%%%%%%%%%%%%%%%%
In this section, we develop closed-form solutions for the case when $U_2$ is selfish from the
view point of $U_1$'s secrecy rate, and adjusts its transmission power just to meet its QoS, i.e.,
SINR=$\gamma$. Later, we compare this case with respect to the altruistic scenario. {\rca If one of the Cases~\ref{subeq:AD1},~\ref{subeq:AD2}, or
~\ref{subeq:AD3} holds and $U_2$ is selfish,~\eqref{eqn:SC Opt 3 Ins} can be written as}
%%%%%%%%%%%%%%%%%%%%%%%%%%%%%%%%%%%%%%%%%%% EQUATION %%%%%%%%%%%%%%%%%%%%%%%%%%%%%%%%%%%%%%%%%%%%%
\begin{align}
&\mathop {\max }\limits_{{P_1},{P_2}} \,\,\,{\frac{{1 + \frac{{{P_1}{{\left| {{h_{{U_1},{D_1}}}} \right|}^2}}}
{{{P_2}{{\left| {{h_{{U_2},{D_1}}}} \right|}^2} + \sigma _n^2}}}}{{1 + \frac{{{P_1}{{\left| {{h_{{U_1},E}}} \right|}^2}}}
{{{P_2}{{\left| {{h_{{U_2},E}}} \right|}^2} + \sigma _n^2}}}}}
\nonumber\\
&\,\, \text{s.\,t.} \,\, {P_1} \le {P_{\max_1 }},
\, {\rca {P_1} \mathop {\gtrless}\limits_{\eqref{subeq:AD3}}^{\eqref{subeq:AD1}}   \omega},
\,  {P_2} \le {P_{\max_2 }},
\, \frac{{{P_2}{{\left| {{h_{{U_2},{D_2}}}} \right|}^2}}}
{{{P_1}{{\left| {{h_{{U_1},{D_2}}}} \right|}^2} + \sigma _n^2}}  = \gamma.
\label{eqn:SC Opt Ins limited power}
\end{align}
%%%%%%%%%%%%%%%%%%%%%%%%%%%%%%%%%%%%%%%%%%%%%%%%%%%%%%%%%%%%%%%%%%%%%%%%%%%%%%%%%%%%%%%%%%%%%%%%%%
{\rca In Case~\ref{subeq:AD2}, any $P_1$ ensures that the eavesdropper cannot employ SIC, 
so no additional constraint over $P_1$ is necessary. For Case~\ref{subeq:AD4}, the problem is solved as follows
%%%%%%%%%%%%%%%%%%%%%%%%%%%%%%%%%%%%%%%%%%% EQUATION %%%%%%%%%%%%%%%%%%%%%%%%%%%%%%%%%%%%%%%%%%%%%
\begin{align}
&\mathop {\max }\limits_{{P_1},{P_2}} \,\,\,\frac{{1 + \frac{{{P_1}{{\left| {{h_{{U_1},{D_1}}}} \right|}^2}}}{{{P_2}{{\left| {{h_{{U_2},{D_1}}}} \right|}^2} + \sigma _n^2}}}}{{1 + \frac{{{P_1}{{\left| {{h_{{U_1},E}}} \right|}^2}}}{{\sigma _n^2}}}}
\nonumber\\
&\,\, \text{s.\,t.} \,\, {P_1} \le {P_{\max_1 }},
\,  {P_2} \le {P_{\max_2 }},
\, \frac{{{P_2}{{\left| {{h_{{U_2},{D_2}}}} \right|}^2}}}
{{{P_1}{{\left| {{h_{{U_1},{D_2}}}} \right|}^2} + \sigma _n^2}}  = \gamma.
\label{eqn:SC Opt Ins limited power eve dec}
\end{align}}
%%%%%%%%%%%%%%%%%%%%%%%%%%%%%%%%%%%%%%%%%%%%%%%%%%%%%%%%%%%%%%%%%%%%%%%%%%%%%%%%%%%%%%%%%%%%%%%%%%
{\rca We first solve~\eqref{eqn:SC Opt Ins limited power} and then~\eqref{eqn:SC Opt Ins limited power eve dec}.} 
Using the last constraint in~\eqref{eqn:SC Opt Ins limited power}, we can directly derive the solution for $P_2$ as ${P_2} = \gamma
\frac{{\left( {{P_1}{{\left| {{h_{{U_1},{D_2}}}} \right|}^2} + \sigma _n^2} \right)}}{{{{\left| {{h_{{U_2},{D_2}}}} \right|}^2}}}$ and replace it with the
corresponding value. Consequently, we can rewrite~\eqref{eqn:SC Opt Ins limited power} as
%%%%%%%%%%%%%%%%%%%%%%%%%%%%%%%%%%%%%%%%%%% EQUATION %%%%%%%%%%%%%%%%%%%%%%%%%%%%%%%%%%%%%%%%%%%%%
\begin{align}
&\mathop {\max }\limits_{{P_1}} \,\,\, \frac{{1 + \frac{{{P_1}{{\left| {{h_{{U_1},{D_1}}}} \right|}^2}}}{{\gamma \frac{{\left( {{P_1}{{\left| {{h_{{U_1},{D_2}}}} \right|}^2} + \sigma _n^2} \right)}}{{{{\left| {{h_{{U_2},{D_2}}}} \right|}^2}}}{{\left| {{h_{{U_2},{D_1}}}} \right|}^2} + \sigma _n^2}}}}{{1 + \frac{{{P_1}{{\left| {{h_{{U_1},E}}} \right|}^2}}}{{\gamma \frac{{\left( {{P_1}{{\left| {{h_{{U_1},{D_2}}}} \right|}^2} + \sigma _n^2} \right)}}{{{{\left| {{h_{{U_2},{D_2}}}} \right|}^2}}}{{\left| {{h_{{U_2},E}}} \right|}^2} + \sigma _n^2}}}}
\nonumber\\
&\,\, \text{s.\,t.}  \,\, {P_1} \le {P_{{{\max }_1}}},
\,\, {\rca {P_1} \mathop {\gtrless}\limits_{\eqref{subeq:AD3}}^{\eqref{subeq:AD1}}   \omega},
\,\, {P_1} \le \frac{{{P_{{{\max }_2}}}{{\left| {{h_{{U_2},{D_2}}}} \right|}^2} - \gamma \sigma _n^2}}{{\gamma {{\left| {{h_{{U_1},{D_2}}}} \right|}^2}}}.
\label{eqn:SC Opt Ins limited power P1}
\end{align}
%%%%%%%%%%%%%%%%%%%%%%%%%%%%%%%%%%%%%%%%%%%%%%%%%%%%%%%%%%%%%%%%%%%%%%%%%%%%%%%%%%%%%%%%%%%%%%%%%%
Since the minimum value for the secrecy rate is zero, the objective function in~\eqref{eqn:SC Opt Ins limited power P1} should be greater or equal to one.
Proposition~\ref{Propo Ins lim pow} gives the required condition on $P_1$ in order to have a positive secrecy
rate. According to the channel conditions, these constraints should be added
to~\eqref{eqn:SC Opt Ins limited power P1}.
%%%%%%%%%%%%%%%%%%%%%%%%%%%%%%%%%%%%%%%%%%% Proposition %%%%%%%%%%%%%%%%%%%%%%%%%%%%%%%%%%%%%%%%%%
%\begin{mdframed}[style=MyFrame]
\begin{proposition}
In order for the objective function in~\eqref{eqn:SC Opt Ins limited power P1} to result in a non-negative secrecy rate, $P_1$ should have the following bounds:
\begin{alignat}{2}
&{P_1} > \frac{A'}{B'} \qquad \text{if} \qquad && A' > 0, B' > 0, \refstepcounter{equation} \subeqn \label{subeq:P1 lim}
%\\
%&{P_1} = \emptyset     \qquad && A' > 0, B' < 0, \subeqn \label{subeq:P2 lim}
\\
&{P_1} > 0             \qquad \,\,\,\,\, \text{if} \qquad && A' < 0, B' > 0, \subeqn \label{subeq:P3 lim}
\\
&{P_1} < \frac{A'}{B'}  \qquad \text{if} \qquad && A' < 0, B' < 0, \subeqn \label{subeq:P4 lim}
\end{alignat}
where $A'={\left( {\left( {1 + c} \right)d - b\left( {1 + e} \right)} \right)\sigma _n^2}$,
$B'={a\left( {be - cd} \right)}$. Also, for the case $A' > 0$ and $B' < 0$, irrespective of the value for $P_2$,
no positive secrecy rate is possible for $U_1$. The values for $b$, $c$, $d$ and $e$ are defined
in Theorem~\ref{thm 1 Lim Pow}.
\label{Propo Ins lim pow}
\end{proposition}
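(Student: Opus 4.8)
The plan is to characterize non-negativity of the secrecy rate directly through the objective of~\eqref{eqn:SC Opt Ins limited power P1}. Since the secrecy rate equals $\log_2$ of that ratio and $\log_2$ is increasing with $\log_2 1 = 0$, a non-negative secrecy rate is equivalent to the ratio being at least one, i.e. to its numerator being no smaller than its denominator. Both numerator and denominator have the form $1+(\text{SINR})$, so after subtracting the common unit the condition collapses to the single requirement that the post-substitution signal-to-interference-plus-noise ratio seen at $D_1$ dominates the one seen at the eavesdropper.

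The second step exploits that $U_2$'s power has already been eliminated. The QoS equality forces $P_2 = \gamma(P_1|h_{U_1,D_2}|^2 + \sigma_n^2)/|h_{U_2,D_2}|^2$, which is affine in $P_1$; substituting this into each SINR makes both the $D_1$ and the eavesdropper SINRs a ratio of two affine functions of $P_1$. I would then clear the two strictly positive denominators and cancel the common factor $P_1$, which is legitimate in the regime $P_1>0$ of interest (at $P_1=0$ the secrecy rate is identically zero, consistent with the weak ``non-negative'' statement). The decisive and pleasantly simple observation is that after this cancellation the surviving inequality is \emph{linear} in $P_1$, of the form $B'P_1 \ge A'$ with $A'$ and $B'$ exactly the quantities stated in terms of the $b,c,d,e$ of Theorem~\ref{thm 1 Lim Pow}; there the $+1$ factors inside $A'$ are the additive noise $\sigma_n^2$ left over after the substitution, and the constant $\gamma/|h_{U_2,D_2}|^2$ is absorbed into $c$ and $e$.

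The last step is pure sign bookkeeping on $B'P_1 \ge A'$, mirroring the structure used for $P_2$ in Theorem~\ref{thm:Pos sec Ins}. When $B'>0$ the bound reads $P_1 \ge A'/B'$: if $A'>0$ this is the active constraint~\eqref{subeq:P1 lim}, whereas if $A'<0$ the threshold is negative and every admissible $P_1>0$ works, giving~\eqref{subeq:P3 lim}. When $B'<0$ the inequality flips to $P_1 \le A'/B'$: if $A'<0$ the threshold is positive, yielding~\eqref{subeq:P4 lim}, while if $A'>0$ the threshold is negative and no positive $P_1$ can satisfy it, hence no positive secrecy rate is attainable, which is the asserted exceptional case $A'>0,\,B'<0$.

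I do not expect a genuine obstacle here; the only points demanding care are verifying that both SINR denominators stay strictly positive (immediate, since $\gamma$, $\sigma_n^2$ and all channel gains are positive) so that the cross-multiplication preserves the inequality direction, and tracking the sign of $B'$ when dividing, since that is what decides whether the bound on $P_1$ is a lower or an upper one. The conceptual crux is simply recognizing that enforcing the QoS with equality turns the two-variable problem into a one-variable, affine-in-$P_1$ comparison, which is precisely why the bound lands on $P_1$ here rather than on $P_2$ as in Theorem~\ref{thm:Pos sec Ins}.
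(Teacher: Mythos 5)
Your proof is correct and follows exactly the route the paper intends: the paper omits this proof as ``similar to that of Theorem~\ref{thm:Pos sec Ins}'', whose argument (reduce non-negativity of the log-ratio to a comparison of the two SINRs, cross-multiply the strictly positive denominators, and do sign bookkeeping on the resulting linear inequality) is precisely what you carried out after eliminating $P_2$ via the QoS equality. Your cancellation of $P_1$ and the resulting condition $B'P_1 \ge A'$ with $A'=\left(\left(1+c\right)d-b\left(1+e\right)\right)\sigma_n^2$ and $B'=a\left(be-cd\right)$ reproduces the proposition's four cases exactly, so there is nothing to add.
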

%\end{mdframed}
%%%%%%%%%%%%%%%%%%%%%%%%%%%%%%%%%%%%%%%%%%%%%%%%%%%%%%%%%%%%%%%%%%%%%%%%%%%%%%%%%%%%%%%%%%%%%%%%%%
\vspace{0.1cm}
\begin{proof}
The proof is similar to that of Theorem~\ref{thm:Pos sec Ins}, thus was omitted.
\end{proof}

According to~Proposition~\ref{Propo Ins lim pow}, we can rewrite~\eqref{eqn:SC Opt Ins limited power P1} as
%%%%%%%%%%%%%%%%%%%%%%%%%%%%%%%%%%%%%%%%%%% EQUATION %%%%%%%%%%%%%%%%%%%%%%%%%%%%%%%%%%%%%%%%%%%%%
\begin{align}
&\mathop {\max }\limits_{{P_1}} \,\,\, \frac{{1 + \frac{{{P_1}{{\left| {{h_{{U_1},{D_1}}}} \right|}^2}}}{{\gamma \frac{{\left( {{P_1}{{\left| {{h_{{U_1},{D_2}}}} \right|}^2} + \sigma _n^2} \right)}}{{{{\left| {{h_{{U_2},{D_2}}}} \right|}^2}}}{{\left| {{h_{{U_2},{D_1}}}} \right|}^2} + {\sigma _n^2}}}}}{{1 + \frac{{{P_1}{{\left| {{h_{{U_1},E}}} \right|}^2}}}{{\gamma \frac{{\left( {{P_1}{{\left| {{h_{{U_1},{D_2}}}} \right|}^2} + \sigma _n^2} \right)}}{{{{\left| {{h_{{U_2},{D_2}}}} \right|}^2}}}{{\left| {{h_{{U_2},E}}} \right|}^2} + {\sigma_n^2}}}}}
\nonumber\\
& \,\, \text{s.\,t.}  \,\,\,\,\,\,\,\,\, {P_1} \le {P_{{{\max }_1}}},
\,\, {\rca {P_1} \mathop {\gtrless}\limits_{\eqref{subeq:AD3}}^{\eqref{subeq:AD1}}   \omega},
\,\, {P_1}  \mathop {\gtreqless}\limits_{\eqref{subeq:P4 lim}}^{\eqref{subeq:P1 lim}} \frac{A'}{B'}=\varphi_3,
\nonumber\\
&  \,\,\,\,\,\,\,\,\,\,\,\,\,\,\,\,\,\,\,  {P_1} \le \frac{{{P_{{{\max }_2}}}{{\left| {{h_{{U_2},{D_2}}}} \right|}^2} - \gamma \sigma _n^2}}{{\gamma {{\left| {{h_{{U_1},{D_2}}}} \right|}^2}}}=\lambda_3.
\label{eqn:SC Opt Ins limited power P1 1}
\end{align}
%%%%%%%%%%%%%%%%%%%%%%%%%%%%%%%%%%%%%%%%%%%%%%%%%%%%%%%%%%%%%%%%%%%%%%%%%%%%%%%%%%%%%%%%%%%%%%%%%%
Assuming that~\eqref{eqn:SC Opt Ins limited power P1 1} is feasible, we give a closed-form solution
for~\eqref{eqn:SC Opt Ins limited power P1 1} in Theorem~\ref{thm 1 Lim Pow}.
%%%%%%%%%%%%%%%%%%%%%%%%%%%%%%%%%%%%%%%%%%% Theorem %%%%%%%%%%%%%%%%%%%%%%%%%%%%%%%%%%%%%%%%%%%%%%
%\begin{mdframed}[style=MyFrame]
\begin{thm}
Assuming $a ={\left| {{h_{{U_1},{D_2}}}} \right|^2}$, $b = {\left| {{h_{{U_1},{D_1}}}} \right|^2}$,
$c = \gamma \frac{{{{\left| {{h_{{U_2},{D_1}}}} \right|}^2}}}{{{{\left| {{h_{{U_2},{D_2}}}} \right|}^2}}}$, $d = {\left| {{h_{{U_1},{E}}}} \right|^2}$,
$e = \gamma \frac{{{{\left| {{h_{{U_2},E}}} \right|}^2}}}{{{{\left| {{h_{{U_2},{D_2}}}} \right|}^2}}}$, $Q=c-e$,
$R=-(1 + c) d + a (c - e) + b (1 + e)$, $S=- a{c^2}d\left( {1 + e} \right) + b\left( {e\left( {d + ae} \right) + c\left( { - d + a{e^2}} \right)} \right)$,
$T=\frac{-A'}{\sigma _n^2}=-(1 + c) d + b (1 + e)$, $U=\frac{B'}{a}=be - cd$, 
$\eta  = \min \left\{ {{P_{{{\max }_1}}},{\lambda _3}} \right\}$, {\rca $\eta ' = \min \left\{ {\eta ,\omega } \right\}$}, $\theta=\min \left\{ {P_{{{\max }_1}}},  \lambda_3 ,  \varphi_3 \right\}$, and {\rca $\theta ' = \min \left\{ {\theta ,\omega {\rm{ }}} \right\}$}~\eqref{eqn:SC Opt Ins limited power P1} can be solved as follows
\begin{enumerate}
  \item If $QR < 0$

  \begin{enumerate}

    \item If $S>0$, {\rca $A''>0,B''>0$}, (or {\rca $A''<0,B''>0$})
    \begin{align}
P_1^\star = \begin{cases}
\eta & Q>0, R<0
\\
\theta & B'<0, Q<0, R>0
\\
\eta & B'>0, Q<0, R>0
\end{cases}
    \end{align}
		
		    \item If $S>0$, {\rca $A''<0,B''<0$}
    \begin{align}
P_1^\star = \begin{cases}
\eta' & Q>0, R<0
\\
\theta' & B'<0, Q<0, R>0
\\
\eta' & B'>0, Q<0, R>0
\end{cases}
    \end{align}

    \item If $S<0$, {\rca $A''>0,B''>0$}
    \begin{align}
P_1^\star = \begin{cases}
{\rca \max \left\{ {{\varphi _3},\omega } \right\}}   & B'>0, Q>0, R<0
\\
{\rca \omega} & Q<0, R>0
\end{cases}
    \end{align}

		    \item If $S<0$, {\rca $A''<0,B''<0$} (or {\rca $A''<0,B''>0$})
    \begin{align}
P_1^\star = \begin{cases}
\varphi _3& B'>0, Q>0, R<0
\\
0 & Q<0, R>0
\end{cases}
    \end{align}

  \end{enumerate}

  \item If $QR > 0$

  \begin{enumerate}

	\item If $S>0$, $T<0$ and $B'>0$
    %\begin{align}
%P_1^ \star & = \mathop {\arg }\limits_{{P_2}} \mathop {\max }\limits_{{P_2} \in \left\{ {{\varphi _3},\eta } \right\}} {C_s}
    %\end{align}
		{\rca
		    \begin{align}
P_1^ \star = \begin{cases}
\mathop {\arg }\limits_{{P_1}} \mathop {\max }\limits_{{P_1} \in \left\{ { \max \left\{ {{\varphi _3},\omega } \right\} ,\eta } \right\}} {C_s}  & A''>0, B'' > 0
\\
\mathop {\arg }\limits_{{P_1}} \mathop {\max }\limits_{{P_1} \in \left\{ {{\varphi _3},\eta' } \right\}} {C_s}     & A''<0, B'' < 0
\\
\mathop {\arg }\limits_{{P_1}} \mathop {\max }\limits_{{P_1} \in \left\{ {{\varphi _3},\eta } \right\}} {C_s}     & A''<0, B''>0
    \end{cases}
    \end{align}
}

       \item If $S<0$, $T>0$, {\rca $A''>0$, $B''>0$}
    \begin{align}
P_1^ \star = \begin{cases}
P_{1C}                                     & P_{1C} \in p_1
\\
\mathop {\arg }\limits_{{P_1}} \mathop {\max }\limits_{{P_1} \in \left\{ {{\rca \omega},\eta } \right\}} {C_s}  & P_{1C} \notin p_1, B'>0
\\
\mathop {\arg }\limits_{{P_1}} \mathop {\max }\limits_{{P_1} \in \left\{ {{\rca \omega},\theta } \right\}} {C_s}  & P_{1C} \notin p_1, B'<0
\end{cases}
    \end{align}
		
	{\rca	
		       \item If $S<0$ and $T>0$, {\rca $A''<0$, $B''<0$}
    \begin{align}
P_1^ \star = \begin{cases}
P_{1C}                                     & P_{1C} \in p_1
\\
\mathop {\arg }\limits_{{P_1}} \mathop {\max }\limits_{{P_1} \in \left\{ {0, {\rca \eta' } } \right\}} {C_s}  & P_{1C} \notin p_1, B'>0
\\
\mathop {\arg }\limits_{{P_1}} \mathop {\max }\limits_{{P_1} \in \left\{ {0,{\rca \theta' } } \right\}} {C_s}  & P_{1C} \notin p_1, B'<0
\end{cases}
    \end{align}
		}

			{\rca	
		       \item If $S<0$ and $T>0$, {\rca $A''<0$, $B''>0$}
    \begin{align}
P_1^ \star = \begin{cases}
P_{1C}                                     & P_{1C} \in p_1
\\
\mathop {\arg }\limits_{{P_1}} \mathop {\max }\limits_{{P_1} \in \left\{ {0, {\rca \eta } } \right\}} {C_s}  & P_{1C} \notin p_1, B'>0
\\
\mathop {\arg }\limits_{{P_1}} \mathop {\max }\limits_{{P_1} \in \left\{ {0,{\rca \theta } } \right\}} {C_s}  & P_{1C} \notin p_1, B'<0
\end{cases}
    \end{align}
		}

    \item If $S>0$, $T>0$, $U<0$ and {\rca $A''>0$, $B''>0$}
    \begin{align}
P_1^ \star = \begin{cases}
\mathop {\arg }\limits_{{P_1}} \mathop {\max }\limits_{{P_1} \in \left\{ {{P_{1C}},{\rca \omega},\theta } \right\}} {C_s}     & P_{1C} \in p_1
\\
\mathop {\arg }\limits_{{P_1}} \mathop {\max }\limits_{{P_1} \in \left\{ {{\rca \omega},\theta } \right\}} {C_s} & P_{1C} \notin p_1
    \end{cases}
    \end{align}

		{\rca
		    \item If $S>0$, $T>0$, $U<0$ and {\rca $A''<0$, $B''<0$}
    \begin{align}
P_1^ \star = \begin{cases}
\mathop {\arg }\limits_{{P_1}} \mathop {\max }\limits_{{P_1} \in \left\{ {{P_{1C}},0,\theta' } \right\}} {C_s}     & P_{1C} \in p_1
\\
\mathop {\arg }\limits_{{P_1}} \mathop {\max }\limits_{{P_1} \in \left\{ {0,\theta' } \right\}} {C_s} & P_{1C} \notin p_1
    \end{cases}
    \end{align}

}

		{\rca
		    \item If $S>0$, $T>0$, $U<0$ and {\rca $A''<0$, $B''>0$}
    \begin{align}
P_1^ \star = \begin{cases}
\mathop {\arg }\limits_{{P_1}} \mathop {\max }\limits_{{P_1} \in \left\{ {{P_{1C}},0,\theta } \right\}} {C_s}     & P_{1C} \in p_1
\\
\mathop {\arg }\limits_{{P_1}} \mathop {\max }\limits_{{P_1} \in \left\{ {0,\theta } \right\}} {C_s} & P_{1C} \notin p_1
    \end{cases}
    \end{align}

}

        \item If $S<0$, $T<0$, $U>0$, {\rca $A''>0$, $B''>0$} (or {\rca $A''<0$, $B''>0$})
    \begin{align}
P_1^ \star = \begin{cases}
\mathop {\arg }\limits_{{P_1}} \mathop {\max }\limits_{{P_1} \in \left\{ {{P_{1C}},{ {\rca \max \left\{ {{\varphi _3},\omega } \right\} } },\eta } \right\}} {C_s}     & P_{1C} \in p_1
\\
\mathop {\arg }\limits_{{P_1}} \mathop {\max }\limits_{{P_1} \in \left\{ {  {\rca \max \left\{ {{\varphi _3},\omega } \right\} },\eta } \right\}} {C_s}     & P_{1C} \notin p_1
    \end{cases}
    \end{align}

		{\rca
		
		        \item If $S<0$, $T<0$, $U>0$ and {\rca $A''<0$, $B''<0$}
    \begin{align}
P_1^ \star = \begin{cases}
\mathop {\arg }\limits_{{P_1}} \mathop {\max }\limits_{{P_1} \in \left\{ {{P_{1C}},{\varphi _3},\eta' } \right\}} {C_s}     & P_{1C} \in p_1
\\
\mathop {\arg }\limits_{{P_1}} \mathop {\max }\limits_{{P_1} \in \left\{ {{\varphi _3},\eta' } \right\}} {C_s}     & P_{1C} \notin p_1
    \end{cases}
    \end{align}
}

    \item If $S>0$, $T>0$ and $U>0$
{\rca
    \begin{align}
P_1^\star = \begin{cases}
\eta   & A''>0, B''>0 
\\
\eta   & A''<0, B''>0
\\
\eta'  & A''<0, B''<0
\end{cases}
    \end{align}
}
  \end{enumerate}

\end{enumerate}
where ${P_{1C}} = \frac{{ - 2a\left( {1 + c} \right)\left( {1 + e} \right)U\sigma _n^2 - \sqrt \Delta  }}{{2aS}}$
and $\Delta =4abdQR{{\left( {\sigma _n^2} \right)^4}}\left( {1 + c} \right)\left( {1 + e} \right)$.
\label{thm 1 Lim Pow}
\end{thm}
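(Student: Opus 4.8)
The plan is to reduce the two-variable problem to a single-variable optimization and then carry out an elementary calculus argument whose only difficulty is bookkeeping. Since $U_2$ is selfish, its QoS constraint in \eqref{eqn:SC Opt Ins limited power} is active, so I substitute $P_2=\gamma(P_1|h_{U_1,D_2}|^2+\sigma_n^2)/|h_{U_2,D_2}|^2$ into the objective and arrive at the single-variable problem \eqref{eqn:SC Opt Ins limited power P1 1}. Using the abbreviations $a,b,c,d,e$ of the theorem, both the $U_1$--$D_1$ and the $U_1$--$E$ SINR denominators become affine in $P_1$, so the objective collapses to a ratio of two quadratics,
\begin{align}
f(P_1)=\frac{\big[(ca+b)P_1+(c+1)\sigma_n^2\big]\big[ea P_1+(e+1)\sigma_n^2\big]}{\big[ca P_1+(c+1)\sigma_n^2\big]\big[(ea+d)P_1+(e+1)\sigma_n^2\big]}.
\label{eqn:plan ratio}
\end{align}
A useful sanity check is $f(0)=1$, i.e.\ zero secrecy rate when $U_1$ is silent, consistent with Proposition~\ref{Propo Ins lim pow}.

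Next I differentiate $f$. Writing $f=N/M$ with $N,M$ the quadratics in \eqref{eqn:plan ratio}, the numerator of $f'$ is $N'M-NM'$; because $N$ and $M$ are both quadratic the cubic terms cancel, so $N'M-NM'$ is itself a quadratic in $P_1$ whose leading coefficient is $aS$, whose linear coefficient is $2a(1+c)(1+e)U\sigma_n^2$, and whose discriminant is exactly $\Delta=4abdQR(\sigma_n^2)^4(1+c)(1+e)$. A short computation also shows its constant term is a positive multiple of $T$, so $\operatorname{sign} f'(0)=\operatorname{sign} T$. The stationary points are therefore the roots of this quadratic, one of which is the candidate $P_{1C}$ quoted in the statement, and the sign of $\Delta$ — equivalently the sign of $QR$, since every other factor is positive — decides whether real stationary points exist at all.

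With this in hand the argument splits exactly along the theorem's branches. When $QR<0$ there is no interior stationary point, so $f$ is monotone on the feasible interval and the optimum is an endpoint; the signs of $S$ and $T$ fix whether $f$ increases or decreases and hence select the correct endpoint. When $QR>0$ there are two stationary points, and the signs of $S$, $T$, $U$ determine their nature (maximizer versus minimizer) and ordering, leaving $P_{1C}$ as the relevant interior maximizer. In every branch I then intersect the admissible set with the remaining constraints: the positive-secrecy bound $P_1\gtrless\varphi_3$ from Proposition~\ref{Propo Ins lim pow}, the QoS/peak bounds $P_1\le\lambda_3$ and $P_1\le P_{\max_1}$, and the anti-SIC constraint $P_1\gtrless\omega$ inherited from \eqref{subeq:AD1}--\eqref{subeq:AD3}. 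This last constraint is precisely why each branch subdivides further on the signs of $A''$ and $B''$: those sign patterns switch the admissible endpoints between $\eta,\theta$ and their $\omega$-clipped versions $\eta',\theta'$, and between $\varphi_3$ and $\max\{\varphi_3,\omega\}$. The optimum is then read off by comparing $f$ at the surviving candidates, namely $P_{1C}$ (whenever $P_{1C}\in p_1$) together with the interval endpoints.

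I expect the main obstacle to be organizational rather than analytic: once \eqref{eqn:plan ratio} is obtained the optimization is elementary, but keeping track of how the sign patterns of $(Q,R)$, $S$, and $(T,U)$ interact with the feasibility bounds and with the $(A'',B'')$ cases generates the large case tree in the statement. The delicate step inside each leaf is verifying that $P_{1C}$, when admissible, is a maximizer and not a minimizer, and that the claimed endpoint is indeed optimal. Since the reduction, the quadratic-derivative structure, and the endpoint-versus-stationary-point comparison are identical in spirit to the proof of Theorem~\ref{thm 1 ins}, I would present the reduction and the derivative once and then dispatch the branches in the same order as there.
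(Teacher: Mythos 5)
Your proposal is correct and follows essentially the same route as the paper: the paper proves this theorem by pointing to the proof of Theorem~\ref{thm 1 ins} (Appendix~\ref{Appen thm 1 ins}), i.e., reduce to a single-variable problem via the active QoS constraint, observe that the derivative's numerator is a quadratic in the power variable, and branch on the sign of its discriminant ($\propto QR$) and of its coefficients ($\propto aS$, $\propto U$, $\propto T$), comparing admissible critical points with the endpoints of the feasible interval clipped by $\varphi_3$, $\lambda_3$, $P_{\max_1}$, and $\omega$. Your identification of the leading, linear, and constant coefficients and of $\Delta$ matches the paper's quantities up to positive factors of $\sigma_n^2$, which do not affect any sign argument.
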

%\end{mdframed}
%%%%%%%%%%%%%%%%%%%%%%%%%%%%%%%%%%%%%%%%%%%%%%%%%%%%%%%%%%%%%%%%%%%%%%%%%%%%%%%%%%%%%%%%%%%%%%%%%%
\vspace{0.1cm}
\begin{proof}
The proof can be obtained in the similar way to that of Theorem~\ref{thm 1 ins}.
\end{proof}
%%%%%%%%%%%%%%%%%%%%%%%%%%%%%%%%%%%%%%%%%%%%%%%%%%%%%%%%%%%%%%%%%%%%%%%%%%%%%%%%%%%%%%%%%%%%%%%%%%
{\rca For problem~\eqref{eqn:SC Opt Ins limited power eve dec}, the closed-form solution for $P_1$ is given in the following theorem.}
%%%%%%%%%%%%%%%%%%%%%%%%%%%%%%%%%%%%%%%%%%% Theorem %%%%%%%%%%%%%%%%%%%%%%%%%%%%%%%%%%%%%%%%%%%%%%
{\rca \begin{thm}
 Assume $f = {\left| {{h_{{U_1},{D_1}}}} \right|^2}$, $g = {\left| {{h_{{U_1},{D_2}}}} \right|^2}$,
$h = {\left| {{h_{{U_2},D_1}}} \right|^2}$, $i = {\left| {{h_{{U_2},D_2}}} \right|^2}$, $j = {\left| {{h_{{U_1},E}}} \right|^2}$, $E=ad + bc\gamma  - e(d + c\gamma )$, $F=ad - e(d + c\gamma )$, and $\tau  = \min \left( {{P_{{{\max }_2}}},\inf \rho} \right)$. Then, optimal $P_1$ is as follows:
\begin{enumerate}
  \item If $E<0$
%%%%%%%%%%%%%%%%%%%%%%%%%%%%%%%%%%%%%%%%%%%%%%%%%%%%%%%%%%%%%%%%%%%%%%%%%%%%%%%%%%%%%%%%%%%%%%%%%%
    \begin{align}
&P_2^ \star  = 0
\end{align}
%%%%%%%%%%%%%%%%%%%%%%%%%%%%%%%%%%%%%%%%%%%%%%%%%%%%%%%%%%%%%%%%%%%%%%%%%%%%%%%%%%%%%%%%%%%%%%%%%%
%%%%%%%%%%%%%%%%%%%%%%%%%%%%%%%%%%%%%%%%%%%%%%%%%%%%%%%%%%%%%%%%%%%%%%%%%%%%%%%%%%%%%%%%%%%%%%%%%%
  \item If $ E > 0 $
%%%%%%%%%%%%%%%%%%%%%%%%%%%%%%%%%%%%%%%%%%%%%%%%%%%%%%%%%%%%%%%%%%%%%%%%%%%%%%%%%%%%%%%%%%%%%%%%%%
  \begin{enumerate}
    \item If $F>0$ 
    \begin{align}
   \hspace{-1.5cm}  P_2^ \star  =
    \begin{cases}
P_{2C} & P_{2C} \in p_2
\\
\mathop {\arg }\limits_{{P_2}} \mathop {\max }\limits_{{P_2} \in \left\{ {0 ,\tau } \right\}} {C_s} & P_{2C} \notin p_2
  \end{cases}
  \end{align}
%%%%%%%%%%%%%%%%%%%%%%%%%%%%%%%%%%%%%%%%%%%%%%%%%%%%%%%%%%%%%%%%%%%%%%%%%%%%%%%%%%%%%%%%%%%%%%%%%%
%%%%%%%%%%%%%%%%%%%%%%%%%%%%%%%%%%%%%%%%%%%%%%%%%%%%%%%%%%%%%%%%%%%%%%%%%%%%%%%%%%%%%%%%%%%%%%%%%%
    \item If $F<0$
    \begin{align}
    P_2^ \star = \mathop {\arg }\limits_{{P_2}} \mathop {\max }\limits_{{P_2} 
		\in \left\{ { 0 ,\tau } \right\}} {C_s}
  \end{align}

%%%%%%%%%%%%%%%%%%%%%%%%%%%%%%%%%%%%%%%%%%%%%%%%%%%%%%%%%%%%%%%%%%%%%%%%%%%%%%%%%%%%%%%%%%%%%%%%%%
\end{enumerate}
\end{enumerate}
%%%%%%%%%%%%%%%%%%%%%%%%%%%%%%%%%%%%%%%%%%%%%%%%%%%%%%%%%%%%%%%%%%%%%%%%%%%%%%%%%%%%%%%%%%%%%%%%%%
where ${P_{2C}} = \frac{{ - G - \sqrt \Delta  }}{{2H}}$, $\Delta  = 4Eabcde\gamma (d + c\gamma ){\sigma ^2}$, 
$G= - 2bce\gamma (d + c\gamma )\sigma $, $H= - bce\gamma (ad + bc\gamma ) $, $\rho  = \frac{{\sigma _n^2\left( {{{\left| {{h_{{U_1},{D_1}}}} \right|}^2} - {{\left| {{h_{{U_1},E}}} \right|}^2}} \right){{\left| {{h_{{U_2},{D_2}}}} \right|}^2}}}{{\gamma {{\left| {{h_{{U_1},E}}} \right|}^2}{{\left| {{h_{{U_2},{D_1}}}} \right|}^2}{{\left| {{h_{{U_1},{D_2}}}} \right|}^2}}} - \frac{{\sigma _n^2}}{{{{\left| {{h_{{U_1},{D_2}}}} \right|}^2}}}$, and $p_2$ is the feasibility domain of the problem.
\label{thm ego eve dec}
\end{thm}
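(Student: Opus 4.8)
The plan is to reuse the reduction-and-single-variable machinery established for Theorem~\ref{thm 1 Lim Pow}, adapting it to the objective in~\eqref{eqn:SC Opt Ins limited power eve dec}, where Case~\ref{subeq:AD4} lets the eavesdropper decode and cancel $U_2$. First I would exploit the active QoS constraint (the equality $\mathrm{SINR}=\gamma$ at $D_2$) to eliminate one variable, substituting $P_2=\gamma\big(P_1{\left|{h_{{U_1},{D_2}}}\right|}^2+\sigma_n^2\big)/{\left|{h_{{U_2},{D_2}}}\right|}^2$ exactly as before, so that~\eqref{eqn:SC Opt Ins limited power eve dec} collapses to a problem in the single reduced variable $P_1$. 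The caps $P_1\le P_{\max_1}$ and $P_2\le P_{\max_2}$ (the latter becoming an upper bound on $P_1$), together with the requirement of a non-negative secrecy rate, then carve out a closed feasibility interval $p_2$ whose binding upper end is encoded by $\tau=\min\left({P_{\max_2}},\inf\rho\right)$, with $\rho$ playing the positivity-threshold role analogous to Proposition~\ref{Propo Ins lim pow}.

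Next, since $\log_2$ is increasing, I would maximise the rational argument directly. After the substitution the numerator of the ratio is affine in $P_1$ while the denominator is the linear term $1+P_1{\left|{h_{{U_1},E}}\right|}^2/\sigma_n^2$ coming from the decoded-and-cancelled eavesdropper rate. Differentiating and clearing the (positive) squared denominator, the sign of the derivative is controlled by a quadratic in $P_1$; its leading coefficient is $H$ and its discriminant is the stated $\Delta=4Eabcde\gamma(d+c\gamma)\sigma^2$, so the sign of $E$ determines whether the objective is monotone on $p_2$ (boundary optimum) or unimodal with an interior stationary point $P_{2C}=\big(-G-\sqrt\Delta\big)/(2H)$.

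I would then organise the closed form by the signs of $E$ and $F$, in direct parallel with how $E$, $F$, $G$ organise Theorem~\ref{thm 1 ins}. When $E<0$ the reduced objective is decreasing, pushing the optimum to the lower endpoint $P_1=0$; when $E>0$, the sign of $F$ together with $\Delta\ge0$ decides whether $P_{2C}$ is an interior maximiser. If $P_{2C}\in p_2$ it is optimal; otherwise the optimum is the better of the two endpoints $\{0,\tau\}$, selected by $\arg\max C_s$, which is precisely what the piecewise statement records.

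The main obstacle I anticipate is sign bookkeeping rather than any single sharp estimate: verifying that the derivative's numerator genuinely reduces to the claimed quadratic with coefficients $H$, $G$ and discriminant $\Delta$, that $P_{2C}$ is the maximising root rather than the spurious one, and that the interior-versus-endpoint comparisons align with the conditions on $E$ and $F$ in every branch. The most delicate point is confirming that $\rho$ and $\tau$ correctly encode both the QoS-induced power cap and the non-negativity of the secrecy rate, so that $p_2$ is exactly the interval over which $\arg\max C_s$ is taken; once this is matched to the template of Theorem~\ref{thm 1 ins}, the remaining manipulations are routine algebra.
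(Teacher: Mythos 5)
Your proposal is correct and follows essentially the same route as the paper: the paper's own proof is a one-line reference to the proof of Theorem~\ref{thm 1 ins} (Appendix~\ref{Appen thm 1 ins}), i.e., exactly your plan of eliminating $P_2$ through the equality QoS constraint, differentiating the resulting single-variable rational objective, and classifying the sign of the quadratic numerator via the discriminant (whose sign is that of $E$) and the coefficients, comparing the root $P_{2C}$ against the endpoints $\{0,\tau\}$ of the feasibility interval. One cosmetic slip only: after the substitution the data-rate factor is linear-fractional (not affine) in the remaining power variable, but the cleared derivative numerator is still the claimed quadratic, so your argument is unaffected.
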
}

%%%%%%%%%%%%%%%%%%%%%%%%%%%%%%%%%%%%%%%%%%%%%%%%%%%%%%%%%%%%%%%%%%%%%%%%%%%%%%%%%%%%%%%%%%%%%%%%%%
\vspace{0.1cm}
{\rca \begin{proof}
The proof can be obtained in the similar way to that of Theorem~\ref{thm 1 ins}.
\end{proof}}
%%%%%%%%%%%%%%%%%%%%%%%%%%%%%%%%%%%%%%%%%%%%%%%%%%%%%%%%%%%%%%%%%%%%%%%%%%%%%%%%%%%%%%%%%%%%%%%%%%
{\rc %%%%%%%%%%%%%%%%%%%%%%%%%%%%%%%%%%%%%%%%%%%%%%%%%%%%%%%%%%%%%%%%%%%%%%%%%%%%%%%%%%%%%%%%%%%%%%%%%%
\section{Secrecy Energy Efficiency}  \label{sec:Secrecy energy efficiency}
%%%%%%%%%%%%%%%%%%%%%%%%%%%%%%%%%%%%%%%%%%%%%%%%%%%%%%%%%%%%%%%%%%%%%%%%%%%%%%%%%%%%%%%%%%%%%%%%%%
Before going to Section~VI, we define a metric in order to investigate the energy efficiency of the considered scenario. We define the
secrecy energy efficiency, ${\eta _{SEE}}$, as the maximum secrecy rate obtained from the objective of (9), namely $\Psi$, to the optimal
consumed power of $U_1$, $P_1^{\star}$, ratio as ${\eta _{SEE}} = \frac{{\mathop {\max }\limits_{{P_1}} \Psi }}{{P_1^ \star }}$.
%%%%%%%%%%%%%%%%%%%%%%%%%%%%%%%%%%%%%%%%%%% EQUATION %%%%%%%%%%%%%%%%%%%%%%%%%%%%%%%%%%%%%%%%%%%%%
%\begin{align}
%{\eta _{SEE}} = \frac{{\mathop {\max }\limits_{{P_1}} \Psi }}{{P_1^ \star }}.
%\label{eqn:SEE def}
%\end{align}
%%%%%%%%%%%%%%%%%%%%%%%%%%%%%%%%%%%%%%%%%%%%%%%%%%%%%%%%%%%%%%%%%%%%%%%%%%%%%%%%%%%%%%%%%%%%%%%%%%
Similarly, in the case we have only one transceiver pair and an eavesdropper with no interfering 
user, the secrecy energy efficiency metric, ${\eta _{SEE}}$, can be defined as ${\eta _{SE{E_{su}}}} = \frac{{\log \left( {{{\left( {\sigma _n^2 + {P^ \star }{{\left| {{h_{U,D}}} \right|}^2}} \right)} \mathord{\left/
 {\vphantom {{\left( {\sigma _n^2 + {P^ \star }{{\left| {{h_{U,D}}} \right|}^2}} \right)} {\left( {\sigma _n^2 + {P^ \star }{{\left| {{h_{U,E}}} \right|}^2}} \right)}}} \right.
 \kern-\nulldelimiterspace} {\left( {\sigma _n^2 + {P^ \star }{{\left| {{h_{U,E}}} \right|}^2}} \right)}}} \right)}}{{{P^ \star }}}$,
%%%%%%%%%%%%%%%%%%%%%%%%%%%%%%%%%%%%%%%%%%% EQUATION %%%%%%%%%%%%%%%%%%%%%%%%%%%%%%%%%%%%%%%%%%%%%
%\begin{align}
%{\eta _{SE{E_{su}}}} = \frac{{\log \left( {\frac{{\sigma _n^2 + {P^{\star}}{{\left| {{h_{U,D}}} \right|}^2}}}{{\sigma _n^2 + {P^{\star}}{{\left| {{h_{U,E}}} \right|}^2}}}} \right)}}{{{P^{\star}}}},
%\label{eqn:SEE def su}
%\end{align}
%%%%%%%%%%%%%%%%%%%%%%%%%%%%%%%%%%%%%%%%%%%%%%%%%%%%%%%%%%%%%%%%%%%%%%%%%%%%%%%%%%%%%%%%%%%%%%%%%%
  where $P$ is the transmission power, and $P^{\star}$ is the optimal transmission power obtained from the optimization problem in the nominator. When the condition ${\left| {{h_{U,D}}} \right|^2} > {\left| {{h_{U,E}}} \right|^2}$ holds, the optimum consumed power in the single-user case is $P_{max}$. In contrast, as we shall see in Section~VI, the optimal power consumed by $U_1$ in the interference channel is considerably lower than $P_{max}$. Hence, as shall be shown in Section~VI, in a wide range of powers, the interference network outperforms the single-user network in terms of secrecy energy efficiency.}
%%%%%%%%%%%%%%%%%%%%%%%%%%%%%%%%%%%%%%%%%%%%%%%%%%%%%%%%%%%%%%%%%%%%%%%%%%%%%%%%%%%%%%%%%%%%%%%%%%
\section{Numerical Results} \label{sec:num results}
%%%%%%%%%%%%%%%%%%%%%%%%%%%%%%%%%%%%%%%%%%%%%%%%%%%%%%%%%%%%%%%%%%%%%%%%%%%%%%%%%%%%%%%%%%%%%%%%%%
In this section, we present different scenarios as numerical examples to further clarify the derived
results. As a benchmark, we consider a single-user scenario where only one user is present in the environment and there is no second user to produce interference~\cite{Wyner:1975}. Then, we compare this benchmark with our system model. Here, we refer to the altruistic and egoistic scenarios as interference
channel modes. In all simulation scenarios, we assume that the noise power is equal to one,
i.e., $\sigma _n^2=1$. All the channel coefficients are modeled as i.i.d. complex normal random variables with real and imaginary parts being as $\mathcal{N}(0,1)$. The channel coefficients are normalized to have a unit variance as $\mathcal{CN}(0,1)$.

For the first scenario, we consider the effect of the users' power limits, $P_{{{\max }_1}}$ and
${P_{{{\max }_2}}}$, on the average secrecy rate as shown in Fig.~\ref{fig:ASR vs P1 and P2 Ins}
for $\text{SINR}=1$ at $U_2$'s destination. By observing the results in
Fig.~\ref{fig:ASR vs P1 and P2 Ins}, we can draw the following conclusions for both altruistic
and egoistic scenarios:
%%%%%%%%%%%%%%%%%%%%%%%%%%%%%%%%%%%%%%%%%%%%%%%%%%%%%%%%%%%%%%%%%%%%%%%%%%%%%%%%%%%%%%%%%%%%%%%%%%
\begin{enumerate}
  \item Average secrecy rate of $U_1$ increases as $P_{{{\max }_1}}$ or $P_{{{\max }_2}}$ increases.

  \item Increasing $P_{{{\max }_1}}$ is more effective on improving the average secrecy rate rather
  than increasing $P_{{{\max }_2}}$. The reason is that increasing $U_2$'s power creates more interference
  to both $U_1$ and $E$.

  \item The average secrecy rate of $U_1$ is lower in the egoistic scenario since $U_2$ does not change its transmitted
  power in favor of $U_1$, and only adjusts it according to the required QoS at $D_2$. Also, by comparing Fig.~\ref{fig:ASR vs P1 and P2 Opt} and Fig.~\ref{fig:ASR vs P1 and P2 lim Opt}, it can be seen that $U_2$ consumes less power in the egoistic scenario. When the Cases~\eqref{subeq:P1} and~\eqref{subeq:P3} of
  Theorem~\ref{thm:Pos sec Ins} are true, $U_2$ can improve the secrecy rate by providing more power in the altruistic scenario.
\end{enumerate}
%%%%%%%%%%%%%%%%%%%%%%%%%%%%%%%%%%%%%%%%%%%%%%%%%%%%%%%%%%%%%%%%%%%%%%%%%%%%%%%%%%%%%%%%%%%%%%%%%%

The average optimal powers consumed by $U_1$ and $U_2$ are shown in Fig.~\ref{fig:ASR vs P1 and P2 Opt} for the altruistic scenario. Following points
are implied by Fig.~\ref{fig:ASR vs P1 and P2 Opt} as:
%%%%%%%%%%%%%%%%%%%%%%%%%%%%%%%%%%%%%%%%%%%%%%%%%%%%%%%%%%%%%%%%%%%%%%%%%%%%%%%%%%%%%%%%%%%%%%%%%%
\begin{enumerate}
  \item In contrast to the single-user case where the maximum power consumption is optimum when the data
  link is stronger than the wiretap link, average optimal powers expended
  by users in the interference channel modes are considerably less than the available quantity. So, the optimum power
  control in the interference channel leads to enormous power saving. \label{c1}

  \item As $P_{\max_1}$ increases, $U_2$ consumes more power. A higher power transmission from $U_1$ produces more interference
  on $D_2$. This makes $U_2$ to choose higher transmission power in order to maintain the QoS at $D_2$.

  \item As $P_{\max_2}$ increases, $U_1$ utilizes more power. A higher available power for $U_2$ enables it to compensate
  a higher interference from $U_1$, so $U_1$ transmits with a higher power to increase the secrecy rate. \label{c2}

  \item Depending on the maximum available power to the users, the optimal consumed power by one user
  can be higher or lower than the power consumed by the other user.
\end{enumerate}
%%%%%%%%%%%%%%%%%%%%%%%%%%%%%%%%%%%%%%%%%%%%%%%%%%%%%%%%%%%%%%%%%%%%%%%%%%%%%%%%%%%%%%%%%%%%%%%%%%%

Consumed powers by $U_1$ and $U_2$ for the egoistic scenario are illustrated in Fig.~\ref{fig:ASR vs P1 and P2 lim Opt}. As we can see,
the power consumption pattern is similar to the altruistic scenario as in Fig.~\ref{fig:ASR vs P1 and P2 Opt}. By
comparing Fig.~\ref{fig:ASR vs P1 and P2 lim Opt} with Fig.~\ref{fig:ASR vs P1 and P2 Opt}, it is noticed that the power consumed by the
users in the altruistic scenario is higher than the egoistic scenario.

Average excess SINR provided by $U_2$ at $D_2$ in the altruistic scenario is shown in Fig.~\ref{fig:ASR vs P1 and P2 QoS}
for different values of the required QoS, $\gamma_{th}$. Following messages are conveyed by
Fig.~\ref{fig:ASR vs P1 and P2 QoS} as:
%%%%%%%%%%%%%%%%%%%%%%%%%%%%%%%%%%%%%%%%%%%%%%%%%%%%%%%%%%%%%%%%%%%%%%%%%%%%%%%%%%%%%%%%%%%%%%%%%%%
\begin{enumerate}
  \item By increasing $P_{{{\max }_1}}$ for a fixed $P_{{{\max }_2}}$, the excess SINR provided at $D_2$ drops due
  to increased interference from $U_1$'s transmission.

  \item Increasing $P_{{{\max }_2}}$ for a fixed $P_{{{\max }_1}}$ leads to a higher excess SINR at $D_2$.
\end{enumerate}
%%%%%%%%%%%%%%%%%%%%%%%%%%%%%%%%%%%%%%%%%%%%%%%%%%%%%%%%%%%%%%%%%%%%%%%%%%%%%%%%%%%%%%%%%%%%%%%%%%%%
%%%%%%%%%%%%%%%%%%%%%%%%%%%%%%%%%%%%%%%%%%%%%%% FIGURE  %%%%%%%%%%%%%%%%%%%%%%%%%%%%%%%%%%%%%%%%%%%%
\begin{figure}[t]
  \centering
  \includegraphics[width=8.5cm]{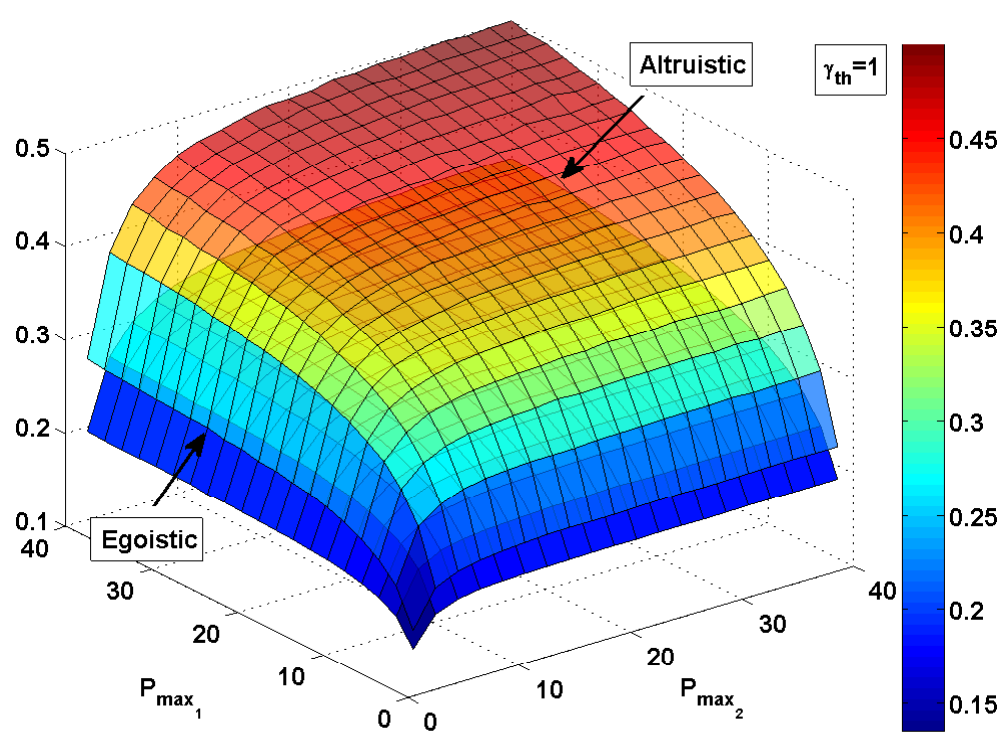}
  \caption{Average secrecy rate versus the users' maximum available powers in altruistic and egoistic scenarios.}
  \label{fig:ASR vs P1 and P2 Ins}
\end{figure}
%%%%%%%%%%%%%%%%%%%%%%%%%%%%%%%%%%%%%%%%%%%%%%%%%%%%%%%%%%%%%%%%%%%%%%%%%%%%%%%%%%%%%%%%%%%%%%%%%%%%
%%%%%%%%%%%%%%%%%%%%%%%%%%%%%%%%%%%%%%%%%%%%%%% FIGURE  %%%%%%%%%%%%%%%%%%%%%%%%%%%%%%%%%%%%%%%%%%%%
\begin{figure}[t]
  \centering
  \includegraphics[width=8.5cm]{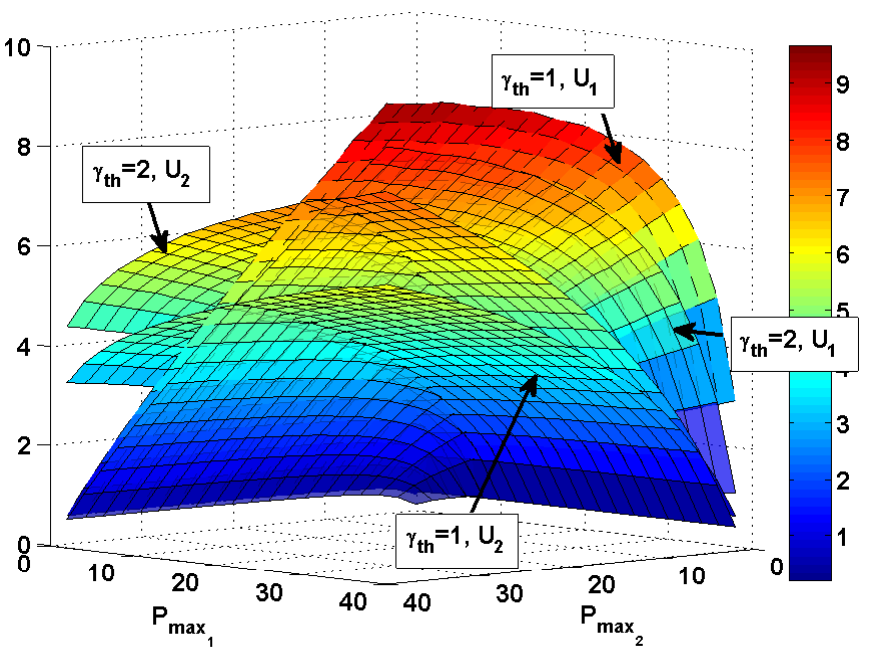}
  \caption{Average optimal power consumed by the users versus their maximum available powers in altruistic scenario.}
  \label{fig:ASR vs P1 and P2 Opt}
\end{figure}
%%%%%%%%%%%%%%%%%%%%%%%%%%%%%%%%%%%%%%%%%%%%%%%%%%%%%%%%%%%%%%%%%%%%%%%%%%%%%%%%%%%%%%%%%%%%%%%%%%%%
%%%%%%%%%%%%%%%%%%%%%%%%%%%%%%%%%%%%%%%%%%%%%%% FIGURE  %%%%%%%%%%%%%%%%%%%%%%%%%%%%%%%%%%%%%%%%%%%%
\begin{figure}[t!]
  \centering
  \includegraphics[width=8.5cm]{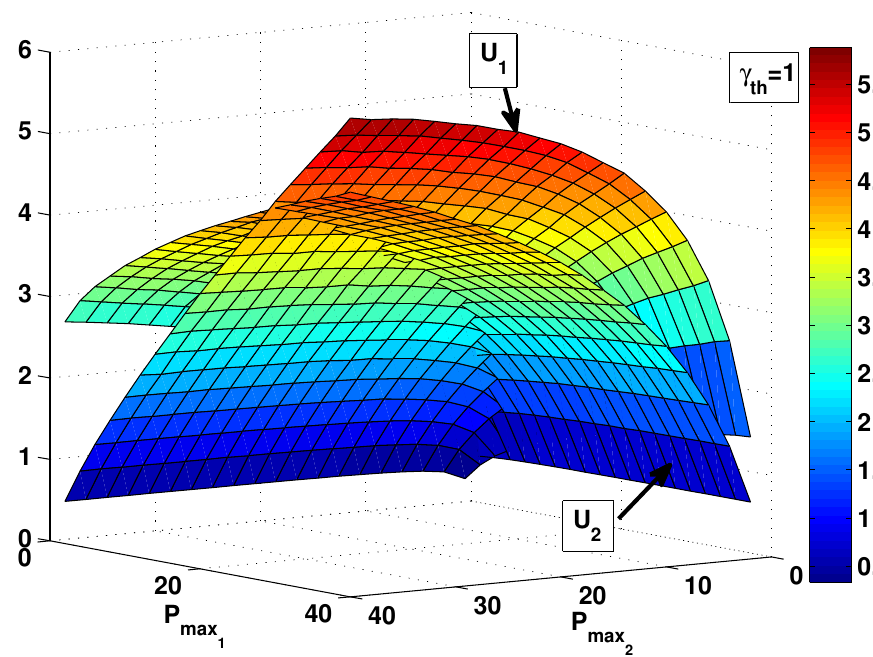}
  \caption{Average optimal power consumed by the users versus their maximum available powers in the egoistic scenario.}
  \label{fig:ASR vs P1 and P2 lim Opt}
\end{figure}
%%%%%%%%%%%%%%%%%%%%%%%%%%%%%%%%%%%%%%%%%%%%%%%%%%%%%%%%%%%%%%%%%%%%%%%%%%%%%%%%%%%%%%%%%%%%%%%%%%%%
%%%%%%%%%%%%%%%%%%%%%%%%%%%%%%%%%%%%%%%%%%%%%%% FIGURE  %%%%%%%%%%%%%%%%%%%%%%%%%%%%%%%%%%%%%%%%%%%%%%%
\begin{figure}[!ht]
  \centering
  \includegraphics[width=8.5cm]{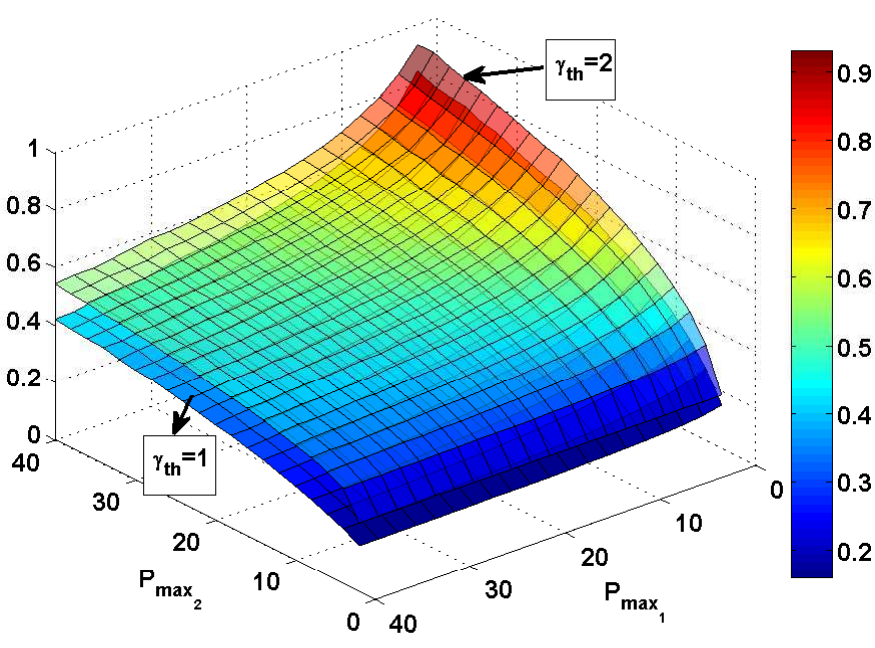}
  \caption{Average excess QoS provided at $D_2$ versus users' maximum available powers in the altruistic scenario.}
  \label{fig:ASR vs P1 and P2 QoS}
\end{figure}
%%%%%%%%%%%%%%%%%%%%%%%%%%%%%%%%%%%%%%%%%%%%%%%%%%%%%%%%%%%%%%%%%%%%%%%%%%%%%%%%%%%%%%%%%%%%%%%%%%%%
%%%%%%%%%%%%%%%%%%%%%%%%%%%%%%%%%%%%%%%%%%%%%%% FIGURE  %%%%%%%%%%%%%%%%%%%%%%%%%%%%%%%%%%%%%%%%%%%%
\begin{figure}[t!]
  \centering
  \includegraphics[width=8.5cm]{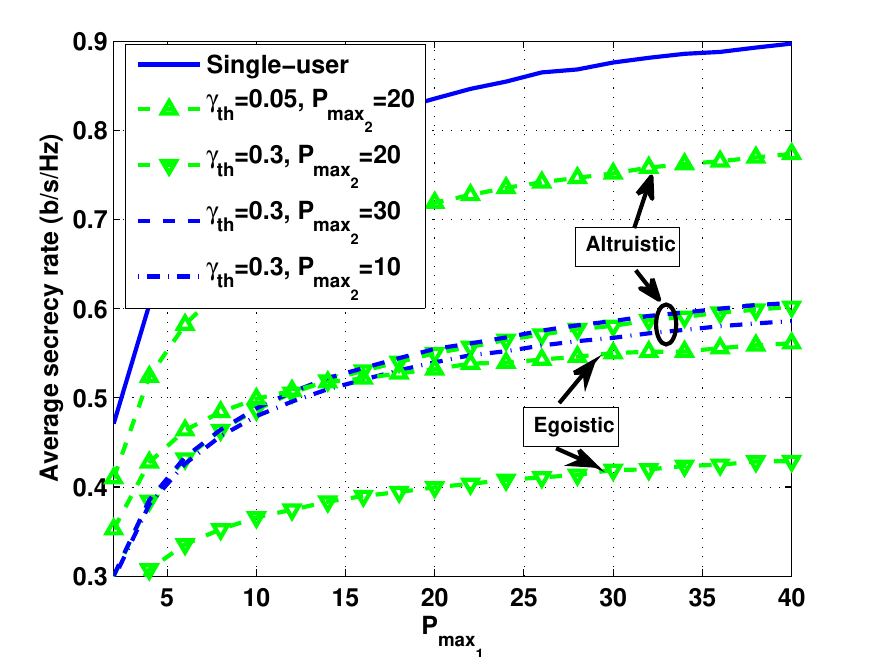}
  \caption{Average secrecy rate versus {\rc $U_1$}'s maximum available power.}
  \label{fig:ASR vs P1 Ins}
\end{figure}
%%%%%%%%%%%%%%%%%%%%%%%%%%%%%%%%%%%%%%%%%%%%%%%%%%%%%%%%%%%%%%%%%%%%%%%%%%%%%%%%%%%%%%%%%%%%%%%%%%%%

The average secrecy rate comparison among the single-user benchmark and the interference channel modes is presented in
Fig.~\ref{fig:ASR vs P1 Ins} with respect to the maximum available power of {\rc $U_1$}. Following conclusions can be made according to
Fig.~\ref{fig:ASR vs P1 Ins}:
%%%%%%%%%%%%%%%%%%%%%%%%%%%%%%%%%%%%%%%%%%%%%%%%%%%%%%%%%%%%%%%%%%%%%%%%%%%%%%%%%%%%%%%%%%%%%%%%%%%
\begin{enumerate}

  \item Increasing $P_{\max_2}$ also enhances the average secrecy rate but much less compared to increasing the $P_{\max_1}$, because $U_2$ induces interference on both $D_1$ and $E$. \label{P2_2}

  \item The secrecy rate in the egoistic scenario is always lower than the one in the altruistic scenario. In the egoistic scenario, $U_2$ consumes power to only
  fulfil the QoS at $D_2$. As a result, $U_2$ does not increase its transmission power to produce interference on $E$ when the Cases~\eqref{subeq:P1} and~\eqref{subeq:P3}
  of Theorem~\ref{thm:Pos sec Ins} hold. However, in the altruistic scenario, $U_2$ can change its transmission power in favor of $U_1$ when it becomes necessary. \label{P2_3}

\end{enumerate}
%%%%%%%%%%%%%%%%%%%%%%%%%%%%%%%%%%%%%%%%%%%%%%%%%%%%%%%%%%%%%%%%%%%%%%%%%%%%%%%%%%%%%%%%%%%%%%%%%%%%
A similar comparison as in Fig.~\ref{fig:ASR vs P1 Ins} is displayed in Fig.~\ref{fig:ASR vs P2 Ins} with respect
to the maximum available power of {\rc $U_2$}. {\rca The Statements~\ref{P2_2} and~\ref{P2_3}} of Fig.~\ref{fig:ASR vs P1 Ins} also 
hold for Fig.~\ref{fig:ASR vs P2 Ins}. As we see in Fig.~\ref{fig:ASR vs P2 Ins}, increasing $P_{\max_2}$ also 
increases the average secrecy rate. By increasing $P_{\max_2}$, $U_2$ gets a higher ability to suppress the 
interference coming from $U_1$ as well as causing more interference on $E$ when the Cases~\eqref{subeq:P1} 
and~\eqref{subeq:P3} of Theorem~\ref{thm:Pos sec Ins} hold. As a result, $U_1$ can transmit with a 
higher power and enhance the secrecy rate.
%%%%%%%%%%%%%%%%%%%%%%%%%%%%%%%%%%%%%%%%%%%%%%% FIGURE  %%%%%%%%%%%%%%%%%%%%%%%%%%%%%%%%%%%%%%%%%%%%
\begin{figure}[t]
  \centering
  \includegraphics[width=8.5cm]{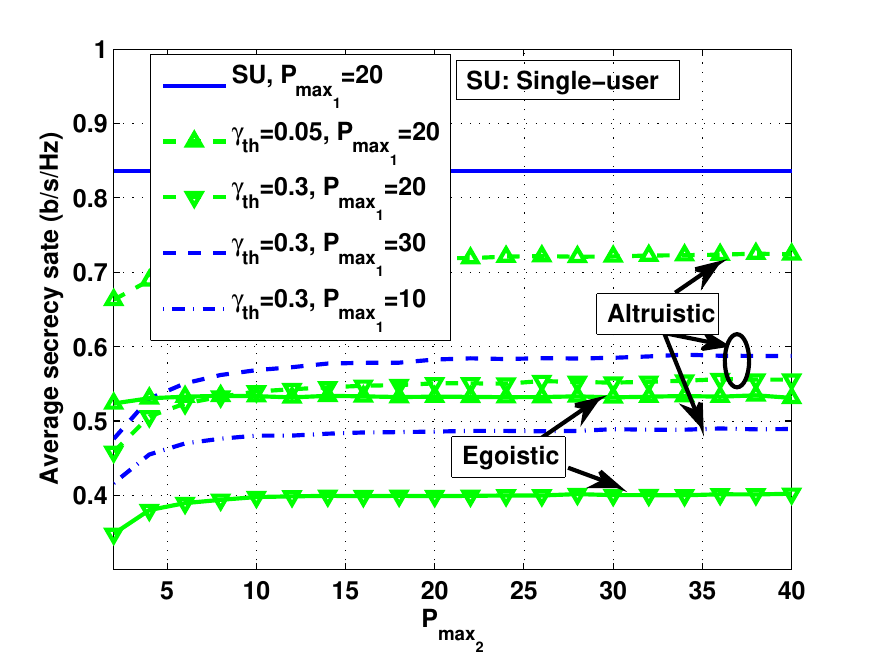}
  \caption{Average secrecy rate versus {\rc $U_2$}'s maximum available power.}
  \label{fig:ASR vs P2 Ins}
\end{figure}
%%%%%%%%%%%%%%%%%%%%%%%%%%%%%%%%%%%%%%%%%%%%%%%%%%%%%%%%%%%%%%%%%%%%%%%%%%%%%%%%%%%%%%%%%%%%%%%%%%%%
%%%%%%%%%%%%%%%%%%%%%%%%%%%%%%%%%%%%%%%%%%%%%%% FIGURE  %%%%%%%%%%%%%%%%%%%%%%%%%%%%%%%%%%%%%%%%%%%%
\begin{figure}[t]
  \centering
  \includegraphics[width=8.5cm]{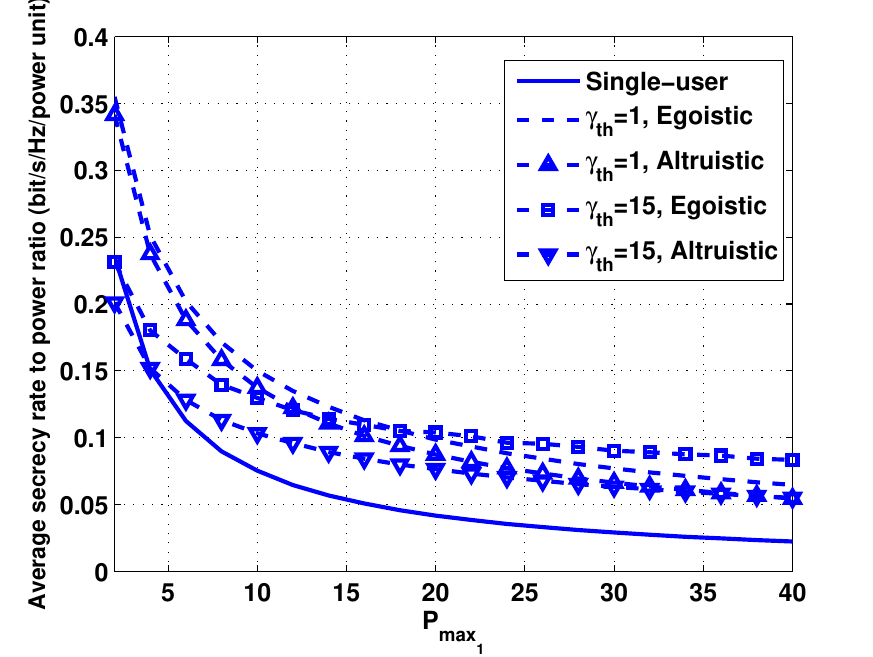}
  \caption{Average secrecy energy efficiency versus {\rc $U_1$}'s maximum available power.}
  \label{fig:ratio vs P1 Ins}
\end{figure}
%%%%%%%%%%%%%%%%%%%%%%%%%%%%%%%%%%%%%%%%%%%%%%%%%%%%%%%%%%%%%%%%%%%%%%%%%%%%%%%%%%%%%%%%%%%%%%%%%%%%

As we can see from Fig.~\ref{fig:ASR vs P1 Ins} and Fig.~\ref{fig:ASR vs P2 Ins}, the average secrecy rate in
the interference channel modes is lower than its value in the single-user case. However, we should note that the power consumed in the interference channel modes
is considerably lower than the single-user case. To make a fair comparison, we use the ``secrecy energy
efficiency'' metric defined in Section~\ref{sec:Secrecy energy efficiency} to compare the secrecy rates of the interference 
channel modes and the single-user benchmark. This metric is derived for different values of $\gamma_{th}$ in Fig.~\ref{fig:ratio vs P1 Ins}. According
to graphs in Fig.~\ref{fig:ratio vs P1 Ins}, we can make the following conclusions:
%%%%%%%%%%%%%%%%%%%%%%%%%%%%%%%%%%%%%%%%%%%%%%%%%%%%%%%%%%%%%%%%%%%%%%%%%%%%%%%%%%%%%%%%%%%%%%%%%%%
\begin{enumerate}
  \item The secrecy energy efficiency is higher for the interference channel modes in a considerable range of $\gamma_{th}$ and $P_{\max_1}$. If we
  consider a specific available power for $U_1$, the acquired secrecy rate in the interference channel modes becomes higher than the one achieved in the single-user case.

  \item As the maximum available power to $U_1$ increases, the secrecy energy efficiency falls faster for the cases with lower $\gamma_{th}$.
\end{enumerate}
%%%%%%%%%%%%%%%%%%%%%%%%%%%%%%%%%%%%%%%%%%%%%%%%%%%%%%%%%%%%%%%%%%%%%%%%%%%%%%%%%%%%%%%%%%%%%%%%%%%%
%%%%%%%%%%%%%%%%%%%%%%%%%%%%%%%%%%%%%%%%%%%%%%%%%%%%%%%%%%%%%%%%%%%%%%%%%%%%%%%%%%%%%%%%%%%%%%%%%%
\section{Conclusion}  \label{sec:con}
%%%%%%%%%%%%%%%%%%%%%%%%%%%%%%%%%%%%%%%%%%%%%%%%%%%%%%%%%%%%%%%%%%%%%%%%%%%%%%%%%%%%%%%%%%%%%%%%%%
We studied the effect of interference on improving the secrecy rate in a two-user wireless interference
network {\rc with input signals following Gaussian distribution}. We developed channel dependent expressions for both altruistic and egoistic scenarios to define the proper range
of transmission power for the interfering user, namely user $2$, in order to sustain a positive secrecy rate for the other user, namely user $1$. Closed-form solutions were obtained in order to perform joint
optimal power control for both users in the altruistic and egoistic scenarios.{\rca~It was shown that by decreasing the required QoS at user $2$'s destination, the secrecy rate in the interference channel improves and gets closer to the single-user case.}

Moreover, to fairly compare our scheme with the benchmark, the ratio of the secrecy rate over the optimal consumed power by user $1$ was introduced as a new metric called ``secrecy {\rc energy} efficiency'', in order to take into account both the secrecy rate and the consumed power. It was shown that in comparison with the single-user case, the secrecy {\rc energy} efficiency is considerably higher in the interference channel for a wide range of QoS at user $2$'s destination. 
%%%%%%%%%%%%%%%%%%%%%%%%%%%%%%%%%%%%%%%%%%%%%%%%%%%%%%%%%%%%%%%%%%%%%%%%%%%%%%%%%%%%%%%%%%%%%%%%%%
\appendices
%%%%%%%%%%%%%%%%%%%%%%%%%%%%%%%%%%%%%%%%%%%%%%%%%%%%%%%%%%%%%%%%%%%%%%%%%%%%%%%%%%%%%%%%%%%%%%%%%%
\section{Proof of Theorem~\ref{thm:Pos sec Ins}} \label{Appen Propo Ins}
For the objective function in~\eqref{eqn:SC Opt 2 Ins} to be positive, the following condition must hold
%%%%%%%%%%%%%%%%%%%%%%%%%%%%%%%%%%%%%%%%%%% EQUATION %%%%%%%%%%%%%%%%%%%%%%%%%%%%%%%%%%%%%%%%%%%%%
\begin{align}
&{\log _2}\left( {1 + \frac{{{P_1}{{\left| {{h_{{U_1},{D_1}}}} \right|}^2}}}{{{P_2}{{\left| {{h_{{U_2},{D_1}}}} \right|}^2} + \sigma _n^2}}} \right)
\nonumber\\
&- {\log _2}\left( {1 + \frac{{{P_1}{{\left| {{h_{{U_1},E}}} \right|}^2}}}{{{P_2}{{\left| {{h_{{U_2},E}}} \right|}^2} + \sigma _n^2}}} \right) > 0
\nonumber\\
&\Rightarrow  \frac{{{P_1}{{\left| {{h_{{U_1},{D_1}}}} \right|}^2}}}{{{P_2}{{\left| {{h_{{U_2},{D_1}}}} \right|}^2} + \sigma _n^2}} > \frac{{{P_1}{{\left| {{h_{{U_1},E}}} \right|}^2}}}{{{P_2}{{\left| {{h_{{U_2},E}}} \right|}^2} + \sigma _n^2}}
\nonumber\\
&\Rightarrow  \begin{cases}
{P_2} > \frac{{\sigma _n^2\left( {{{\left| {{h_{{U_1},E}}} \right|}^2} - {{\left| {{h_{{U_1},{D_1}}}} \right|}^2}} \right)}}
{B}\,\,\,\,\,\,\, & B > 0
\\
{P_2} < \frac{{\sigma _n^2\left( {{{\left| {{h_{{U_1},E}}} \right|}^2} - {{\left| {{h_{{U_1},{D_1}}}} \right|}^2}} \right)}}
{B}\,\,\,\,\,\,\, & B < 0
\end{cases}
\end{align}
%%%%%%%%%%%%%%%%%%%%%%%%%%%%%%%%%%%%%%%%%%%%%%%%%%%%%%%%%%%%%%%%%%%%%%%%%%%%%%%%%%%%%%%%%%%%%%%%%%%
where $B={{{{\left| {{h_{{U_1},{D_1}}}} \right|}^2}{{\left| {{h_{{U_2},E}}} \right|}^2} -
{{\left| {{h_{{U_2},{D_1}}}} \right|}^2}{{\left| {{h_{{U_1},E}}} \right|}^2}}}$.
%%%%%%%%%%%%%%%%%%%%%%%%%%%%%%%%%%%%%%%%%%%%%%%%%%%%%%%%%%%%%%%%%%%%%%%%%%%%%%%%%%%%%%%%%%%%%%%%%%%
\section{Proof of Theorem~\ref{thm 1 ins}} \label{Appen thm 1 ins}
%%%%%%%%%%%%%%%%%%%%%%%%%%%%%%%%%%%%%%%%%%%%%%%%%%%%%%%%%%%%%%%%%%%%%%%%%%%%%%%%%%%%%%%%%%%%%%%%%%%
In order to find the optimal $P_2$ for~\eqref{eqn:SC Opt 4 Ins}, we analyze the derivative of the objective
function in~\eqref{eqn:SC Opt 4 Ins}. The derivative is defined at the top of next page in~\eqref{eqn:OF Der 1} where
$a = P_{{\max }_1} {\left| {{h_{{U_1},{D_1}}}} \right|^2}$, $b = {\left| {{h_{{U_2},{D_1}}}} \right|^2}$, $c = P_{{\max }_1} {\left| {{h_{{U_1},E}}} \right|^2}$,
and $d = {\left| {{h_{{U_2},E}}} \right|^2}$. According to the sign of the derivative, the optimal $P_2$ can be found. The
denumerator in~\eqref{eqn:OF Der 1} is already positive, so the sign of~\eqref{eqn:OF Der 1} directly depends on the sign of the numerator. The numerator is a
quadratic equation. According to the sign of the discriminant of the quadratic equation~\cite[Section~5.1]{spiegel2008schaum}, denoted by
$\Delta  = 4abcd\sigma _n^2\left( {b - d} \right)\left[ { - d\left( {a + \sigma _n^2} \right) + b\left( {c + \sigma _n^2} \right)} \right]$, the status of the roots
can be defined. The sign of the discriminant can be defined as
%%%%%%%%%%%%%%%%%%%%%%%%%%%%%%%%%%%%%%%%%%%%%%%%%%%%%%%%%%%%%%%%%%%%%%%%%%%%%%%%%%%%%%%%%%%%%%%%%%%
\newcounter{MYtempeqncnt}
\begin{figure*}[]
\normalsize
\setcounter{MYtempeqncnt}{\value{equation}}
\setcounter{equation}{60}
\begin{align}
\frac{{\partial OF}}{{\partial {P_2}}} = \frac{{bd\left( {bc - ad} \right)P_2^2 + 2b\left( { - a + c} \right)d\sigma _n^2{P_2} +
 \sigma _n^2 \left( {cd {\sigma _n^2} - a\left( { - cd + b\left( {c + \sigma _n^2} \right)} \right)} \right)}}{{{{\left( {\sigma _n^2 + b{P_2}} \right)}^2}{{\left( {c + \sigma _n^2 + d{P_2}} \right)}^2}}}
\label{eqn:OF Der 1}
\end{align}
\setcounter{equation}{\value{MYtempeqncnt}}
\hrulefill
\vspace*{1pt}
\end{figure*}
\addtocounter{equation}{1}
%%%%%%%%%%%%%%%%%%%%%%%%%%%%%%%%%%%%%%%%%%%%%%%%%%%%%%%%%%%%%%%%%%%%%%%%%%%%%%%%%%%%%%%%%%%%%%%%%%%
%%%%%%%%%%%%%%%%%%%%%%%%%%%%%%%%%%%%%%%%%%%%% Figure %%%%%%%%%%%%%%%%%%%%%%%%%%%%%%%%%%%%%%%%%%%%%%
\begin{figure*}[t]
        \centering

        \begin{subfigure}[b]{0.2\textwidth}
                \includegraphics[width=\textwidth]{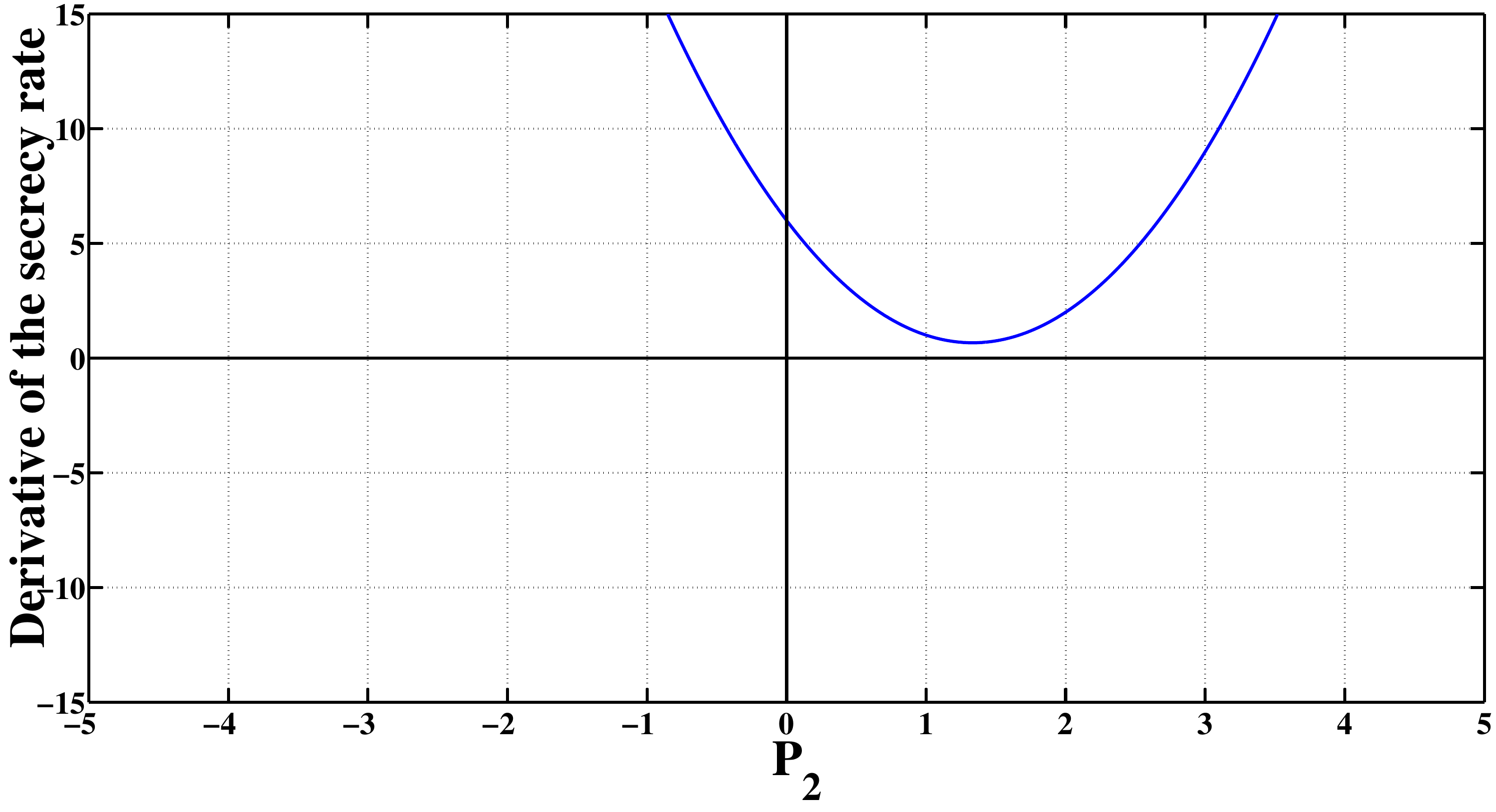}
                \caption{Strictly positive derivative.}
                \label{fig:St Pos}
        \end{subfigure}
        \begin{subfigure}[b]{0.2\textwidth}
                \includegraphics[width=\textwidth]{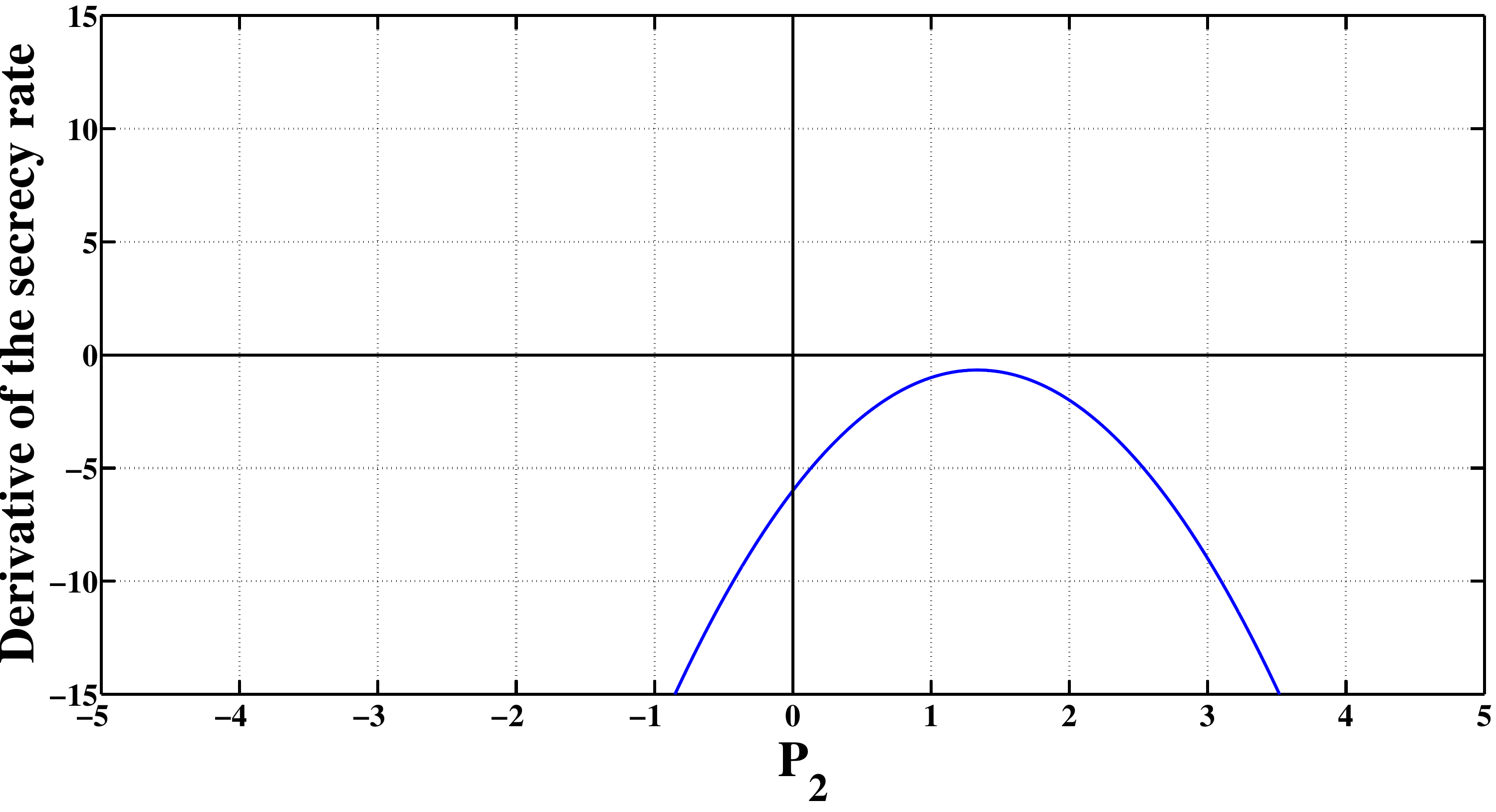}
                \caption{Strictly negative derivative.}
                \label{fig:St Neg}
        \end{subfigure}
        \begin{subfigure}[b]{0.2\textwidth}
                \includegraphics[width=\textwidth]{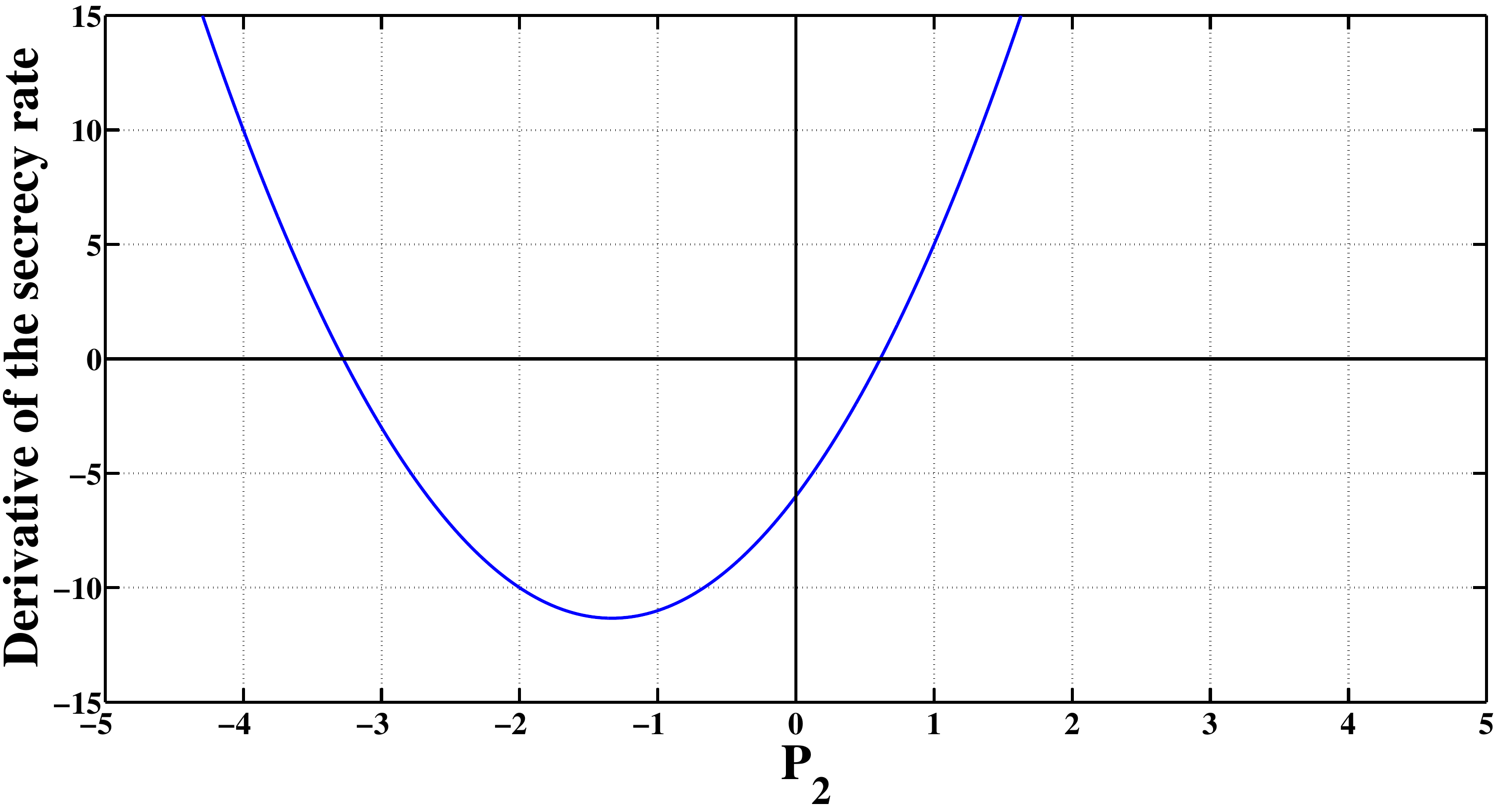}
                \caption{Negative and positive derivative.}
                \label{fig:One Pos root Neg Pos}
        \end{subfigure}
        \begin{subfigure}[b]{0.2\textwidth}
                \includegraphics[width=\textwidth]{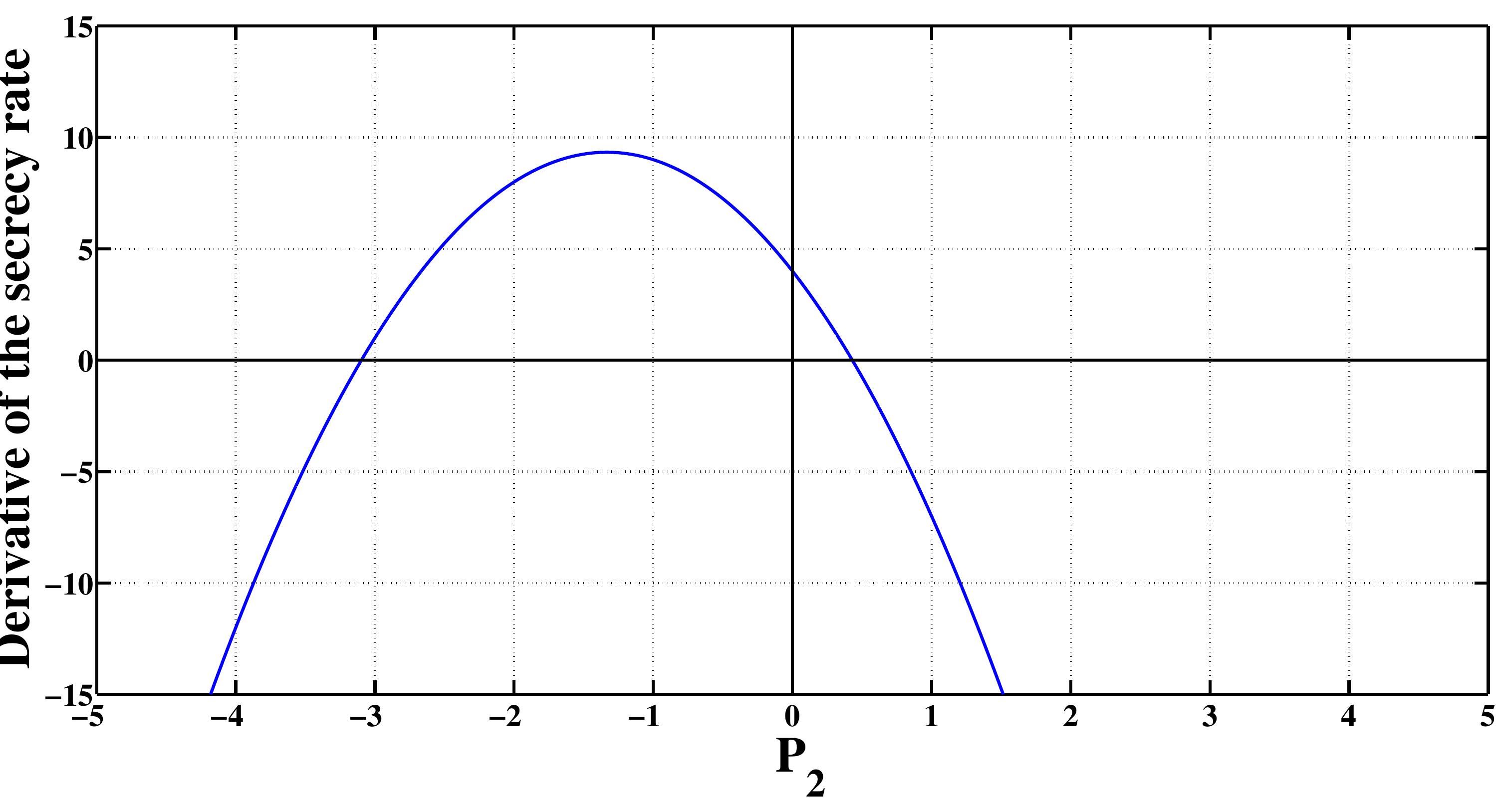}
                \caption{Positive and negative derivative.}
                \label{fig:One Pos root Pos Neg}
        \end{subfigure}
        \begin{subfigure}[b]{0.2\textwidth}
                \includegraphics[width=\textwidth]{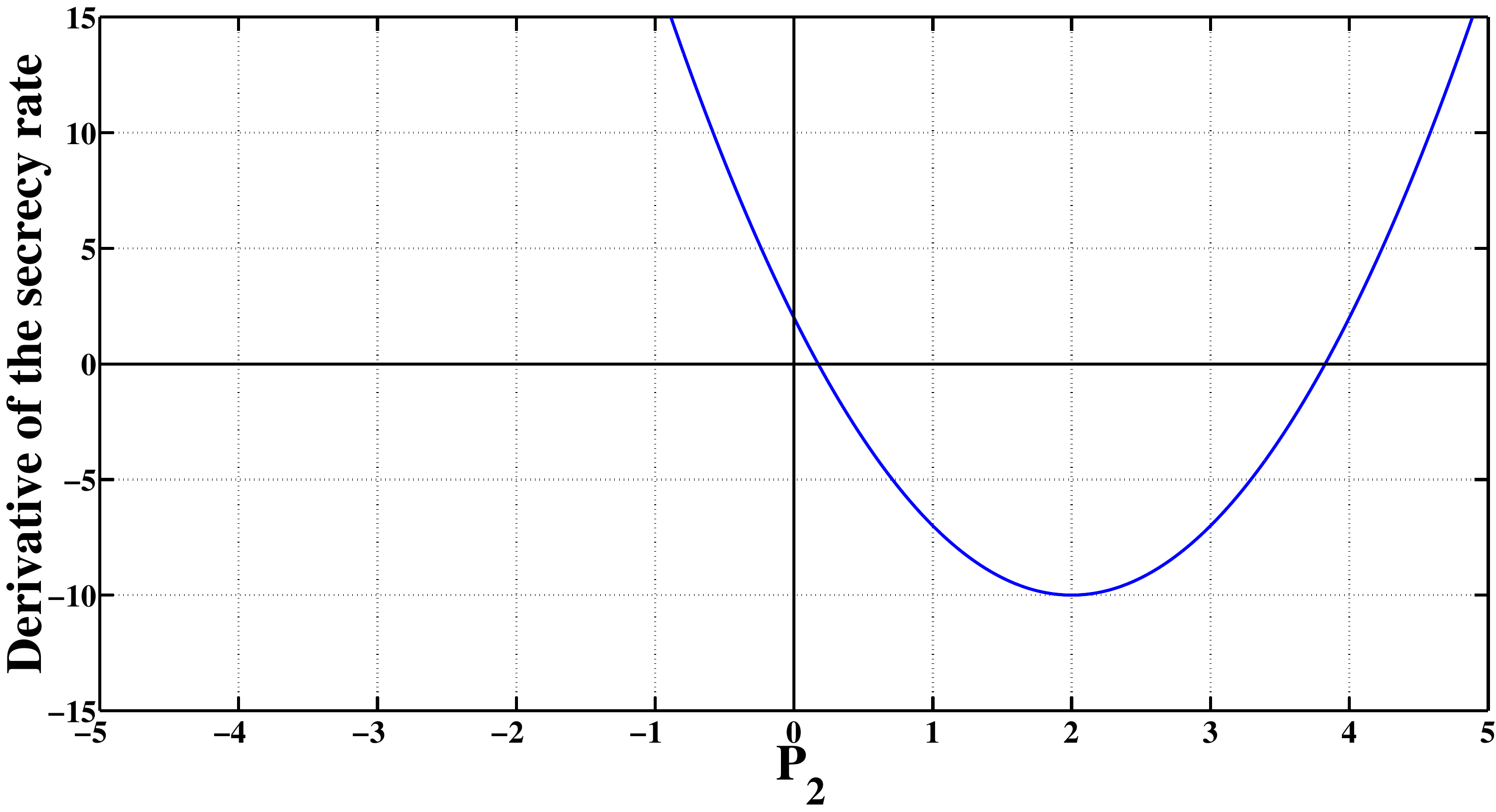}
                \caption{Positive, negative, positive derivative.}
                \label{fig:two Pos root Pos Neg Pos}
        \end{subfigure}
        \begin{subfigure}[b]{0.2\textwidth}
                \includegraphics[width=\textwidth]{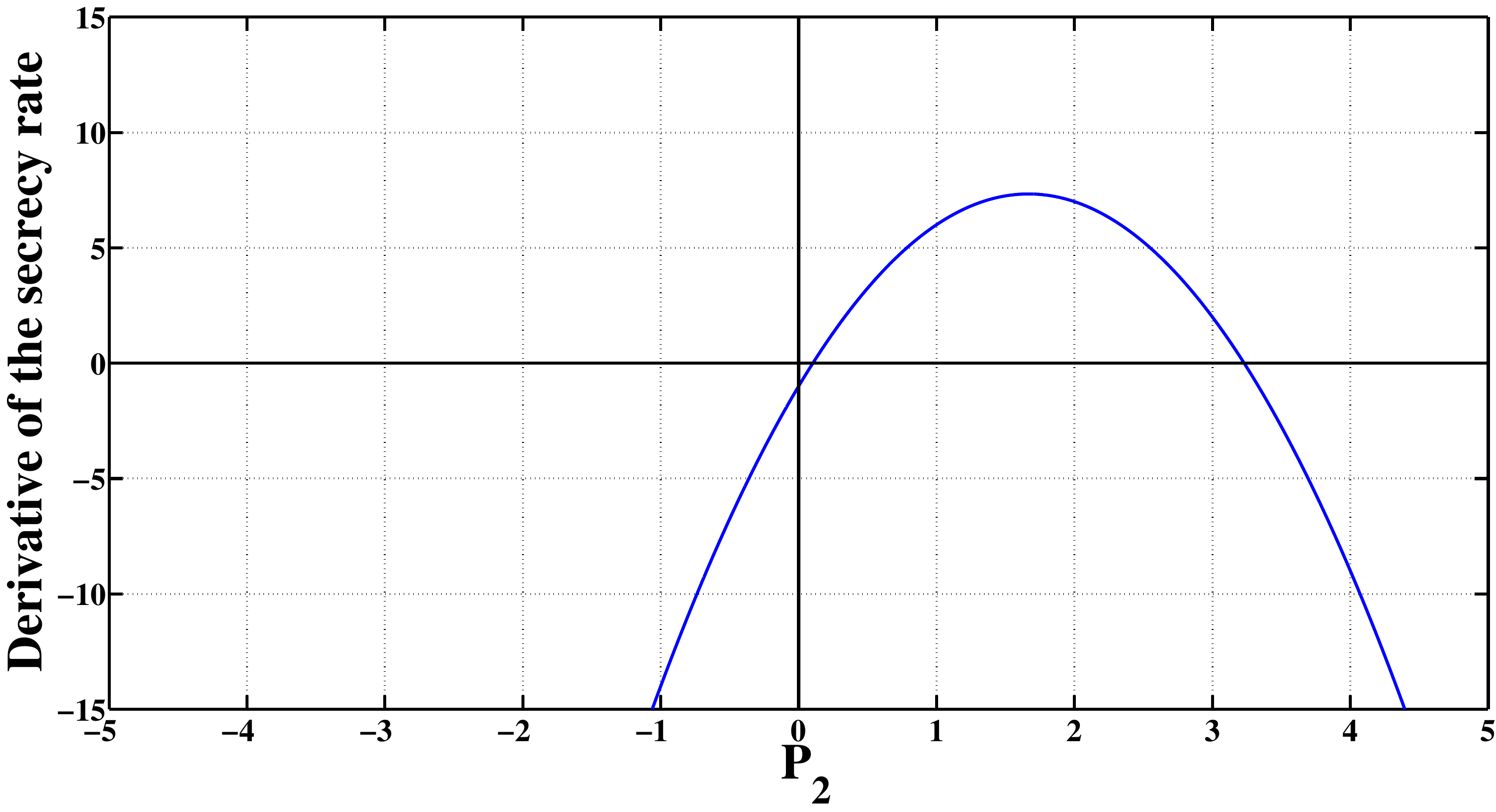}
                \caption{Negative, positive, negative derivative.}
                \label{fig:two Pos root Neg Pos Neg}
        \end{subfigure}
        \begin{subfigure}[b]{0.2\textwidth}
                \includegraphics[width=\textwidth]{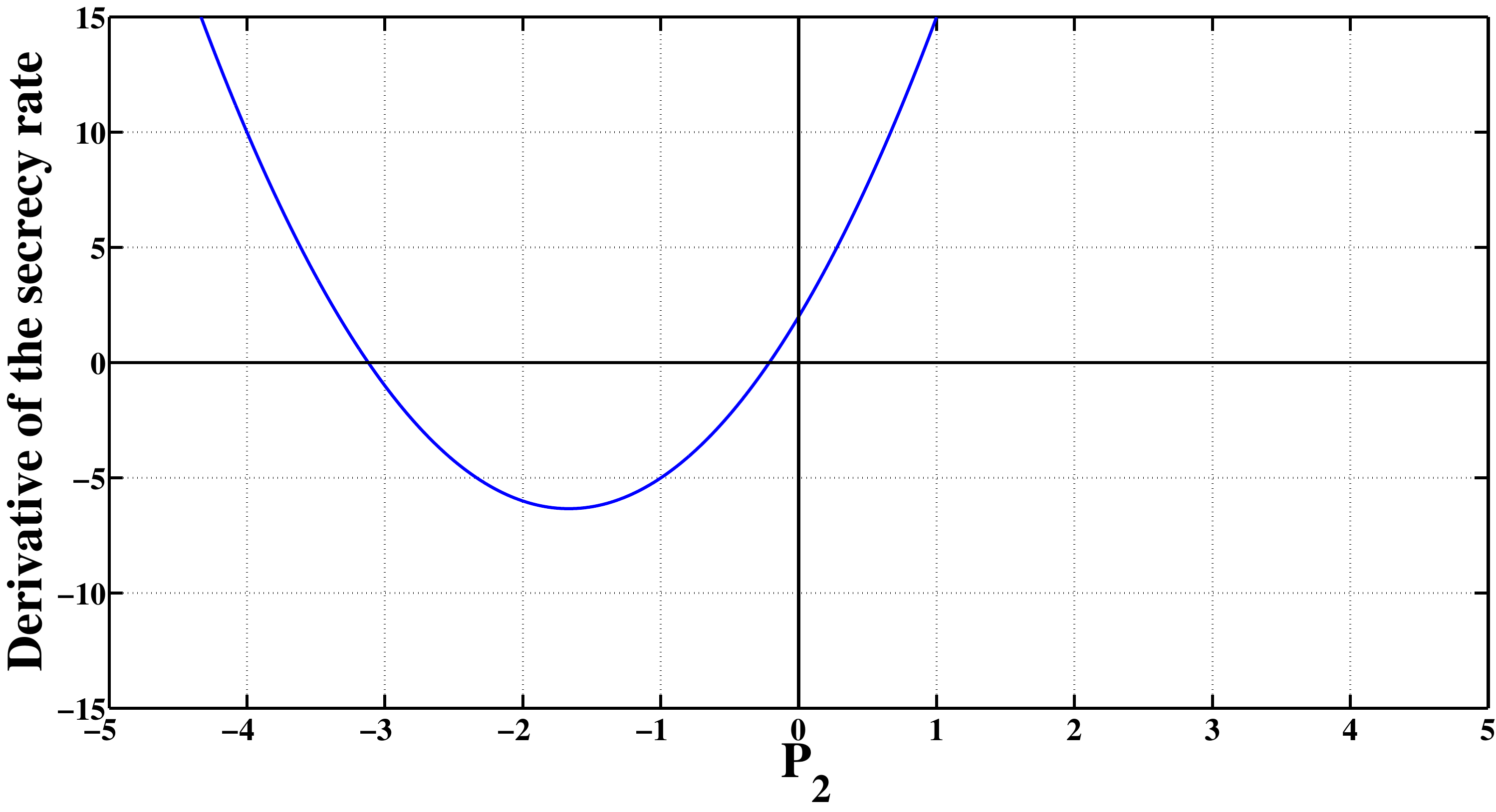}
                \caption{Negative, positive, negative derivative.}
                \label{fig:no Pos root Pos Neg Pos}
        \end{subfigure}
        \begin{subfigure}[b]{0.2\textwidth}
                \includegraphics[width=\textwidth]{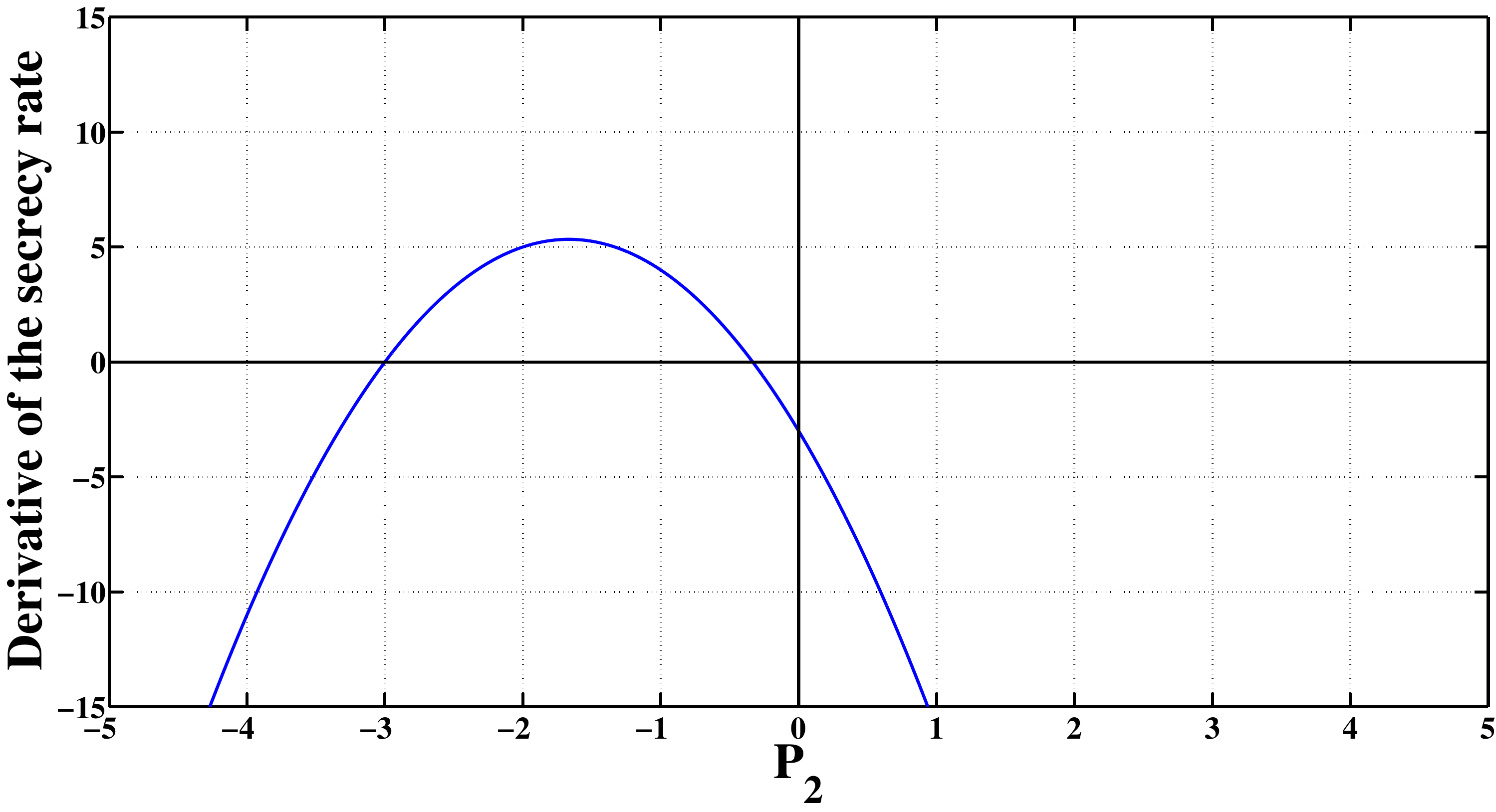}
                \caption{Negative, positive, negative derivative.}
                \label{fig:no Pos root Neg Pos Neg}
        \end{subfigure}
        \caption{Different cases for the sign of the derivative in~\eqref{eqn:OF Der 1}.}\label{fig:Der sign}
\end{figure*}
%%%%%%%%%%%%%%%%%%%%%%%%%%%%%%%%%%%%%%%%%%%%%%%%%%%%%%%%%%%%%%%%%%%%%%%%%%%%%%%%%%%%%%%%%%%%%%%%%%%
%%%%%%%%%%%%%%%%%%%%%%%%%%%%%%%%%%%%%%%%%%%%%%%%%%%%%%%%%%%%%%%%%%%%%%%%%%%%%%%%%%%%%%%%%%%%%%%%%%%
\begin{enumerate}
  \item If $\left( {b - d} \right)\left[ { - d\left( {a + \sigma _n^2} \right) + b\left( {c + \sigma _n^2} \right)} \right]<0$, $\Delta<0$.
  \item If $\left( {b - d} \right)\left[ { - d\left( {a + \sigma _n^2} \right) + b\left( {c + \sigma _n^2} \right)} \right]>0$, $\Delta>0$.
\end{enumerate}
%%%%%%%%%%%%%%%%%%%%%%%%%%%%%%%%%%%%%%%%%%%%%%%%%%%%%%%%%%%%%%%%%%%%%%%%%%%%%%%%%%%%%%%%%%%%%%%%%%%
Using the sign of $\Delta$ as well as the sign of the $P_2$'s coefficients in the quadratic equation which we denote them from
highest order to constant as $a'$, $b'$ and $c'$ in~\eqref{eqn:OF Der 1}, the sign of the derivative can be defined and
consequently the optimal value for $P_2$, $P_2^ \star$, can be found as follows:
\begin{enumerate}

  \item If $\Delta<0$, {\rc no root exists} for the numerator in~\eqref{eqn:OF Der 1} leading to the following cases:  \label{neg delta}

  \begin{enumerate}
    \item $a'>0$, then the derivative is strictly positive, as shown in Fig.~\ref{fig:St Pos}, and is monotonically increasing, so the highest value in the feasible set is the $P_2^ \star$.
    \item $a'<0$, then the derivative is strictly negative, as shown in Fig.~\ref{fig:St Neg}, and is monotonically decreasing, so the lowest possible value in the feasible set is the $P_2^ \star$.
  \end{enumerate}

  \item If $\Delta>0$, there exist two roots (two critical points for the objective function in~\eqref{eqn:SC Opt 4 Ins}) for the derivative leading to the following cases:

  \begin{enumerate}

  \item Only one of the roots is positive. This happens when the product of the roots~\cite[Section~5.1]{spiegel2008schaum}, $\frac{c'}{a'}$, is negative in the following cases: \label{one pos root}

       \begin{enumerate}

       \item $a'>0$ and $c'<0$, as shown in Fig.~\ref{fig:One Pos root Neg Pos}. In this case, the critical point is a minimum, so one of the vertices of the feasible
       domain is the $P_2^ \star$.

       \item $a'<0$ and $c'>0$, as shown in Fig.~\ref{fig:One Pos root Pos Neg}. For this case, the critical point is a maximum and if falls into the feasibility domain of $P_2$, it is the $P_2^ \star$. Otherwise, one of the vertices of
       the feasible domain is the $P_2^ \star$.

       \end{enumerate}

  \item Both of the roots are positive. This happens when both the product, $\frac{c'}{a'}$, and the sum~\cite[Section~5.1]{spiegel2008schaum}, $-\frac{b'}{a'}$, of
  the roots are positive in two following conditions: \label{both pos root}

      \begin{enumerate}

      \item $a'>0$, $c'>0$ and $b'<0$, as shown in Fig.~\ref{fig:two Pos root Pos Neg Pos}. For the first case, the derivative is first positive, then negative and then positive, respectively, meaning that the
      first root results in a maximum and the second root results in a minimum. If the smaller root falls in the feasibility domain of $P_2$, then by comparing it
      {\rc with} the vertices of the feasibility domain, {\rc $P_2^\star$ is found}. If the smaller root is not in the feasibility domain of $P_2$, the optimal value
      of $P_2$ is at one of the vertices of the feasibility domain.

      \item $a'<0$, $c'<0$ and $b'>0$, as shown in Fig.~\ref{fig:two Pos root Neg Pos Neg}. In this case, we find out that the larger root is a maximum. If the larger root falls in the feasibility
      domain of $P_2$, then by comparing it to the vertices of the feasibility domain of $P_2$, we can find the $P_2^\star$. If the larger root is not in the feasibility domain
      of $P_2$, we should find the optimal value of $P_2$ in the vertices of the feasibility domain.

      \end{enumerate}

  \item Both of the roots are negative. This happens when the product of the roots, $\frac{c'}{a'}$, is positive and the sum of the roots, $-\frac{b'}{a'}$, is negative in two
  following conditions: \label{two neg roots}

      \begin{enumerate}

      \item $a'>0$, $c'>0$ and $b'>0$, as shown in Fig.~\ref{fig:no Pos root Pos Neg Pos}. Since the transmission power is always positive, the critical points cannot be the answer to $P_2^\star$. For the first case, the derivative is first positive, then negative and then positive, respectively. As a result, the secrecy rate will be increasing after $P_2>0$. So, $P_2^ \star$ is the maximum possible value of $P_2$ inside the feasibility set. \label{11}

      \item $a'<0$, $c'<0$ and $b'<0$, as shown in Fig.~\ref{fig:no Pos root Neg Pos Neg}. As in Case~\ref{11}, the critical points cannot be the answer to $P_2^{\star}$. For the first case, the derivative is first negative, then positive and then negative, respectively. So, the secrecy rate is decreasing after $P_2>0$. Hence, $P_2^{\star}$ is the minimum possible value of $P_2$ inside the feasibility set.

      \end{enumerate}

  \end{enumerate}

\end{enumerate}
%%%%%%%%%%%%%%%%%%%%%%%%%%%%%%%%%%%%%%%%%%%%%%%%%%%%%%%%%%%%%%%%%%%%%%%%%%%%%%%%%%%%%%%%%%%%%%%%%%%%
{\rc   In deriving the above closed-form optimal solutions, we have considered all the possible cases of discriminant sign, $\Delta$, and the coefficients of the quadratic equation, $a'$, $b'$, and $c'$. In each case, we have calculated all the critical points and if applicable, these critical points are compared with the vertices of the domain to make sure that the derived power value is globally optimum. Hence, the optimal solutions presented in Appendix~\ref{Appen thm 1 ins} are global optimum. }
%%%%%%%%%%%%%%%%%%%%%%%%%%%%%%%%%%%%%%%%%%%%%%%%%%%%%%%%%%%%%%%%%%%%%%%%%%%%%%%%%%%%%%%%%%%%%%%%%%%
% Generated by IEEEtran.bst, version: 1.13 (2008/09/30)

%%%%%%%%%%%%%%%%%%%%%%%%%%%%%%%%%%%%%%%%%%%%%%%%%%%%%%%%%%%%%%%%%%%%%%%%%%%%%%%%%%%%%%%%%%%%%%%%%%
\begin{IEEEbiography}
    [{\includegraphics[width=1in,height=1.25in,clip,keepaspectratio]{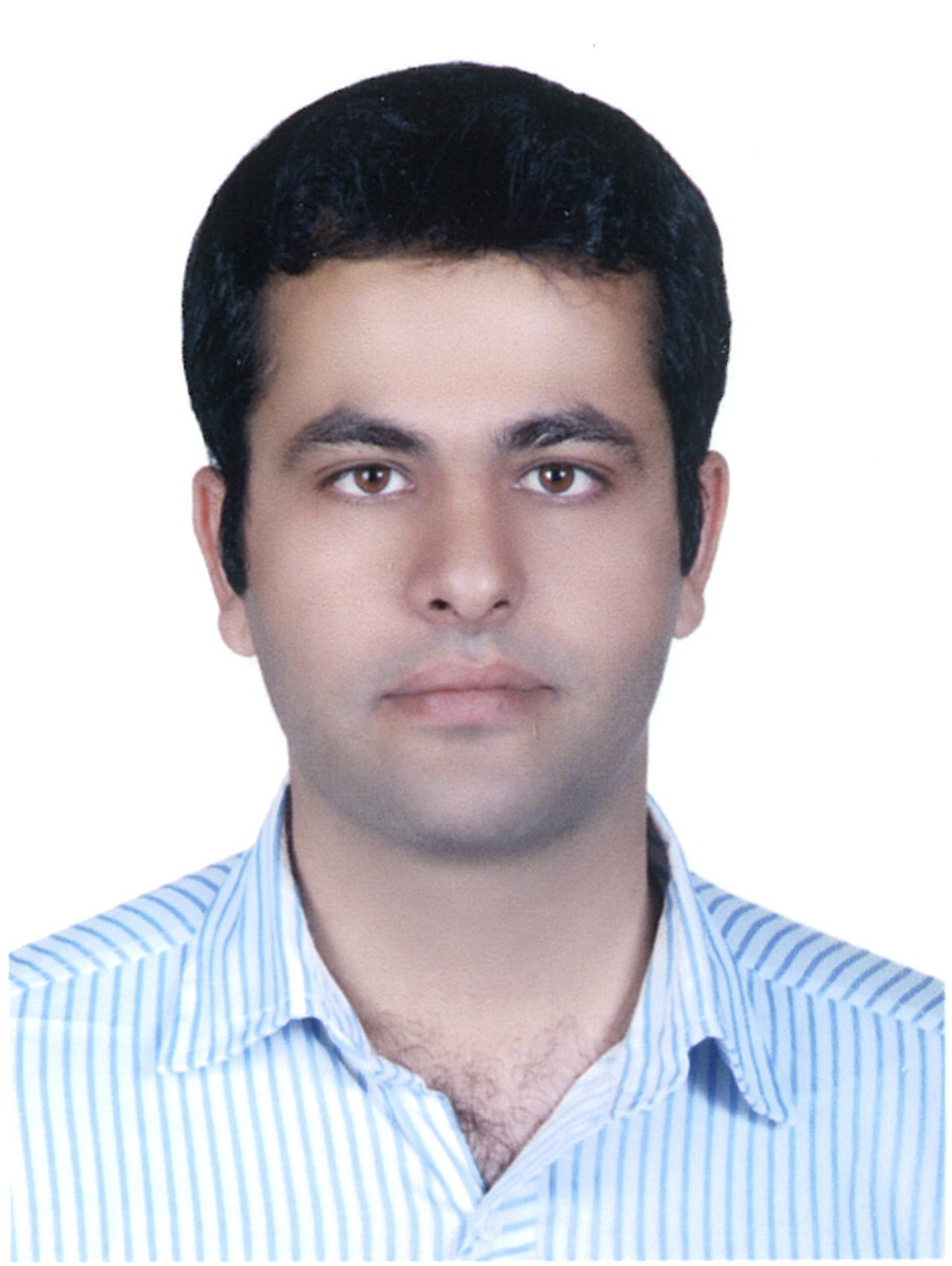}}]{Ashkan Kalantari}
Ashkan Kalantari (AK) was born in Iran.  He received his BSc and MSc degrees from K. N. Toosi 
University of Technology, Tehran, Iran in 2009 and 2012, respectively. He is currently working toward the 
Ph.D. degree with the research group SIGCOM in the Interdisciplinary Centre for Security, Reliability and 
Trust (SnT), University of Luxembourg. His research interest is physical layer security in wireless and satellite communications.
\end{IEEEbiography}
%%%%%%%%%%%%%%%%%%%%%%%%%%%%%%%%%%%%%%%%%%%%%%%%%%%%%%%%%%%%%%%%%%%%%%%%%%%%%%%%%%%%%%%%%%%%%%%%%%
 %\vspace{-5 mm} 
%%%%%%%%%%%%%%%%%%%%%%%%%%%%%%%%%%%%%%%%%%%%%%%%%%%%%%%%%%%%%%%%%%%%%%%%%%%%%%%%%%%%%%%%%%%%%%%%%%
\begin{IEEEbiography}
    [{\includegraphics[width=1in,height=1.25in,clip,keepaspectratio]{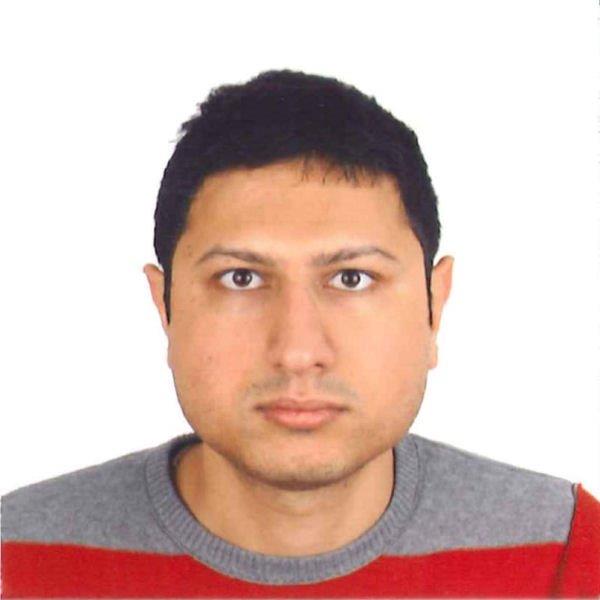}}]{Sina Maleki} 
		received his BSc degree from Iran University of Science and Technology, Tehran, Iran in 2006, and MSc and PhD degrees from Delft University of Technology, Delft, The Netherlands, in 2009 and 2013, respectively. From July 2008 to April 2009, he was an intern student at the Philips Research Center, Eindhoven, The Netherlands, working on spectrum sensing for cognitive radio networks. Since August 2013, he has been working at the Interdisciplinary Centre for Security, Reliability and Trust, University of Luxembourg, where he is working on cognitive radio for satellite communications within the EU FP7 CoRaSat project, and EU H2020 SANSA, as well as Luxembourgish national projects SATSENT, and SeMIGod.
\end{IEEEbiography}
 %\vspace{-5 mm} 		
%%%%%%%%%%%%%%%%%%%%%%%%%%%%%%%%%%%%%%%%%%%%%%%%%%%%%%%%%%%%%%%%%%%%%%%%%%%%%%%%%%%%%%%%%%%%%%%%%%
\begin{IEEEbiography}
    [{\includegraphics[width=1in,height=1.25in,clip,keepaspectratio]{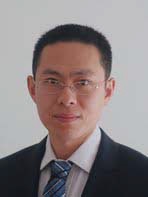}}]{Gan Zheng} 
		(S'05-M'09-SM'12) is currently a Lecturer in School of Computer Science and Electronic Engineering, University of Essex, UK. 
He received the B. E. and the M. E. from Tianjin University, Tianjin, China, in 2002 and 2004,
respectively, both in Electronic and Information Engineering,and the PhD 
degree in Electrical and Electronic Engineering from The University 
of Hong Kong, Hong Kong, in 2008. Before he joined University of Essex, he worked as a Research 
Associate at University College London, UK, and University of Luxembourg,
Luxembourg. His research interests include cooperative communications, cognitive radio, physical-layer security, full-duplex radio and energy harvesting.  He is the 
first recipient for the 2013 IEEE Signal Processing Letters Best Paper Award.
\end{IEEEbiography}
 %\vspace{-5 mm} 
%%%%%%%%%%%%%%%%%%%%%%%%%%%%%%%%%%%%%%%%%%%%%%%%%%%%%%%%%%%%%%%%%%%%%%%%%%%%%%%%%%%%%%%%%%%%%%%%%%
\begin{IEEEbiography}
    [{\includegraphics[width=1in,height=1.25in,clip,keepaspectratio]{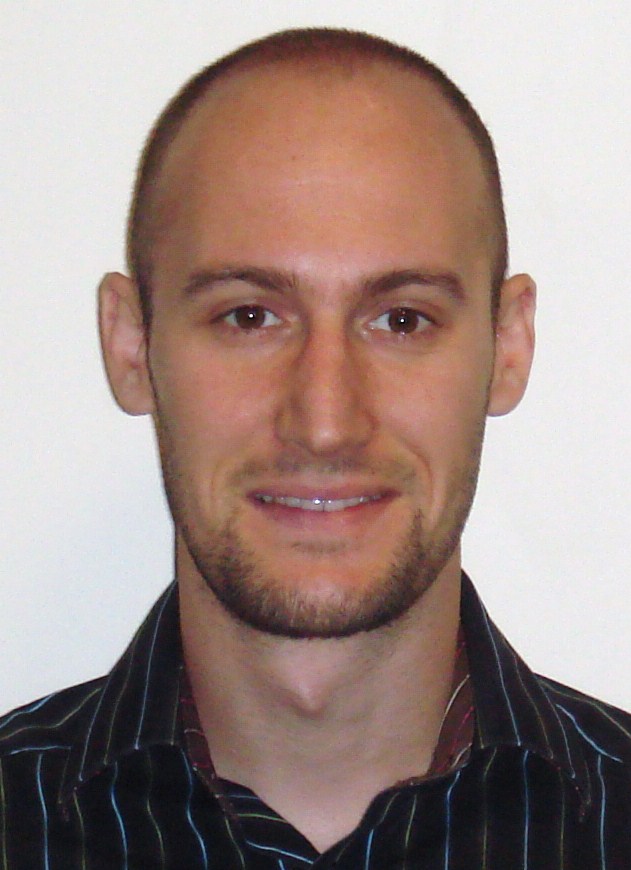}}]{Symeon Chatzinotas}
 (MEng, MSc, PhD, SMIEEE) received the M.Eng. in Telecommunications from Aristotle University of Thessaloniki, Greece and the M.Sc. and Ph.D. in Electronic Engineering from University of Surrey, UK in 2003, 2006 and 2009 respectively. He is currently a Research Scientist with the research group SIGCOM in the Interdisciplinary Centre for Security, Reliability and Trust, University of Luxembourg, managing H2020, ESA and FNR projects. In the past, he has worked in numerous R\&D projects for the Institute of Informatics \& Telecommunications, National Center for Scientific Research “Demokritos,” the Institute of Telematics and Informatics, Center of Research and Technology Hellas and Mobile Communications Research Group, Center of Communication Systems Research, University of Surrey. He has authored more than 120 technical papers in refereed international journals, conferences and scientific books. His research interests are on multiuser information theory, cooperative/cognitive communications and wireless networks optimization. Dr Chatzinotas is the co-recipient of the 2014 Distinguished Contributions to Satellite Communications Award, Satellite and Space Communications Technical Committee, IEEE Communications Society. He is currently co-editing a book on "Cooperative and Cognitive Satellite Systems" to appear in 2015 by Elsevier and he is co-organizing the First International Workshop on “Cognitive Radios and Networks for Spectrum Coexistence of Satellite and Terrestrial Systems” (CogRaN-Sat) in conjunction with the IEEE ICC 2015, 8-12 June 2015, London, UK.
\end{IEEEbiography}
% \vspace{-5 mm} 
%%%%%%%%%%%%%%%%%%%%%%%%%%%%%%%%%%%%%%%%%%%%%%%%%%%%%%%%%%%%%%%%%%%%%%%%%%%%%%%%%%%%%%%%%%%%%%%%%%
\begin{IEEEbiography}
    [{\includegraphics[width=1in,height=1.25in,clip,keepaspectratio]{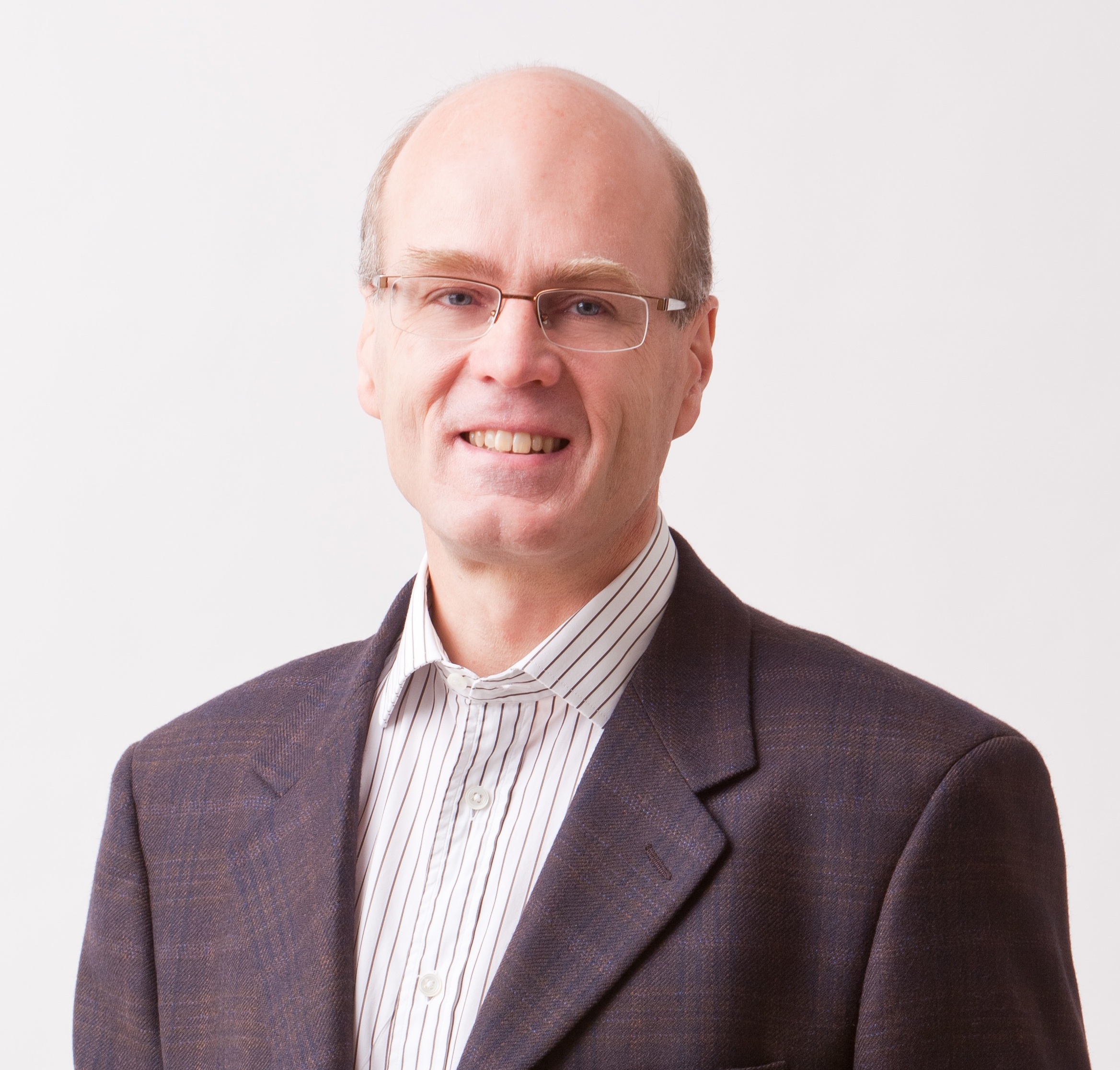}}]{Bj\"{o}rn Ottersten}
was born in Stockholm, Sweden, 1961. He received the M.S. degree in electrical engineering and applied physics from Link\"{o}ping University, Link\"{o}ping, Sweden, in 1986. In 1989 he received the Ph.D. degree in electrical engineering from Stanford University, Stanford, CA. Dr. Ottersten has held research positions at the Department of Electrical Engineering, Link\"{o}ping University, the Information Systems Laboratory, Stanford University, the Katholieke Universiteit Leuven, Leuven, and the University of Luxembourg. During 96/97 Dr. Ottersten was Director of Research at ArrayComm Inc, a start-up in San Jose, California based on Ottersten’s patented technology. He has co-authored journal papers that received the IEEE Signal Processing Society Best Paper Award in 1993, 2001, 2006, and 2013 and 3 IEEE conference papers receiving Best Paper Awards. In 1991 he was appointed Professor of Signal Processing at the Royal Institute of Technology (KTH), Stockholm. From 1992 to 2004 he was head of the department for Signals, Sensors, and Systems at KTH and from 2004 to 2008 he was dean of the School of Electrical Engineering at KTH. Currently, Dr. Ottersten is Director for the Interdisciplinary Centre for Security, Reliability and Trust at the University of Luxembourg. Dr. Ottersten is a board member of the Swedish Research Council and as Digital Champion of Luxembourg, he acts as an adviser to the European Commission. Dr. Ottersten has served as Associate Editor for the IEEE Transactions on Signal Processing and on the editorial board of IEEE Signal Processing Magazine. He is currently editor in chief of EURASIP Signal Processing Journal and a member of the editorial boards of EURASIP Journal of Applied Signal Processing and Foundations and Trends in Signal Processing. Dr. Ottersten is a Fellow of the IEEE and EURASIP and a member of the IEEE Signal Processing Society Board of Governors. In 2011 he received the IEEE Signal Processing Society Technical Achievement Award. He is a first recipient of the European Research Council advanced research grant. His research interests include security and trust, reliable wireless communications, and statistical signal processing.
\end{IEEEbiography}
%%%%%%%%%%%%%%%%%%%%%%%%%%%%%%%%%%%%%%%%%%%%%%%%%%%%%%%%%%%%%%%%%%%%%%%%%%%%%%%%%%%%%%%%%%%%%%%%%%%
\end{document}